\tikzstyle{internal} = [thin,rectangle, rounded corners, minimum width=1cm, minimum height=1cm,text centered, draw=black]
\tikzstyle{arrow} = [thick,->,>=stealth]
\theoremstyle{plain}
\newtheorem{theorem}{Theorem}[section]
\newtheorem{lemma}[theorem]{Lemma}
\newtheorem{remark}[theorem]{Remark}
\theoremstyle{definition}
\newtheorem{definition}[theorem]{Definition}
\newcommand{\eqdef}{:=}
\newcommand{\scf}{\mathit{scf}}
\newcommand{\Path}{\mathit{Path}}
\title{Gibbard-Satterthwaite Success Stories\texorpdfstring{\\}{ }and Obvious Strategyproofness}
\author{Sophie Bade \and Yannai A.\ Gonczarowski \thanks{First online draft: October 2016.
Bade: Department of Economics, Royal Holloway University of London; and Max Planck Institut for Research on Collective Goods, Bonn, \emph{e-mail}: \href{mailto:sophie.bade@rhul.ac.uk}{sophie.bade@rhul.ac.uk}.
Gonczarowski: Einstein Institute of Mathematics, Rachel \& Selim Benin School of Computer Science \& Engineering, and Federmann Center for the Study of Rationality, The Hebrew University of Jerusalem; and Microsoft Research, \emph{e-mail}: \href{mailto:yannai@gonch.name}{yannai@gonch.name}.
We thank Sergiu Hart, Shengwu Li, Jordi Mass\'{o}, Ahuva Mu'alem, Noam Nisan, Marek Pycia, and Peter Troyan for their comments.
This collaboration is supported by the ARCHES Prize.
Yannai Gonczarowski is supported by the Adams Fellowship Program of the Israel Academy of Sciences and Humanities; his work is supported by ISF grants 230/10 and 1435/14 administered by the Israeli Academy of Sciences, and by Israel-USA Bi-national Science Foundation (BSF) grant 2014389.}}
\date{March 18, 2017}
\begin{document}

\maketitle

\begin{abstract}
The Gibbard-Satterthwaite Impossibility Theorem \citep{Gibbard1973,Satterthwaite1975} holds that dictatorship is the only Pareto optimal and strategyproof social choice function on the full domain of preferences. Much of the work in mechanism design aims at getting around this impossibility theorem. Three grand success stories stand out.
On the domains of single peaked preferences, of house matching, and of quasilinear preferences, there are appealing Pareto optimal and strategyproof social choice functions.
We investigate whether these success stories are robust to strengthening strategyproofness to obvious strategyproofness, recently introduced by \citet{Li2015}. A social choice function is obviously strategyproof (OSP) implementable if even cognitively limited agents can recognize their strategies as weakly dominant.

For single peaked preferences, we characterize the class of OSP-implementable and unanimous social choice functions as \emph{dictatorships with
safeguards against extremism} --- mechanisms (which turn out to also be Pareto optimal) in which the dictator can choose the outcome, but other
agents may prevent the dictator from choosing an outcome that is too extreme. Median voting is consequently not OSP-implementable. Indeed, the only OSP-implementable quantile rules choose either the minimal or the maximal ideal point.
For house matching, we characterize the class of
OSP-implementable and Pareto optimal matching rules as \emph{sequential barter with lurkers} --- a significant
generalization over bossy variants of bipolar serially dictatorial rules. While \citet{Li2015} shows that second-price auctions are OSP-implementable when only one good is sold, we show that this positive result does not extend to the case of multiple goods. Even when all agents' preferences over goods are quasilinear and additive, no welfare-maximizing auction where losers pay nothing is OSP-implementable when more than one good is sold. Our analysis makes use of a gradual revelation principle, an analog of the (direct)
revelation principle for OSP mechanisms that we present and prove.
\end{abstract}

\section{Introduction}

The concern with incentives sets mechanism design apart from algorithm and protocol design. A mechanism that directly elicits preferences is strategyproof if no agent ever has any incentive to misreport her preferences. Strategyproofness may, however, not be enough to get the participants in a mechanism to report their true preferences.
Indeed, the participants must \emph{understand} that it is in their best interest to reveal their true preferences --- they must understand that the mechanism is strategyproof.
Depending on whether it is more or less easy to grasp the strategic properties of a mechanism, different extensive forms that implement the same strategyproof mechanism may yield different results in practice: while the participants may in some case easily understand that no lie about their preferences can possibly benefit them, they may not be able to see this in a different extensive form that implements the same social choice function.

Think of a second-price auction, for example. We can on the one hand solicit sealed bids, and award the auctioned good to the highest bidder, charging her the second-highest bid. Alternatively, we may use a clock that continuously increases the price of the good. In this case, agents choose when to drop out, and once only one last agent remains, this agent obtains the good and pays the current clock price. Assuming that the bidders' values are independent, both mechanisms implement the same --- strategyproof --- social choice function: submitting one's true value in the sealed-bid auction, and equivalently, dropping out of the ascending clock auction when one's own value is reached, are weakly dominant strategies in these two mechanisms.
However, it is well documented that agents approach these two mechanisms differently. It appears~\citep{KHL1987} that the strategyproofness of the implementation using an ascending clock is easier to understand than the strategyproofness of the sealed-bid implementation. Recently, \citet{Li2015} proposed the concept of \textbf{obvious strategyproofness}, which captures this behavioral difference.

Unlike classic strategyproofness, which is a property of the social choice function in question, \citeauthor{Li2015}'s obvious strategyproofness is a property of the mechanism implementing this social choice function.
To check whether a strategy is \textbf{obviously dominant} for a given player, one must consider each of the histories at which this player can get to choose in case she follows the given strategy. Fixing any such history, the player compares the worst-possible outcome starting from this history given that she follows this strategy, with the best-possible outcome from this history given that she deviates at the history under consideration. To evaluate these best- and worst-possible outcomes, the player considers all possible choices of all other players in the histories following on the current history.  If at each such history, the worst-possible outcome associated with following the strategy is no worse than the best-possible outcome associated with a deviation, then the strategy is said to be obviously dominant. If each player has an obviously dominant strategy, then the mechanism is obviously strategyproof (OSP), and the social choice function that it implements is \textbf{OSP-implementable}. OSP-implementability is a stricter condition than strategyproofness. \citet{Li2015} shows that even cognitively limited agents may recognize an obviously strategyproof mechanism as strategyproof. 

\citet{Li2015} shows that the implementation of second-price auction via an ascending clock is obviously strategyproof, while the implementation via sealed bids is not. To see this, consider a bidder with value $4$. If she submits a sealed bid of $4$, then the worst-case utility she may obtain is~$0$ (if her bid is not the highest). If she instead were to bid $6$ --- and all other bidders bid $0$ --- then she would obtain a utility of $4$. So, bidding her true value is not obviously dominant. In contrast, when the clock in an ascending implementation stands at $3$, this same agent compares the worst utility associated with dropping out with the best utility associated with staying in. Since both equal $0$, staying in is an obviously dominant choice at this history (as well as at any other history where the clock stands at less than $4$).

While \citet{Li2015} makes a strong case that obviously strategyproof mechanisms outperform mechanisms that are only strategyproof, he leaves open the question of which social choice functions are OSP-implementable.
The current paper examines this question through the lens of the popular desideratum of Pareto optimality. That is, this paper asks which Pareto optimal social choice functions are OSP-implementable.

When agents may hold any preference over a set of at least three outcomes, then any
strategyproof and Pareto optimal social choice function is --- by the Gibbard-Satterthwaite Impossibility Theorem \citep{Gibbard1973,Satterthwaite1975} --- dictatorial. So, to find Pareto optimal, OSP-implementable and nondictatorial social choice functions, we must investigate social choice functions for domains that are not covered by the Gibbard-Satterthwaite theorem. We accordingly conduct our analysis in the three most popular domains that provide ``escape routes'' from the Gibbard and Satterthwaite impossibility theorem: the domain of single peaked preferences, the quasilinear domain, and the house matching domain. On each of these three domains, there are some well-studied strategyproof, Pareto optimal, and nondictatorial social choice functions.

In each of these three domains, we fully characterize the class of OSP-implementable and Pareto optimal social choice function (for the quasilinear domain, as is customary, we also require that losers pay nothing).
On the one hand, our findings suggest that obvious strategyproofness is a highly restrictive concept.
Indeed, apart from two special cases of ``popular'' mechanisms that were already known to be OSP-implementable --- the auction of a single good~\citep{Li2015}, and trade with no more than two traders at any given round~\citep{AG2015,Troyan2016} --- our analysis of all three domains finds only one more such special case of ``popular'' mechanisms: choosing the maximum or minimum ideal point when all agents have single peaked preferences. On the other hand, our complete characterizations show that outside of these, a few rather exotic and quite intricate mechanisms are also obviously strategyproof.

The investigation of each of these three domains builds on a revelation principle that we state and prove for obviously strategyproof mechanisms. This revelation principle shows that a social choice function is OSP-implementable if and only if it can be implemented by an \textbf{obviously incentive compatible gradual revelation mechanism}. A mechanism is a gradual revelation mechanism if each choice of an agent is identified with a partition of the set of all the agent's preferences that are consistent with the choices made by the agent so far. In a truthtelling strategy, the agent gradually reveals her preference: at each juncture, she chooses a smaller set of preferences that her own preference belongs to. With her last choice, the agent fully reveals her preference. Furthermore, in a gradual revelation mechanism, whenever an agent can fully disclose her preference without hurting obvious strategyproofness, she does so. A gradual revelation mechanism does, moreover, not allow for simultaneous moves, for directly consecutive choices by the same agent, or for choice sets with a single action.

In the domain of single peaked preferences, we find that a mechanism is Pareto optimal and obviously strategyproof if and only if it is a \textbf{dictatorship with safeguards against extremism}. In such a mechanism, there is one ``dictator'' who may freely choose any option from a central set. If she would rather choose an option to the right or left of that central set, then she needs to win the approval of some other agents. The set of agents whose approval is needed for right-wing positions increases as these positions move farther to the right. The same holds for left-wing positions. Finally, if the electorate has already identified that one of two adjacent options will be chosen, then a process of arbitration between these two options may ensue.
Dictatorships with safeguards against extremism embed dictatorships: in the case of a dictatorship, the central set from which the dictator may freely choose is the grand set of all options. Dictatorships with safeguards against extremism also embed social choice functions that choose the minimal (and respectively maximal) ideal point of all agents. However, median voting is not OSP-implementable. To see this, suppose it were, and consider the first agent to make any decision in an obviously strategyproof mechanism that implements median voting. For any deviation from the truthful revelation of her ideal policy, the best case for the agent is that all other agents were to announce her own ideal policy as theirs, and this policy would get chosen.
Conversely, if the agent follows the truthtelling strategy, then the worst case for the agent is that all other agents declare the policy that this agent considers worst as their ideal policy.

For the quasilinear domain with multiple goods, we find that any Pareto optimal (or equivalently, welfare maximizing) mechanism in which losers pay nothing (such as VCG with the Clarke pivot rule~ \citep{Vickrey1961,Clarke1971,Groves1973}) is not obviously strategyproof. To make our case strongest, we show that this holds even if there are only two goods and all agents' utilities are additive. This implies that \citeauthor{Li2015}'s \citeyear{Li2015} result that a second-price auction is OSP-implementable does not extend beyond one good. To get some intuition into the restrictiveness of obvious strategyproofness in the setting of auctions, consider two sequential ascending auctions: the first for a bottle of wine and the second for a violin. Assume that apart from a single agent who participates in both auctions, all other participants participate in only one of the auctions, and furthermore, those who participate only in the second auction (for the violin) have knowledge of neither the bids nor the outcome of the wine auction. Assume that the utility of the single agent who participates in both auctions is additive, so this agent values the bundle consisting of both the bottle and the violin at the sum of her values for the bottle alone and for the violin alone. We emphasize that in the wine auction, this agent considers the other agents' behavior in all later histories, including the histories of the violin auction. Observe that if this agent values the bottle at~$4$ then, in contrast with the setting of a single ascending auction, she may not find it obviously dominant to continue bidding at $3$. Indeed, if she drops out at $3$ and if all agents behave in the most favorable way in all later histories, then she gets the violin for free; otherwise, she may be outbid for the violin. So, if she values the violin at $v>1$, then staying in the wine auction at $3$ is not obviously dominant.

For the house matching domain, we find that a mechanism is Pareto optimal and obviously strategyproof if and only if it can be represented as \textbf{sequential barter with lurkers}. Sequential barter is a trading mechanism with many rounds. At each such round, there are at most two owners. Each not-yet-matched house sequentially becomes owned by one of them. Each of the owners may decide to leave with a house that she owns, or they may both agree to swap. If an owner does not get matched in the current round, she owns at least the same houses in the next round. When a \textbf{lurker} appears, she may ultimately get matched to any one house in some set $S$. A lurker is similar to a dictator in the sense that she may immediately appropriate \emph{all but one special house} in the set $S$. If she favors this special house the most, she may ``lurk'' it, in which case she is no longer considered an owner (so there are at most two owners, and additionally any number of lurkers, each for a different house). If no agent who is  entitled to get matched with this special house chooses to do so, then the lurker obtains it as a residual claimant. Otherwise, the lurker gets the second-best house in this set $S$.
The definition of sequential barter with lurkers reveals that the various mechanics that come into play within obviously strategyproof mechanisms are considerably richer and more diverse than previously demonstrated.

The paper is organized as follows. Section~\ref{sec: definitions} provides the model and definitions, including the definition of obvious strategyproofness.
Section~\ref{sec: revelation} presents the gradual revelation principle. Section~\ref{sec: voting} studies voting with two possible outcomes. Sections~\ref{sec: single peaked}, \ref{sec: quasilinear}, and~\ref{sec: house-matching} respectively study single peaked preferences, quasilinear preferences, and house matching. We conclude in Section~\ref{sec: conclusion}. Proofs and some auxiliary results are relegated to the appendix.

\section{Model and Definitions}\label{sec: definitions}

\subsection{The Design Problem}

There is a finite set of agents $N\eqdef\{1,\ldots, n\}$ with typical element $i\in N$ and a set of outcomes~$Y$. Agent $i$'s preference $R_i$
is
drawn from a set of possible preference $\mathcal{R}_i$. Each possible preference $R_i$ is a complete and transitive order on $Y$, where $x P_i
y$
denotes the case that $xR_iy$ but not $y R_i x$ holds. If $xR_iy$ and $yR_ix$, then $x$ and $y$ are $R_i$-indifferent.
The domain of all agents' preferences is $\mathcal{R}\eqdef\mathcal{R}_1\times\cdots\times\mathcal{R}_n$. If two alternatives $x$ and $y$ are
$R_i$-indifferent for each $R_i\in \mathcal{R}_i$, then
$x$ and $y$ are \textbf{completely $\mathbfit{i}$-indifferent}.
 The set of all outcomes that are completely $i$-indifferent to $y$ is $[y]_i$.

We consider three classes of design problems.
In a political problem with \textbf{single peaked preferences}, we represent the
set of social choices as a the set of integers~$\mathbb{Z}$. For every $i\in N$ and any $R_i\in \mathcal{R}_i$, there exists an
\textbf{ideal point} $y^*\in Y$ such that $y'< y\leq y^*$ or $y^*\geq y > y'$ implies   $yP_i y'$ for all $y,y'\in Y$.
In the case of \textbf{quasilinear preferences}, the outcome space is $Y\eqdef X\times M$, where $X$ is a set of allocations and $M$ is a set of
monetary payments with $m_i$ representing the payment charged from agent $i$. Each agent's preference is represented by a utility function
$U_i(x,m)= u_i(x)+m_i$, where $u_i$ is a utility function on $X$.
In a \textbf{house matching
problem}, the outcome space $Y$ is the set of one-to-one matchings between agents and a set $O$ of at least as many\footnote{We show in the appendix that our results extend to the case where some agents may not be matched to any house, i.e., the case of matching with outside options. In this case, there is no restriction on the number of houses.} houses, constructed as follows. An agent-house pair $(i,o)$ is a match, and a
matching is a set of matches where each agent $i\in N$ partakes in precisely one match, and no house $o\in O$ partakes in multiple matches.
Each agent only cares about the house she is matched with. Each agent, moreover, strictly ranks any two different houses.
So any $x$ and $y$ are completely $i$-indifferent if and only if $i$ is matched to the same house under $x$ and under $y$.

A social choice function $\scf:\mathcal{R}\to Y$ maps each profile of preferences $R\in\mathcal{R}$ to an outcome $\scf(R)\in Y$.

\subsection{Mechanisms}

A (deterministic) \textbf{mechanism} is an extensive game form with the set $N$ as the set of players. The set of \textbf{histories} $H$ of this
game form are the set of (finite and infinite\footnote{Unlike \citet{Li2015}, we allow for infinite histories mainly to allow for easier
exposition of our
analysis of the domain of single peaked preferences. See Section~\ref{sec: single peaked} for more details.}) paths from the root of the directed
game form tree. For
a history $h=(a^k)_{k=1,\ldots}$ of length at least $L$, we denote by $h|_L=(a^k)_{k=1,\ldots,L}\in H$ the \textbf{subhistory} of $h$ of length
$L\ge0$. We write $h'\subseteq h$ when $h'$ is a subhistory of $h$.
A history is \textbf{terminal} if it is not a subhistory of any other history. (So a terminal history is either a path to a leaf or
an infinite path.)
The set of all terminal histories is $Z$.

The set of possible actions
after the nonterminal history $h$ is  $A(h)\eqdef\{a\mid (h,a)\in H\}$.
The player function $P$ maps any nonterminal history $h\in H\setminus Z$ to a player $P(h)$ who gets to choose from all actions $A(h)$ at $h$.
Each terminal history $h\in Z$ is mapped to an outcome in~$Y$.

Each player $i$ has an \textbf{information partition} $\mathcal{I}_i$ of the set $P^{-1}(i)$ of all nodes $h$ with $P(h)=i$, with $A(h)=A(h')$
if $h$ and $h'$ belong to the same cell  of $\mathcal{I}_i$. The cell to which $h$  with $P(h)=i$ belongs is $I_i(h)$.\footnote{\citet{Li2015}
also imposes the condition of perfect recall onto information partitions.
Our results hold
with and without prefect recall. For ease of exposition, we therefore do not impose prefect recall.}
A \textbf{behavior}  $B_i$
for player $i$
is an $\mathcal{I}_i$-measurable function mapping each  $h$ with $P(h)=i$ to an action in $A(h)$. A \textbf{behavior profile} $B=(B_i)_{i\in N}$
lists a behavior for each player. The set of behaviors for player $i$ and the set of behavior profiles are respectively denoted $\mathcal{B}_i$
and $\mathcal{B}$.
A behavior profile~$B$ induces a unique terminal history $h^B=(a^k)_{k=1,\ldots}$ s.t.\ $a^{k+1}=B_{P(h^B|_k)}(h^B|_k)$ for every $k$ s.t.\ $h^B|_k$
is
nonterminal.
 The \textbf{mechanism} $M:\mathcal{B}\to Y$  maps the behavior profile $B\in \mathcal{B}$ to the outcome $y\in Y$ that is associated with the
 terminal history $h^B$.
We call the set of all subhistories of the terminal history $h^B$ the \textbf{path} $\Path(B)$.
A \textbf{strategy} $\mathbf{S}_i$ for agent $i$ is a function $\mathbf{S}_i:\mathcal{R}_i\to \mathcal{B}_i$. The \textbf{strategy profile} $\mathbf{S}=(\mathbf{S}_i)_{i\in N}$ induces
the social choice function $\scf:\mathcal{R}\to Y$ if $\scf(R)$ equals $M(\mathbf{S}(R))$ for each $R\in \mathcal{R}$.

In a \textbf{direct revelation mechanism} all agents move simultaneously. Agent $i$ 's behavior space consists of his set of possible preferences
$\mathcal{R}_i$.
A strategy $\mathbf{S}_i$ for $i$ maps each preference $R_i\in \mathcal{R}_i$ to another preference $\mathbf{S}_i(R_i)$. The \textbf{truthtelling strategy} $\mathbf{T}_i$ maps each preference $R_i$ onto itself.

\subsection{Normative Criteria}

A social choice function $\scf$ is \textbf{Pareto optimal} if it maps any $R$ to an outcome $\scf(R)$ that is Pareto optimal at
$R$. An
outcome $y\in Y$ in turn is Pareto optimal at $R$ if there exists no $y'\in Y$ such that $y'R_i y$ holds for all $i$ and
$y'P_{i'}y$ holds for at least one $i'$.

A strategy $\mathcal{S}_i$ in a mechanism $M$ is \textbf{dominant} if $M(\mathcal{S}_i(R_i),B_{-i})R_iM(B)$ holds for all behavior profiles $B$ and all $R_i\in \mathcal{R}_i$. So $\mathcal{S}_i$ is dominant if it prescribes  for each possible preference $R_i\in \mathcal{R}_i$ a behavior $\mathbf{S}_i(R_i)$ such that $i$ prefers the outcome of $M$ given that behavior to the outcome of $M$ given any other behavior $B_i$, no matter which behavior the other agents follow.
A direct revelation mechanism is \textbf{incentive compatible} if truthtelling is a dominant strategy for each player.
The revelation principle states that each social choice function that can be implemented in dominant strategies can be implemented by an
incentive compatible direct revelation mechanism.

A strategy $\mathbf{S}_i$ is \textbf{obviously dominant} \citep{Li2015} for agent $i$  if for every $R_i\in \mathcal{R}_i$, behavior profiles $B$ and $B'$, and histories $h$ and $h'$ with $h\in\Path(\mathcal{S}_i(R_i),B_{-i})$, $h'\in\Path(B')$, $P(h)=P(h')=i$, $\mathcal{I}_i(h)=\mathcal{I}_i(h')$, and $\mathcal{S}_i(R_i)(h)\neq B'_i(h')$ we have
\begin{eqnarray*}
M(\mathcal{S}_i(R_i),B_{-i})R_iM(B').
\end{eqnarray*}
So, the strategy $\mathcal{S}_i$ has to meet a stricter condition to be considered not just strategyproof but also obviously strategyproof:
at each juncture that is possibly reached during the game, agent $i$ considers whether to deviate from the action $\mathcal{S}_i(R_i)(h)$ prescribed by his strategy~$\mathcal{S}_i$ at that juncture to a different action $B_i(h)$. The condition that $\mathcal{S}_i$ has to meet is that even under the worst-case scenario (minimizing over all other agents' behaviors and over $i$'s uncertainty) if agent $i$ follows $\mathcal{S}_i(R_i)(h)$ at that juncture, and under the best-case scenario (maximizing over all other agents' behaviors and over $i$'s uncertainty) if agent $i$ deviates to $B_i(h)\neq \mathcal{S}_i(R_i)(h)$, agent $i$ still prefers not to deviate.

A social choice function $\scf$ is implementable in obviously dominant strategies, or \textbf{OSP-imple\-mentable}, if $\mathbf{S}$ is a profile of obviously dominant strategies in some mechanism $M$
and if $\scf(\cdot)=M(\mathbf{S}(\cdot))$.
In the next section, we show that a modified revelation principle holds for  implementation
in obviously dominant strategies.

\section{A Revelation Principle for\texorpdfstring{\\}{ }Extensive-Form Mechanisms}\label{sec: revelation}

In this section we develop an analogue, for OSP mechanisms, of the celebrated (direct) revelation principle (for strategyproof mechanisms). Our gradual revelation approach is conceptually similar to that of direct revelation: we define gradual revelation mechanisms so that agents gradually reveal more and more about their preferences. We  
then prove that any OSP-implementable social choice function is  implementable by an OSP gradual revelation mechanism. We use this gradual revelation principle throughout this paper.

A \textbf{gradual revelation mechanism} is a mechanism with the following additional properties:\footnote{\label{fn: gradual-revelation-properties}While most of the following properties are novel, \citet{AG2015} already showed that any OSP-implementable social choice function is also implementable by an OSP mechanism with Properties~\ref{prop: singleton-information} and~\ref{prop: labeled-actions}.  For completeness, we spell out the proof that Properties~\ref{prop: singleton-information} and~\ref{prop: labeled-actions} may be assumed without loss of generality.
\citet{PyciaTroyan2016} independently stated a property weaker than our Property~\ref{prop: force-asap-and-reveal}, and showed that it
may be assumed without loss of generality (see the discussion in Section~\ref{sec: house-matching} that relates that paper to ours).
}
\begin{enumerate}
\item\label{prop: singleton-information}
Each cell $I_i(h)$ of each information partition $\mathcal{I}_i$ is a singleton.
\item
No agent has two directly consecutive choices: $P(h)\neq P(h,a)$ holds for
every nonterminal history $(h,a)$.
\item
Choices are real: no $A(h)$ is a singleton.
\item\label{prop: labeled-actions}
Each finite history $h$ is identified with a nonempty set $\mathcal{R}_{i}(h)$  for each $i\in N$.
For the empty history, $\mathcal{R}_i(\emptyset)=\mathcal{R}_i$.
For each nonterminal $h$ with $P(h)=i$, the set $\{\mathcal{R}_i(h,a)\mid a\in A(h)\}$ partitions $\mathcal{R}_i(h)$. If $P(h)\neq i$, then $\mathcal{R}_i(h)=\mathcal{R}_i(h,a)$ for all $a\in A(h)$.
\item\label{prop: force-asap-and-reveal}
For every agent $i$, behavior $B_i$ for agent $i$, and nonterminal history $h$ with $i=P(h)$, if the set ${\{M(B) \mid B_{-i}~\mbox{s.t.}~h\in\Path(B)\}}$ is a nonempty set of completely $i$-indifferent outcomes, then $\mathcal{R}(h,B_i(h))$ is a singleton.
 \end{enumerate}

Property~\ref{prop: labeled-actions} requires that each agent with each choice reveals more about the set that her preference belongs to. Property~\ref{prop: force-asap-and-reveal} then  requires that  whenever the behavior of agent $i$ starting at $h$ ensures that the outcome lies in some given set of completely $i$-indifferent outcomes,
then  $i$ immediately ensures this with the action chosen at $h$.  Furthermore, $i$ completely reveals her preference when choosing this action.

A strategy $\mathbf{T}_i$ for player $i$ in a gradual revelation mechanism is a \textbf{truthtelling strategy} if $R_i\in
\mathcal{R}_i(h,\mathbf{T}_i(R_i)(h))$
holds for all nonterminal $h$ with $P(h)=i$ and all $R_i\in \mathcal{R}_i(h)$.  So $\mathbf{T}_i$ is a truthtelling strategy if
agent $i$ reveals which set his preference $R_i$  belongs to, whenever possible. If  $R_i\notin \mathcal{R}_i(h)$, then the definition imposes
no restriction on the behavior of agent $i=P(h)$  with preference $R_i$.
Since the specification of $\mathbf{T}_i(R_i)$ for histories $h$ with $R_i\notin \mathcal{R}_i(h)$ is inconsequential to our
analysis, we call any
truthtelling strategy  \emph{the} truthtelling strategy. A gradual revelation mechanism is
\textbf{obviously incentive compatible} if the truthtelling strategy $\mathbf{T}_i$ is obviously dominant for each agent $i$. We say that an
obviously incentive compatible gradual revelation mechanism $M$ implements a social choice function $\scf:\mathcal{R}\to Y$ if
$\scf(\cdot)=M(\mathbf{T}(\cdot))$.

\begin{theorem}\label{theorem: revelation}
A social choice function is OSP-implementable if and only if some obviously incentive compatible gradual
revelation mechanism $M$ implements it.
\end{theorem}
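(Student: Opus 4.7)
The plan is to prove each direction separately. The \emph{if} direction is immediate: if $M$ is an obviously incentive compatible gradual revelation mechanism, then by definition the truthtelling strategy profile $\mathbf{T}$ is obviously dominant in $M$, so $M$ together with $\mathbf{T}$ OSP-implements $\scf = M\circ\mathbf{T}$.

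For the \emph{only if} direction, I would start with an arbitrary OSP mechanism $M^*$ together with an obviously dominant strategy profile $\mathbf{S}$ implementing $\scf$, and transform $M^*$ in stages to enforce Properties~1--5 while preserving both obvious dominance and the implemented $\scf$. First, refine each information partition $\mathcal{I}_i$ to the discrete partition; the universal quantifier in the OSP condition then ranges over a weakly smaller set of $(h',B')$ pairs (only $h'=h$ is admissible), so obvious dominance is preserved. Second, whenever $P(h)=P(h,a)=i$, replace $a$ by the family $\{(a,b):b\in A(h,a)\}$ with the corresponding subtrees attached, and iterate to eliminate consecutive same-player moves. Third, contract any history whose action set is a singleton. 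Fourth, recursively define $\mathcal{R}_i(\emptyset)=\mathcal{R}_i$ and $\mathcal{R}_i(h,a)=\{R_i\in\mathcal{R}_i(h):\mathbf{S}_i(R_i)(h)=a\}$ when $P(h)=i$ (and $\mathcal{R}_i(h,a)=\mathcal{R}_i(h)$ otherwise), then remove any action labeled with the empty set; such actions are deviations no type ever takes, so their removal only shrinks the best-case deviation set and therefore preserves obvious dominance. This yields Properties~1--4; for the singleton-information and labeling pieces I would follow \citet{AG2015}.

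The subtlest step is Property~5. For each $i$-node $h$ and each action $a$ at $h$ with $|\mathcal{R}_i(h,a)|>1$ for which there exists a behavior $B_i$ with $B_i(h)=a$ whose outcome set $\{M(B_i,B_{-i}):h\in\Path(B_i,B_{-i})\}$ is contained in a single complete $i$-indifference class, I would split $a$ into finer actions $\{a_{R_i}:R_i\in\mathcal{R}_i(h,a)\}$, and below each $a_{R_i}$ pre-commit $i$'s continuation to $\mathbf{S}_i(R_i)$ by hard-wiring those actions into the tree. Obvious dominance for $i$ survives because all outcomes reachable below $a_{R_i}$ lie in the $i$-indifference class, making $i$'s evaluation trivial. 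Iterating this splitting throughout the tree makes Property~5 hold everywhere.

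The main obstacle I anticipate is verifying that the Property~5 surgery preserves obvious dominance for every agent $j\neq i$: pre-committing $i$ to a specific truthtelling continuation removes some of the counterfactual $i$-behaviors that $j$'s OSP condition had to reckon with, which weakly improves $j$'s worst-case outcome under $\mathbf{S}_j$ and weakly shrinks her best-case deviation, so her OSP condition is preserved; making this bookkeeping rigorous is the crux. A closely related (strictly weaker) property is shown by \citet{PyciaTroyan2016}, whose argument should adapt here.
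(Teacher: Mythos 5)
Your overall architecture matches the paper's: refine information sets, label actions with preference sets, prune unused actions, condense consecutive same-player moves and singleton action sets, and perform a surgery for Property~\ref{prop: force-asap-and-reveal}. Steps 1--4 are fine. The gap is in the Property~\ref{prop: force-asap-and-reveal} surgery, and it is not where you expect it: the problem is obvious dominance for agent $i$ herself, not for the other agents. You split $a$ into \emph{one action per type in} $\mathcal{R}_i(h,a)$ and justify this by asserting that ``all outcomes reachable below $a_{R_i}$ lie in the $i$-indifference class.'' That is false: your trigger only requires that \emph{some} behavior $B_i$ through $a$ guarantees an indifference class, so for a type $R'_i\in\mathcal{R}_i(h,a)$ whose own continuation $\mathbf{S}_i(R'_i)$ does \emph{not} guarantee one, the outcome set below $a_{R'_i}$ is not a single indifference class. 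The split then forces, at $h$, a comparison between the \emph{entire} on-path outcome set of $R'_i$ through $(h,a)$ and the entire on-path outcome set of every other $R''_i\in\mathcal{R}_i(h,a)$. The original OSP condition never delivers this: it only compares $R'_i$ and $R''_i$ at the histories where their strategies actually diverge, where the conditioning on reaching the divergence point makes both sets smaller. Concretely, take $\mathcal{R}_i(h,a)=\{R,R',R''\}$ where $R$ forces a fixed outcome $x$ below $(h,a)$ (firing your trigger), agent $j$ then splits the play into two branches $c$ and $d$, and $R'$, $R''$ respectively obtain $u,t$ on branch $c$ and $u',s$ on branch $d$, with $R'\colon u'\succ s\succ u\succ t\succ x$ and $R''\colon t\succ s\succ u\succ u'\succ x$. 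OSP holds branch by branch ($u\succeq_{R'}t$, $u'\succeq_{R'}s$, etc.), but after your split the worst on-path outcome of $R'$ through $a_{R'}$ is $u$ while deviating to $a_{R''}$ can yield $s\succ_{R'}u$, so truthtelling is no longer obviously dominant.

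The paper's construction avoids this by splitting off \emph{only} a type $R_i$ whose \emph{own truthtelling continuation} from $h$ guarantees a complete-$i$-indifference class $[y]_i$, into a fresh action whose reachable outcome set (after pruning to $R_i$'s truthtelling continuations) is contained in $[y]_i$. Then every other type $R''_i$ that shared action $a$ already satisfies, by the original OSP condition applied at the first history where $\mathbf{T}_i(R''_i)$ and $\mathbf{T}_i(R_i)$ diverge, that every on-path outcome of $R''_i$ through $(h,a)$ is weakly $R''_i$-preferred to $y$; hence the new deviation is harmless. Your exists-a-behavior trigger and your all-types split both have to be replaced by this per-type, own-continuation version (iterated over agents and over minimal such histories). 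Your remarks about preserving OSP for $j\ne i$ via pruning of $i$'s counterfactual behaviors are correct but are the easy part. Two smaller omissions: you do not handle infinite histories, which the paper explicitly allows and must condense into new leaves in the consecutive-move and singleton-action steps; and the surgery creates new singleton action sets and consecutive moves, so the cleanup steps must be reapplied afterwards.
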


\noindent The proof of Theorem~\ref{theorem: revelation} is relegated to Appendix~\ref{app: revelation}.

For any $h$, we define the set $\mathcal{R}(h)$ as the set of all preference profiles $R\in\mathcal{R}$ with $R_i\in \mathcal{R}_i(h)$ for every $i$.
In a gradual revelation mechanism, $h$ is on the path $\Path(\mathbf{T}(R))$ if and only if $R\in \mathcal{R}(h)$. A gradual revelation mechanism is consequently obviously incentive compatible if and only if the following holds for each nonterminal history $h$ in $M$, where we denote $i=P(h)$:
\begin{eqnarray*}
M(\mathbf{T}(R))R_{i}M(\mathbf{T}(R'))\mbox{ for all }R,R'\in \mathcal{R}(h)\mbox{ s.t.\ $\mathbf{T}_{i}(R_{i})(h)\ne\mathbf{T}_{i}(R'_{i})(h)$}.
\end{eqnarray*}
So the agent $i$ who moves at $h$ must prefer the worst-case --- over all preference profiles of other agents such that $h$ is reached ---
outcome reached by truthtelling, i.e., by following $\mathbf{T}_i(R_i)$, over the best-case --- over all preference profiles of other agents such that $h$ is reached --- outcome reached by deviating to any alternative behavior that prescribes a different action $\mathbf{T}_i(R'_i)(h)\ne\mathbf{T}_i(R_i)(h)$ at $h$.

 For any history $h$, let $Y(h)$ be the set of all outcomes associated with a terminal history $h'$ with $h\subseteq h'$.
In an obviously incentive compatible gradual revelation mechanism $M$, let $h$ be a nonterminal history and let $i=P(h)$ . We define
$Y^*_h\subseteq Y(h)$ to be the set of outcomes $y$ such that there exists some $a\in A(h)$ s.t.\ $Y(h,a)\subseteq[y]_i$.
We define $A^*_h\subseteq A(h)$ to be the set of actions $a$
such that $Y(h,a)\subseteq
[y]_i$ for some $y\in Y^*_h$. We call $A^*_h$ the set of \textbf{dictatorial} actions at $h$. Let $\overline{A^*_h}\eqdef A(h)\setminus A^*_h$
and $\overline{Y^*_h}\eqdef Y(h)\setminus Y^*_h$. We call $\overline{A^*_h}$ the set of \textbf{nondictatorial} actions at $h$. We will show
below that
 $Y^*_h$ and $A^*_h$ are nonempty for the single peaked as well as the matching domain. (See Theorem~\ref{theorem: two options} and Lemmas~\ref{proof
 2}, \ref{proof 4}, ~\ref{proof 5}, and~\ref{lemma: one-undetermined}.) Before considering these domains, we perform some additional preliminary analysis in Appendix~\ref{app: revelation}.
 
\section{Voting}\label{sec: voting}

Majority voting is not obviously strategyproof even when there are just two possible outcomes, i.e., $Y\eqdef\{y,z\}$. In fact, unanimity (e.g., choosing the outcome $z$ if and only if all agents prefer it to $y$) is the only obviously strategyproof supermajority rule. In the sequential implementation of any other supermajority rule, the first agent $P(\emptyset)$ does not
have an action that determines one of the two choices. So, for whichever choice she picks, the worst-case scenario is that all other agents vote against her. On the other hand, the best-case scenario if she picks the other outcome is that all other agents would vote for her preferred outcome.

There are, however, some nondictatorial obviously strategyproof unanimous voting mechanisms. In a \textbf{proto-dictatorship}, each agent in a stream of agents is given either the choice between implementing $y$ or going on, or the choice between implementing $z$ and going on. The mechanism terminates either when one of these agents chooses to implement the outcome offered to her, or with a last agent who is given the choice between implementing $y$ or $z$.
 At each nonterminal history  $h$ of a proto-dictatorship $M$, (precisely) one of the following holds:
\begin{itemize}
\item
$Y^*_h=\{y\}$ and $\overline{A^*_h}=\{\tilde{a}\}$ with $Y(h,\tilde{a})=\{y,z\}$.
\item
$Y^*_h=\{z\}$ and $\overline{A^*_h}=\{\tilde{a}\}$ with $Y(h,\tilde{a})=\{y,z\}$, or
\item
$Y^*_h=\{y,z\}$ (and $\overline{A^*_h}=\emptyset$).
\end{itemize}
There is moreover no terminal history $h$ such that one agent moves twice on the path to reach $h$: 
 $P(h')\neq P(h'')$ holds for any $h'\subsetneq h''\subsetneq h$.

\begin{theorem}\label{theorem: two options}
Let $Y=\{y,z\}$. Then $M$ is obviously strategyproof and onto if and only if it is a proto-dictatorship.
\end{theorem}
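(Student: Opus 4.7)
The plan is to prove both directions of the equivalence separately. For the backward (substantive) direction the key tool is the gradual revelation principle (Theorem~\ref{theorem: revelation}), which lets me assume without loss of generality that $M$ is an obviously incentive compatible gradual revelation mechanism.

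For the forward direction (proto-dictatorship $\Rightarrow$ OSP and onto), ontoness follows from the structure: iterating the ``continue'' action through case 1 and case 2 nodes eventually reaches a case 3 node where the agent may select either $y$ or $z$, so both outcomes are reachable. For obvious strategyproofness I verify truthtelling at each nonterminal $h$ with $i = P(h)$ by case analysis. In case 1 ($Y^*_h = \{y\}$ with continue action $\tilde{a}$ satisfying $Y(h,\tilde{a}) = \{y,z\}$), an agent with $y P_i z$ truthfully picks the dictatorial ``$y$'' action, whose worst outcome is $y$, while the best outcome of deviating to $\tilde{a}$ is also $y$; an agent with $z P_i y$ truthfully picks $\tilde{a}$, whose worst outcome is $y$, while the best outcome of deviating to ``$y$'' is also $y$. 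In both sub-cases the worst outcome from truthtelling equals the best outcome from any deviation, so obvious dominance follows from $y R_i y$. Case 2 is symmetric, and case 3 is immediate since both actions force distinct outcomes and preferences are strict.

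For the backward direction I invoke Theorem~\ref{theorem: revelation} and analyze $M$ at each nonterminal $h$ with $i = P(h)$. Since $\mathcal{R}_i$ consists of exactly the two strict preferences $R_{yz}$ (ranking $y$ above $z$) and $R_{zy}$ (the reverse), the gradual revelation properties (nonsingleton $A(h)$ partitioning $\mathcal{R}_i(h)$ into nonempty cells) force $|\mathcal{R}_i(h)| = 2$ with exactly two actions $a_y, a_z$ revealing the respective preferences. The OSP condition at $h$ immediately rules out $Y(h, a_y) = Y(h, a_z) = \{y, z\}$: under $R_{yz}$, the worst outcome of truthful $a_y$ would be $z$ while the best outcome of deviating to $a_z$ would be $y$, contradicting $z R_{yz} y$. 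Hence at least one action forces its outcome. Combined with the symmetric OSP bound, with ontoness at the root, and with Property~\ref{prop: force-asap-and-reveal} at deeper histories to exclude nodes where both actions force the same outcome, this places $h$ into exactly one of the three proto-dictatorship cases. Finally, whenever $i$ moves at some $h$ her two actions partition $\mathcal{R}_i(h) = \{R_{yz}, R_{zy}\}$ into singletons, so $\mathcal{R}_i(h')$ is a singleton at every descendant $h'$, which precludes any further real choice by $i$ and so establishes the proto-dictatorship's ``no agent moves twice on a path'' requirement.

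The hard part is excluding the pathological case where both of agent $i$'s actions at a nonterminal history force the same outcome (both $y$, say), since such a node is neither a proto-dictatorship node nor directly ruled out by OSP alone. At the root, ontoness forbids it immediately. At deeper histories, Property~\ref{prop: force-asap-and-reveal} applied to the singleton (hence trivially $i$-indifferent) class of outcomes implies that every other agent has already fully revealed her preference at $h$, so $i$'s move does not affect the final outcome; a minor simplification of $M$ removes such redundant moves while preserving OSP and the implemented social choice function, ultimately identifying $M$ with a proto-dictatorship.
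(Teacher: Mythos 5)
Your proof is correct and follows essentially the same route as the paper's: the forward direction by checking obvious dominance at each of the three node types (worst case of truthtelling equals best case of deviating), and the backward direction via Theorem~\ref{theorem: revelation}, the observation that $|\mathcal{R}_i(h)|=2$ forces exactly two fully revealing actions (hence no agent moves twice), and an OSP argument ruling out two non-forcing actions. In fact you are somewhat more careful than the paper, which asserts that every node ``must be covered by one of the three cases'' without explicitly excluding the degenerate node where both actions force the same outcome; your use of ontoness at the root and of Property~\ref{prop: force-asap-and-reveal} plus a pruning of redundant moves at deeper histories closes that small gap.
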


The proof of Theorem~\ref{theorem: two options}, which readily follows from the analysis of Section~\ref{sec: revelation} (and Appendix~\ref{app: revelation}), is relegated to Appendix~\ref{app: voting}.

\section{Single Peaked Preferences}\label{sec: single peaked}

In the domain of single peaked preferences, the possible outcomes (also called policies) are $Y=\mathbb{Z}$, and each agent (also called voter)
has single peaked preferences, i.e., the agent prefers some $y \in \mathbb{Z}$, called the agent's \textbf{ideal point}, the most, and for every $y''>y'\ge y$ or $y''<y'\le y$, the agent strictly prefers
$y'$ over $y''$. A \textbf{unanimous} social choice function is one that, if the ideal points of all agents coincide, chooses the joint ideal point.
Unanimity is a strictly weaker assumption than Pareto optimality.

With single peaked preferences, there is a large range of strategyproof and unanimous social choice functions \cite{Moulin1980}. Most prominently,
median voting, which maps any profile of preferences to a median of all voters' ideal points is strategyproof and unanimous (and even Pareto
optimal). However, median voting is not obviously strategyproof when there are at least 3 voters. To see this, suppose some gradual revelation mechanism did implement median voting. 
Say the ideal point of the first agent in this mechanism is $y$ and 
 truthtelling prescribes for this agent to choose some action $a\in A(\emptyset)$. If all other voters declare their ideal point to be some $y'\neq y$, then $y'$, the median of all declared preferences, is implemented
 regardless of the first agent's choice $a$. If the first
agent deviates to some action $a'\neq a$ and if all voters --- according to the best-case scenario --- say their ideal point is $y$, then $y$ as 
the median of all announced preferences is implemented. In sum, truthtelling is not obviously strategyproof for the first agent.

A different, less popular, unanimous (and even Pareto optimal) and strategyproof social choice function for any single peaked domain is the
function $min$, which maps any profile of preferences to the minimal ideal point. We observe that if the set of possible ideal points is bounded from below by
some bound $\underline{y}$, then this function is OSP-implementable: The obviously strategyproof implementation of $min$ follows along the
lines of the (obviously strategyproof) ascending implementation of second-price auctions.
The $min$ mechanism starts with $\underline{y}$. For each number $y\in [\underline{y},\infty)\cap\mathbb{Z}$, sequentially in
increasing order, each agent is given an option to decide whether to continue or to stop. When one agent stops at some $y$, the mechanism
terminates with the social choice $y$.

In this section, we show that the obviously strategyproof and unanimous mechanisms for the domain of single peaked preferences are combinations of $min$, $max$, and
dictatorship, and furthermore, they are all Pareto optimal.
We define \textbf{dictatorship with safeguards against extremism}
for domains of single peaked preferences as follows: One agent, say  $1$, is called the dictator. All other agents have
limited veto rights. Specifically, each agent $i\neq 1$ can block extreme leftists policies and rightist policies in the rays
$(-\infty,l^i)$ and $(r^i,\infty)$, for some $l_i\le r_i \in \mathbb{Z}\cup\{-\infty,\infty\}$. Furthermore, there exists some $y^m$ with $l^i
\le y^m \le r^i$ for
all~$i$. Say that $\underline{y}^i$ and $\overline{y}^i$ respectively are agent $i$'s preferred policies in the rays
$(-\infty,l^i]$ and $[r^i,\infty)$. Then the outcome of the dictatorship with safeguard against extremism is
agent 1's most preferred policy in $\bigcap_{i\neq 1}[\overline{y}^i,\underline{y}^i]$.

According to this social choice function, agent 1 is free to choose any policy ``in the middle'':  If agent~1's ideal policy $y$ is in $[\max_{i\neq
1}{l^i},\min_{i\neq 1}r^i]$, then $y$ is implemented. Note that by assumption, $y^m\in [\max_{i\neq
1}{l^i},\min_{i\neq 1}r^i]$, and so this choice set is nonempty. If agent $1$'s ideal policy is farther to the left or right, then
it may only
be chosen if none of a select group of citizens vetoes this choice. As we consider more extreme policies, the group that needs to consent to the implementation of a policy increases. Dictatorships with safeguards against extremism embed standard dictatorships
($l^i=-\infty$ and $r^i=\infty$ for all $i$). They also embed $min$ when the ideal points are bounded from below by some $\underline{y}$ (by
$r^i=\underline{y}$ for all $i$) as well as $max$ when the ideal points are bounded from above by some $\overline{y}$ (by $l^i=\overline{y}$ for
all $i$).

Fix a dictatorship with safeguards against extremism $\scf$. Then $\scf$ is implementable in obviously dominant strategies by the mechanism that
first
 offers the dictator to choose any option in  $[L^*,H^*]\eqdef\bigcap_{i\neq 1}[\overline{y}^i,\underline{y}^i]$. If the
 dictator does not choose an option in this interval, then she indicates whether the mechanism is to continue to the left or to the right (according to
 the direction of the dictator's ideal point). If the
 mechanism continues to the right, then similarly to the implementation of $min$, the mechanism starts with $H^*$, and for each policy $y \in
 [H^*,\infty)$, sequentially in increasing order, each agent $i$ with either $i=1$ (the dictator) or $r^i \le y$ is given an option to decide
 whether to continue or to stop at $y$. When an agent stops at some $y$, the mechanism terminates with the social choice~$y$. If the
 mechanism continues to the left, then similarly to the implementation of $max$, the mechanism starts with $L^*$, and for each policy $y \in
 (-\infty,L^*]$ in decreasing order, the dictator (agent~1) and each agent $i$ with $l^i \ge y$ may decide to stop the mechanism with the implementation of $y$.

To see that this mechanism is obviously strategyproof, assume that the  ideal point of some agent~$i$ is $y^*$.
If this agent is the dictator and $y^*\in[L^*,H^*]$, then choosing $y^*$ as the outcome is obviously strategyproof, as the worst-case outcome under this strategy is the best-possible outcome. We claim that for any agent (whether or not the dictator), continuing to the right at any $y<y^*$ and then stopping at $y^*$ is obviously
strategyproof. Indeed, continuing at any $y<y^*$ is obviously strategyproof since the worst-case outcome under the strategy that continues until
$y^*$ and then stops at $y^*$ is in $[y,y^*]\cap\mathbb{Z}$, and therefore no worse than $y$, which is the best-possible outcome when deviating to
stopping at $y$. Stopping at $y=y^*$ is obviously strategyproof as it implements $i$'s top choice.
By the same argument, continuing to the left when $y>y^*$ and then stopping at $y^*$ is obviously strategyproof.

   To see that the
 mechanism is Pareto optimal, first consider the case where the dictator chooses a policy $y$ from $[L^*,H^*]$. In this case, the dictator strictly prefers $y$ to all other policies and the outcome is Pareto optimal. If the dictator initiates a move to (say) the right, then the mechanism either stops at the ideal policy of some agent, or it stops at a policy that is  to the left of the dictator's ideal point  and to the right of  the ideal point of the agent who chose to stop. In either case, Pareto optimality is satisfied.\footnote{On a side note, if we only demanded for any agent with ideal point $y \in \mathbb{Z}$, that for
every $y''>y'\ge y$ or $y''<y'\le y$, this agent weakly (rather than strictly) prefers $y'$ over $y''$,
then after someone says ``stop'' at some value $y$, to ensure Pareto
optimality, the mechanism would start going in the other direction until some
agent (who was allowed to say stop w.r.t.\ $y$) says stop again, which such a
player does not do as long as she is indifferent between the current value and
the one that will follow it.}

If we require our social choice function to cover
 finer and finer grids in $\mathbb{R}$, then only the above mechanism is OSP-implementable.
However, with our fixed grid, namely $Y=\mathbb{Z}$, the set of obviously strategyproof and unanimous mechanisms is slightly larger than the set of
dictatorships with safeguards against extremism as defined above. We may then combine dictatorships with safeguards against extremism with the
 proto-dictatorships of Theorem
\ref{theorem: two options}. When such a mechanism moves to the right or left, some agents may in addition to stopping or continuing at $y$ call for ``arbitration'' between $y$ and a directly neighboring option.
More specifically, if, e.g., the mechanism goes right from $y'$ to $y'\!+\!1$, some specific agent with $r_i=y'\!+\!1$ may not only force the outcome to be $y'\!+\!1$, but may also (alternatively) choose to initiate an
``arbitration'' between $y'$ and $y'\!+\!1$ via a proto-dictatorship (whose parameters depend on $y'\!+\!1$).\footnote{Similarly, if the set of ideal points is bounded from (say) above by some $\overline{y}$, then one specific agent with
$r_i=\overline{y}$ may choose arbitration between $\overline{y}\!-\!1$ and $\overline{y}$.} In such a case, the obviously strategyproof implementation allows $i$ the
choice between forcing $y'\!+\!1$, initiating an arbitration, and continuing, immediately after all relevant players were given the option to stop at
$y'$ and before any other player is given the option to stop at $y'\!+\!1$.

In an upcoming working paper that was prepared without knowledge of the current paper, \citet{AMN2016} also study OSP-implementable social choice functions on the domain of single peaked preferences, but focus on which coalition can be formed. Recast into the language of our paper, their results show that the possibility of arbitration (which we consider to be a side effect of discretization) can be used to construct coalition systems, and characterize these possible systems. Take, for example, a dictatorship with safeguards against extremism with $r_2=5$, so if the dictator wishes to go right at $4$, then agent $2$ can decide to stop at $5$ but not at $4$. Assume, now, that at $5$ agent $2$ can also choose to initiate an arbitration between $4$ and $5$. Assume, furthermore, that the proto-dictatorship implementing this arbitration is simply a choice by agent $3$ between the outcomes $4$ and $5$. The resulting mechanism is such that agent $2$ can stop unilaterally at $5$, but can force the mechanism stop at $4$ only when joining forces with agent $3$ (indeed, for the outcome $4$ to be implemented, agent $2$ must initiate the arbitration, \emph{and} agent $3$ must choose $4$ in the arbitration), so the coalition of both of these agents is needed to stop at $4$.

\begin{theorem}\label{theorem: GS}
 If $Y=\mathbb{Z}$, a social choice function $\scf$ for the domain of single peaked preferences is unanimous and
OSP-implementable if and only if it is a dictatorship with safeguards against extremism (with the possibility of arbitration as defined above). Any such  $\scf$ is moreover Pareto optimal.
\end{theorem}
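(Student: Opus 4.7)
The plan proceeds by the two standard directions. For the easier ($\Leftarrow$) direction, I would verify that every dictatorship with safeguards against extremism (with the arbitration enhancement) is implemented by the extensive-form mechanism described informally in the excerpt. Obvious strategyproofness and Pareto optimality essentially follow from the arguments already sketched just before the theorem: when the dictator's ideal point lies in $[L^*,H^*]$, her worst-case outcome under truthtelling equals her best-case outcome; when the mechanism moves right (resp.\ left), any agent's strategy of continuing at each $y$ strictly below (resp.\ above) her ideal point $y^*$ and stopping at $y^*$ is obviously dominant because the worst-case truthful outcome lies in $[y,y^*]$ while the best-case deviation yields exactly $y$. For the arbitration layer, obvious dominance inside the arbitrating subgame is inherited from Theorem~\ref{theorem: two options}, and Pareto optimality holds because the two arbitrated options are adjacent integers so the chosen one is single-peaked-Pareto-optimal from the point of view of every agent.

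For the harder ($\Rightarrow$) direction, I would apply Theorem~\ref{theorem: revelation} to pass to an obviously incentive compatible gradual revelation mechanism $M$ implementing $\scf$, and let $i^*=P(\emptyset)$. Using that $A^*_\emptyset$ is nonempty on the single-peaked domain (as noted after the definitions), I would let $\mathcal{Y}\subseteq\mathbb{Z}$ be the set of $y$ for which some dictatorial action at $\emptyset$ forces outcome $y$. The first key lemma is that $\mathcal{Y}$ is a (possibly infinite) interval $[L^*,H^*]$: if some $y\in(L^*,H^*)$ were not in $\mathcal{Y}$, unanimity (take every agent's ideal point to be $y$) combined with obvious dominance for $i^*$ (whose worst-case under any available action would be strictly worse than $y$) would yield a contradiction. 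Then $i^*$ plays the role of the dictator, and I claim her nondictatorial actions at $\emptyset$ are at most two, corresponding to ``go right of $H^*$'' and ``go left of $L^*$''; indeed, three such actions would give $i^*$ with ideal point in $\{L^*,H^*\}$ a worst case no better than an outcome strictly on the far side of her ideal point, violating obvious dominance.

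Next I would analyze, without loss of generality, the ``go right'' subtree and prove by induction on tree depth that it must take the prescribed form: starting from $H^*$, for each integer $y>H^*$ in increasing order, each agent $i\neq i^*$ whose threshold satisfies $r^i\le y$ is sequentially offered a stop/continue choice (and $i^*$ is also given such a choice at the appropriate moments), with no other structure allowed except the arbitration option attached to suitable nodes. The induction step applies Theorem~\ref{theorem: two options} to the ``local'' two-outcome subproblem where, from the active agent's viewpoint under the preferences that render her indifferent between all continuation outcomes, the choice is between $y$ and the set of outcomes strictly larger than $y$ (identified as a single alternative because of complete $i$-indifference along $[y]_i$); the veto set $\{i : r^i\le y\}$ is pinned down by requiring that whenever an agent is given a stop/continue choice at $y$, truthtelling for the preference with ideal point $y$ must be obviously dominant, which fails precisely when $r^i>y$.

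The main obstacle, which I would treat carefully, is characterizing the arbitration layer and ruling out any exotic extra actions. Specifically, at any node where agent $i$ may stop at $y$, I must show that any additional action beyond ``stop at $y$'' and ``continue'' can only initiate a proto-dictatorship on the two-element set $\{y-1,y\}$. This uses three ingredients: (i) by the inductive structure, outcomes strictly less than $y-1$ are no longer reachable once the agent who could stop at $y-1$ declined to do so; (ii) once $i$ refuses to continue beyond $y$, outcomes strictly greater than $y$ are likewise unreachable from any arbitration branch, forcing the arbitrated subgame to have outcome set $\{y-1,y\}$; (iii) Theorem~\ref{theorem: two options} then forces the subgame to be a proto-dictatorship. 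Pareto optimality of $\scf$ falls out: either the dictator obtains her top choice in $[L^*,H^*]$, or the selected $y$ lies between the ideal points of the dictator and the stopping agent (or equals one of them), so no policy Pareto-dominates it.
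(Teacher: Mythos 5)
Your overall decomposition mirrors the paper's: pass to a gradual revelation mechanism via Theorem~\ref{theorem: revelation}, show $Y^*_\emptyset$ is an interval $[L^*,H^*]$, show there are at most two nondictatorial root actions (``go left''/``go right''), and then characterize each branch inductively as a sequence of stop/continue offers with an occasional arbitration subgame governed by Theorem~\ref{theorem: two options}. However, two of your key steps have genuine gaps. First, your argument that $Y^*_\emptyset$ is an interval does not go through as stated. If $y^*<y<y^\circ$ with $y^*,y^\circ\in Y^*_\emptyset$ but $y\notin Y^*_\emptyset$, then for an agent with ideal point $y$ the worst-case truthful outcome $w$ is indeed some policy other than $y$, and obvious dominance only forces $w$ to be weakly preferred to both $y^*$ and $y^\circ$. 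For a single peaked preference with peak $y$ this is perfectly consistent with, say, $w=y+1$; there is no immediate contradiction. The paper's Lemma~\ref{proof 2} closes this by a multi-stage argument: it fixes a profile witnessing $y\notin Y^*_\emptyset$ with outcome $\tilde y<y$, then switches the other agents' preferences one at a time and invokes unanimity and (plain) strategyproofness to trap the outcome strictly between $\tilde y$ and $y$, contradicting strategyproofness for the first agent. Some such cross-profile argument is unavoidable; obvious dominance at the root alone does not suffice.

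Second, your inductive step rests on ``identifying'' the continuation set $[y+1,\infty)$ as a single alternative ``because of complete $i$-indifference along $[y]_i$,'' and then applying Theorem~\ref{theorem: two options}. In the single peaked domain no two distinct policies are completely $i$-indifferent (every agent has a strict peak separating them), so $[y]_i=\{y\}$ and this collapse is not available; moreover Theorem~\ref{theorem: two options} presupposes a two-element outcome space with its two-element preference domain, not a two-block partition of $\mathbb{Z}$ under single peaked preferences. The paper's Lemma~\ref{proof 4} instead argues directly at each history $h$ with $Y(h)=[F,\infty)$: no action can restrict the outcome to $[F+2,\infty)$ (using a voter whose peak is $F+1$ with second choice $F$), hence $\emptyset\ne Y^*_h\subseteq\{F,F+1\}$, there is a unique ``continue'' action, and a case analysis on $Y^*_h$ yields exactly the three local structures (including arbitration). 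Theorem~\ref{theorem: two options} is invoked only inside the genuine two-outcome arbitration subgame, where your ingredient (iii) is correct. Relatedly, your derivation of ``at most two nondictatorial root actions'' is also thinner than the paper's, which needs an adjacent-outcomes lemma (Lemma~\ref{proof 1}) plus a normalization of the mechanism (maximizing $|\overline{A^*_h}|$) to rule out a single action that reaches both sides of $[L^*,H^*]$. The $\Leftarrow$ direction and the concluding Pareto optimality argument match the paper.
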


The proof of Theorem \ref{theorem: GS} follows from Lemmas \ref{proof 0} through~\ref{proof 5} that are given below, and from Theorem~\ref{theorem: two options}.
The proof of these lemmas, along with the statement and proof of the supporting Lemmas~\ref{proof 1} through~\ref{proof 2.5}, is relegated to Appendix~\ref{app: single peaked}.

\begin{lemma}\label{proof 0}
Any dictatorship with safeguards against extremism is Pareto optimal (and in particular unanimous) and OSP-implementable.
\end{lemma}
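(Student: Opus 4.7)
The plan is to exhibit the mechanism already sketched in the paragraphs preceding the lemma and verify the three required properties: OSP-implementability, Pareto optimality, and (as a corollary) unanimity. Concretely, I will let $[L^*,H^*]\eqdef\bigcap_{i\neq 1}[\overline{y}^i,\underline{y}^i]$ (nonempty because $y^m$ lies in every $[\overline{y}^i,\underline{y}^i]$), have the dictator first choose either an element of $[L^*,H^*]$ or one of two signals ``go right'' / ``go left,'' and then run an ascending $\min$-style (resp.\ descending $\max$-style) process on the chosen ray in which, at each integer $y$, the dictator and every non-dictator $i$ with $r^i\le y$ (resp.\ $l^i\ge y$) is given the chance to stop at $y$. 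That this mechanism computes $\scf$ is immediate from the definition of dictatorship with safeguards against extremism: the stopping value on a rightward run is the smallest $y\ge H^*$ preferred to $y+1$ by some eligible agent, which equals the dictator's most-preferred point in $\bigcap_{i\neq 1}[\overline{y}^i,\underline{y}^i]$ restricted to her chosen side.

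For obvious strategyproofness, I will check the truthtelling strategy at each node $h$ in isolation, relying on the fact that preferences are single peaked. At the dictator's initial move with ideal $y^*\in[L^*,H^*]$, the truthful action picks $y^*$, whose (unique) outcome is her top choice, so the check is trivial; for $y^*$ strictly right of $H^*$, signalling ``go right'' is truthful, and the worst-case outcome under this strategy is some $y'$ with $H^*\le y'\le y^*$ (since she may stop at $y^*$), which by single-peakedness is weakly preferred to the best-case outcome after signalling ``go left'' or picking any point in $[L^*,H^*]$. At a ``stop or continue'' node at value $y$ on a rightward run, for an agent $i$ with ideal $y^*>y$, continuing is truthful and yields a worst case in $[y,y^*]$, which is weakly preferred to $y$, the best case under stopping; for $y^*\le y$, stopping is truthful and yields $y$, weakly preferred to any later outcome in $[y,\infty)$. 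The leftward branch is symmetric.

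For Pareto optimality I will split on the dictator's first action. If she picks $y\in[L^*,H^*]$, then $y$ is her unique top choice and no Pareto improvement is possible. If she signals ``go right,'' the run terminates at some $y\ge H^*$ that was either her ideal or was stopped by some agent $j$ with $r^j\le y$; single-peakedness forces $y$ to lie weakly between $j$'s ideal and the dictator's ideal, so any outcome strictly preferred by the dictator lies strictly right of $y$ (hurting $j$), and any outcome strictly preferred by $j$ lies strictly left of $y$ (hurting the dictator). The leftward branch is symmetric. Unanimity then follows: if all agents share ideal $y$, each non-dictator has $\overline{y}^i\le y\le\underline{y}^i$, so $y\in[L^*,H^*]$ and the dictator picks it at her first move.

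Finally, for the ``possibility of arbitration'' extension, I will note that at the distinguished node where agent $i$ with $r_i=y'+1$ may initiate arbitration between $y'$ and $y'+1$, obvious strategyproofness of $i$'s choice among ``stop,'' ``arbitrate,'' ``continue'' reduces on the two-element outcome set $\{y',y'+1\}$ to Theorem~\ref{theorem: two options}, while obvious strategyproofness of the arbitrators inside the proto-dictatorship is given by the same theorem; Pareto optimality of any outcome $y'$ or $y'+1$ follows by the same wedging argument used above. The main obstacle is bookkeeping: making the at-each-node OSP check uniform across the dictator's first move, the later stop/continue decisions, and the arbitration side-mechanisms, so that the single-peakedness comparison always collapses to a clean worst-vs-best inequality on an interval.
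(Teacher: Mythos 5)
Your proposal is correct and follows essentially the same route as the paper, whose proof of this lemma simply points back to the construction and verification sketched in Section~\ref{sec: single peaked}: the same two-stage mechanism (free choice in $[L^*,H^*]$, then an ascending/descending stop-or-continue run), the same worst-case-in-$[y,y^*]$ argument for obvious dominance, and the same wedging argument for Pareto optimality. No substantive differences to report.
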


The remaining Lemmas~\ref{proof 2} through~\ref{proof 5} apply to any obviously incentive compatible gradual revelation mechanism $M$ for single peaked preferences, under an additional assumption that can be made without loss of generality; see the paragraph opening Appendix~\ref{app: single peaked} for the full details.

\begin{lemma}\label{proof 2}
The set $Y^*_{\emptyset}$ is nonempty. There exist numbers $L^*\in\{-\infty\}\cup\mathbb{Z}$ and $H^*\in\mathbb{Z}\cup\{\infty\}$ s.t.\
$L^*\leq H^*$ and $Y^*_\emptyset=[L^*,H^*]\cap\mathbb{Z}$
\end{lemma}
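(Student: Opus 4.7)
The plan is to exploit the OSP conditions on the truthtelling strategy of the first mover $i = P(\emptyset)$, together with the fact that unanimity forces $Y(\emptyset) = \mathbb{Z}$ (since the profile with all agents having ideal $y$ yields outcome $y$). Note that in the single peaked domain distinct integers are never completely $i$-indifferent, so $[y]_i = \{y\}$, and hence $y \in Y^*_\emptyset$ if and only if some $a \in A(\emptyset)$ satisfies $Y(\emptyset, a) = \{y\}$; the lemma therefore amounts to showing that the set of outcomes ``directly choosable at the root'' is a nonempty integer interval.

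For nonemptiness, I would argue by contradiction, assuming $|Y(\emptyset, a)| \ge 2$ for every $a \in A(\emptyset)$. Let $a_R = \mathbf{T}_i(R_i)(\emptyset)$ for preferences $R_i$ with ideals tending to $+\infty$. The OSP condition for gradual revelation mechanisms, applied to such right-extreme preferences, forces every outcome reachable via any action $a' \ne a_R$ to be bounded above by the worst-case outcome of truthtelling from $a_R$ (for a sufficiently right-extreme preference, the $R_i$-order on any bounded set coincides with the usual integer order). A symmetric argument for a ``left'' action $a_L$ constrains every other action's image to be bounded on both sides, and selecting preferences whose ideals lie near the boundaries of these bounded images would force some action's image to collapse to a single outcome---the desired contradiction.

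For the interval property, fix $y_1 < y_2$ in $Y^*_\emptyset$, witnessed by $a_1, a_2 \in A^*_\emptyset$ with $Y(\emptyset, a_j) = \{y_j\}$, and let $y \in (y_1, y_2) \cap \mathbb{Z}$. For any preference $R_i$ with ideal $y$, unanimity places $y$ in the image of $(\emptyset, \mathbf{T}_i(R_i)(\emptyset))$ (take the profile in which all other agents also have ideal $y$). For a strongly peaked such $R_i$ in which every outcome at distance $\ge 1$ from $y$ is strictly worse than $y$, the displayed OSP condition following Theorem~\ref{theorem: revelation}---applied to deviations with $\mathbf{T}_i(R_i')(\emptyset) = a_j$ for $j \in \{1,2\}$---forces every outcome reachable from $(\emptyset, \mathbf{T}_i(R_i)(\emptyset))$ along truthtelling with $R_i$ to be strictly $R_i$-preferred to both $y_1$ and $y_2$. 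Varying the side-bias of $R_i$ across preferences with ideal $y$ and combining the resulting constraints would pin these outcomes down to $\{y\}$. Applying Property~5 (force-asap-and-reveal) along the subsequent path, supported by the auxiliary Lemmas~\ref{proof 1} through~\ref{proof 2.5}, would upgrade this to $Y(\emptyset, \mathbf{T}_i(R_i)(\emptyset)) = \{y\}$, giving $y \in Y^*_\emptyset$.

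The main obstacle is this last upgrade step: OSP naturally constrains only outcomes along truthtelling paths rather than arbitrary continuations of $(\emptyset, a)$, so showing that every outcome in $Y(\emptyset, \mathbf{T}_i(R_i)(\emptyset))$---including those reached via deviations by $i$ or via unusual behavior of others---equals $y$ requires additional structural properties of gradual revelation mechanisms, in particular the tight coupling (via Property~4 and the partition $\{\mathcal{R}_i(\emptyset, a)\}_{a \in A(\emptyset)}$) between actions at a history and the preference cells they reveal.
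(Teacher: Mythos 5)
Your proposal has genuine gaps in both halves, and in both places the paper's argument is structurally different from what you sketch. For nonemptiness, the paper's key idea is local, not asymptotic: if two \emph{adjacent} integers $y$ and $y+1$ lie in the images of two distinct actions at $\emptyset$ and neither is in $Y^*_\emptyset$, then the type $R_1\colon y,y+1$ is violated --- truthtelling may land her at some $y'\notin\{y,y+1\}$ while deviating to the other action could yield $y+1$ (this is Lemma~\ref{proof 1}, fed by Lemma~\ref{lemma: owned choices}). Since $|A(\emptyset)|\ge 2$ and the actions' images cover $\mathbb{Z}$, such an adjacent split pair exists unless some image is a singleton. Your ``extreme ideal points bound all other actions' images, then collapse near the boundaries'' sketch is not carried out and does not obviously close: it does not, for instance, rule out two actions whose images are two overlapping half-lines, a configuration the adjacent-pair argument kills immediately; moreover the bounds you extract depend on which action each extreme type selects, which you do not control.

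For the interval property, your root-level OSP constraints cannot deliver the conclusion. For a fixed peak-$y$ preference $R_1$, the constraint that the worst truthtelling outcome be weakly $R_1$-preferred to both $y_1$ and $y_2$ confines that outcome only to a one-sided interval such as $[y_1,y]$ or $[y,y_2]$ (pinning it to $\{y\}$ would require $y_1$ and $y_2$ to be $R_1$'s second and third choices, which single-peakedness forbids unless they are adjacent to $y$). Constraints from differently biased peak-$y$ types apply to \emph{different} truthtelling plays, possibly through different actions at $\emptyset$, so they cannot be intersected to force any single action's image to $\{y\}$; and Property~\ref{prop: force-asap-and-reveal} does not perform the ``upgrade'' you hope for --- it only triggers once the agent's behavior \emph{already} guarantees indifferent outcomes. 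The paper avoids all of this: its interval argument uses only \emph{ordinary} strategyproofness plus unanimity. Starting from a profile where agent $1$ peaks at $y'$ but receives $\tilde{y}<y'$, it switches the other agents one at a time to peak-$\tilde{y}$ preferences (strategyproofness keeps the outcome at $\tilde{y}$), then shows that a different peak-$y'$ type $R'_1$ of agent $1$ --- using that $y^{\circ}\in Y^*_\emptyset$ is forceable and that unanimity caps the outcome at $y'$ after switching the others to peak-$y'$ types --- obtains an outcome in $(\tilde{y},y']$, contradicting strategyproofness. That profile-switching construction is the missing idea.
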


\begin{lemma}\label{proof 3}
Assume without loss of generality that $P(\emptyset)=1$.
Following are all the actions in~$\overline{A^*_\emptyset}$.
\begin{enumerate}
\item
If $H^*<\infty$, then $\overline{A^*_\emptyset}$ contains an action $r$ with $\mathcal{R}_1(r)=\{R_1: \mbox{ ideal point of }R_1>H^*\}$ and
$Y(r)=[H^*,\infty)\cap\mathbb{Z}$.
\item
If $-\infty<L^*$, then $\overline{A^*_\emptyset}$ contains an action $l$ with $\mathcal{R}_1(l)=\{R_1: \mbox{ ideal point of }R_1<L^*\}$ and
$Y(l)=(-\infty,L^*]\cap\mathbb{Z}$.
\end{enumerate}
\end{lemma}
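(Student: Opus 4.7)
The plan is to classify each preference $R_1\in\mathcal R_1$ by which action at $\emptyset$ its truthtelling strategy selects, combining Lemma~\ref{proof 2} with the gradual-revelation form of OSP stated just below Theorem~\ref{theorem: revelation}. The dictatorial actions from Lemma~\ref{proof 2} will absorb exactly the preferences with ideal in $[L^*,H^*]\cap\mathbb Z$, and I aim to show that the remaining ideals are handled by at most one ``right'' action $r$ (when $H^*<\infty$) and one ``left'' action $l$ (when $L^*>-\infty$), having the stated $\mathcal R_1(\cdot)$ and $Y(\cdot)$.

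First, I would show that any $R_1$ with ideal $y^*\in[L^*,H^*]\cap\mathbb Z$ must truthfully pick the dictatorial $a_{y^*}$: unanimity on the all-ideals-$y^*$ profile forces outcome $y^*$; any other dictatorial action would yield $y'\ne y^*$ (contradicting unanimity), and a nondictatorial action would fail OSP against $a_{y^*}$ because $y^*$ is $R_1$'s unique top but cannot be the sole truthtelling outcome of a nondictatorial action. Repeating this argument with deviations to $a_{H^*}$ and $a_{L^*}$ gives $\mathcal R_1(a)\subseteq\{R_1:\text{ideal}>H^*\}\cup\{R_1:\text{ideal}<L^*\}$ for every $a\in\overline{A^*_\emptyset}$, and rules out any single $a\in\overline{A^*_\emptyset}$ from mixing both sides: the OSP-derived lower bound on the truthtelling-reachable set of $a$ under a right-side preference is incompatible with the low outcome that unanimity would place in that set via a left-side preference sharing the action.

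Next I would prove uniqueness of the right action (the left case is symmetric) and identify $r$. Suppose $r_1\ne r_2\in\overline{A^*_\emptyset}$ were both right-going. All preferences with a common ideal $y^*>H^*$ go to the same action (else OSP fails, using the unanimous-$y^*$ profile to produce $y^*$ on truthful paths through both actions while the nondictatorial chosen action also delivers some $y\ne y^*$). Hence the ideals $>H^*$ partition as $I_1\sqcup I_2$ with $\mathcal R_1(r_j)=\{R_1:\text{ideal}\in I_j\}$, and unanimity places each $y\in I_j$ on the truthtelling-reachable set of $r_j$. Setting $m_1=\min I_1<m_2=\min I_2$ WLOG, we have $m_2-1\in I_1$ (so $m_2-1$ is truthtelling-reachable from $r_1$); taking a left-leaning $R_1\in\mathcal R_1(r_2)$ with ideal $m_2$ and applying OSP against the deviation to $r_1$, the best-case outcome via $r_1$ under $R_1$ is at least $m_2-1$, and since only $m_2$ and $m_2-1$ are weakly preferred to $m_2-1$ under left-leaning ideal $m_2$, the truthtelling-reachable set from $r_2$ under $R_1$ is forced into $\{m_2-1,m_2\}$. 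By the Moulin generalized-median characterization of strategyproof unanimous rules on single peaked domains (which the SCF inherits through OSP-implementability), however, for suitable $R_{-1}$ the outcome $f(R_1,R_{-1})$ must fall outside $\{m_2-1,m_2\}$, a contradiction. Existence of $r$ when $H^*<\infty$ is then immediate from unanimity at any ideal $>H^*$, and combining everything yields $\mathcal R_1(r)=\{R_1:\text{ideal}>H^*\}$, $Y(r)\supseteq(H^*,\infty)\cap\mathbb Z$ from unanimity, $Y(r)\subseteq[H^*,\infty)\cap\mathbb Z$ from the OSP bound above, and $H^*\in Y(r)$ from Property~\ref{prop: force-asap-and-reveal} applied to the first move in the subtree after $r$.

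The main obstacle is the uniqueness step, because OSP at $\emptyset$ compares outcomes under \emph{truthtelling} rather than over the unrestricted $Y(r_j)$, so unanimity must be used to extract concrete truthful outcomes along paths through each $r_j$. The delicate part is establishing that the truthtelling-reachable set from $r_2$ under a single left-leaning preference cannot shrink to $\{m_2-1,m_2\}$; this requires leveraging the Moulin structure (available since OSP-implementability implies ordinary strategyproofness and unanimity on single peaked domains) to exhibit a specific $R_{-1}$ whose truthful outcome violates the derived bound.
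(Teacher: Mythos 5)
Your overall architecture (classify preferences by ideal point, show the middle ideals take dictatorial actions, then establish existence and uniqueness of a single right-going and a single left-going action) parallels the paper's, but the load-bearing step — uniqueness of the right action — has a genuine gap. You reduce it to the claim that, once OSP forces every truthful outcome under the left-leaning peak-$m_2$ preference $R_1$ into $\{m_2\!-\!1,m_2\}$, Moulin's characterization of strategyproof unanimous rules must produce some $R_{-1}$ with $f(R_1,R_{-1})\notin\{m_2\!-\!1,m_2\}$. That claim is false as a consequence of strategyproofness and unanimity alone: the min--max rule $f(p_1,p_2)=\min\bigl(p_1,\max(p_2,m_2\!-\!1)\bigr)$ is strategyproof, unanimous, lets agent $1$ force every outcome in $[L^*,H^*]$ (indeed everything up to $m_2\!-\!1$), and satisfies $f(m_2,p_2)\in\{m_2\!-\!1,m_2\}$ for \emph{every} $p_2$. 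So the constraint you derived is perfectly compatible with the Moulin structure, and the contradiction cannot be extracted from the social choice function alone; it has to come from an OSP comparison that pits the two right-going actions against each other. This is exactly what the paper's Lemma~\ref{proof 1} supplies: if two distinct actions $a,a'$ at $\emptyset$ both reach $[H^*\!+\!1,\infty)$, one finds adjacent outcomes $y\in Y(a)$, $y+1\in Y(a')$ with $y>H^*$ and a preference $R_1:y,y+1$; whichever action truthtelling picks, the worst case under it is worse than the best case ($y$ or $y+1$) under the other, unless $\{y,y+1\}$ meets $Y^*_\emptyset=[L^*,H^*]$ — impossible since $y>H^*$. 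Your proof needs this (or an equivalent two-action OSP argument) in place of the appeal to Moulin.

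Two smaller points. First, your justification of $H^*\in Y(r)$ via Property~\ref{prop: force-asap-and-reveal} does not work as stated; the paper instead takes $R_1:H^*\!+\!1,H^*$, notes truthtelling must choose $r$, uses $H^*\!+\!1\notin Y^*_\emptyset$ to get a truthful worst case different from $H^*\!+\!1$, and compares with the deviation that forces $H^*$. Second, you never treat the possibility that the right-going and left-going actions coincide when both $L^*>-\infty$ and $H^*<\infty$; the paper rules this out with Lemma~\ref{proof 2.5}, and some argument is needed there since your side-separation claim in the first step is asserted rather than proved.
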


\begin{lemma}\label{proof 4}
Let $h'$ and $h=(h',a')$ be two consecutive histories of $M$. Assume that $F:=\max Y^*_{h'}\in\mathbb{Z}$ and that $\mathcal{R}_k(h)$ contains (not necessarily exclusively) all preferences with peak $>F$, for every $k\in N$.
If $Y(h) = [F,\infty)\cap\mathbb{Z}$, then (precisely) one of the following holds:
\begin{enumerate}
\item
$Y^*_h=\{F\}$ and $\overline{A^*_h}=\{r\}$ with $Y(h,r)=[F,\infty)\cap\mathbb{Z}$,
\item
$Y^*_h=\{F,F+1\}$ and $\overline{A^*_h}=\{r\}$ with $Y(h,r)=[F+1,\infty)\cap\mathbb{Z}$, or
\item
$Y^*_h=\{F+1\}$ and $\overline{A^*_h}=\{a,r\}$ with $Y(h,a)=\{F,F+1\}$ and $Y(h,r)=[F+1,\infty)\cap\mathbb{Z}$.
\end{enumerate}
\end{lemma}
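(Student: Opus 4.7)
Let $i \eqdef P(h)$. By hypothesis, $\mathcal{R}_i(h)$ (and $\mathcal{R}_k(h)$ for every~$k$) contains every single-peaked preference with peak~$>F$, and $Y(h)=[F,\infty)\cap\mathbb{Z}$. My plan is to use obvious incentive compatibility for agent~$i$ together with Property~\ref{prop: force-asap-and-reveal} to first show that $Y^*_h$ is a nonempty interval contained in $\{F,F+1\}$, and then to read off the structure of $\overline{A^*_h}$ in each resulting case.

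For the first step, I would adapt the argument of Lemma~\ref{proof 2} (together with the supporting Lemmas~\ref{proof 1}--\ref{proof 2.5}) to the subgame rooted at~$h$: the same reasoning, now applied with outcome space $[F,\infty)\cap\mathbb{Z}$ in place of $\mathbb{Z}$, shows that $Y^*_h$ is a nonempty subinterval of $[F,\infty)\cap\mathbb{Z}$. I would then rule out every $F+k\in Y^*_h$ with $k\ge 2$ by contradiction. If $a^*$ is a dictatorial action with $Y(h,a^*)=\{F+k\}$, Property~\ref{prop: force-asap-and-reveal} forces $\mathcal{R}_i(h,a^*)$ to be a singleton $\{R^*\}$, and obvious dominance for $R^*$ forces the peak of $R^*$ to be exactly $F+k$, since otherwise a deviation from $a^*$ to a neighboring action reaches some outcome in $[F,\infty)\cap\mathbb{Z}$ that is strictly closer to that peak than~$F+k$. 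Next, apply obvious dominance at~$h$ to agent~$i$'s peak-$(F+k-1)$ preference, which also belongs to $\mathcal{R}_i(h)$: its truthtelling action $a_0\ne a^*$ faces a deviation to $a^*$ yielding~$F+k$ at distance one from the peak, so the worst-case truthtelling outcome under $a_0$ must itself sit within distance one of $F+k-1$. Cascading this constraint downward for peaks $F+k-2,\ldots,F+1$ while combining with the interval structure already obtained contradicts the hypothesis $Y(h)=[F,\infty)\cap\mathbb{Z}$.

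Having $Y^*_h\subseteq\{F,F+1\}$, I would split into the three cases $Y^*_h\in\{\{F\},\{F,F+1\},\{F+1\}\}$. In each case, the outcomes in $Y(h)\setminus Y^*_h$ must be reached by actions in $\overline{A^*_h}$, and applying obvious dominance to agent $i$'s peak-$(F+m)$ preferences for arbitrarily large~$m$ rules out more than one ``right continuation'' action and pins down its outcome set to $[F,\infty)\cap\mathbb{Z}$ in Case~1 and to $[F+1,\infty)\cap\mathbb{Z}$ in Cases~2 and~3. In Case~3, where $F\notin Y^*_h$ while $F\in Y(h)$, a second nondictatorial action $a$ is forced, and obvious dominance for the next mover's peak-$F$ and peak-$(F+1)$ preferences (available inside $\mathcal{R}(h,a)$ via Property~\ref{prop: labeled-actions}) constrains $Y(h,a)$ to equal exactly $\{F,F+1\}$. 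The main obstacle is the detailed OSP bookkeeping needed in the cascade that rules out $F+k\in Y^*_h$ for $k\ge 2$: for every hypothetical offending outcome one must construct explicit preference profiles in $\mathcal{R}(h)$ witnessing an obvious-dominance violation, and Property~\ref{prop: force-asap-and-reveal} is the key tool throughout, since it forces every dictatorial action to come with a complete revelation of agent~$i$'s preference and thus lets us tag each dictatorial action with a unique peak.
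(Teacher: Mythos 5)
Your first and decisive step --- ruling out $F+k\in Y^*_h$ for $k\ge 2$ --- has a genuine gap: you never use the hypothesis $F=\max Y^*_{h'}$, and without it the claim is false, so no argument confined to agent $i=P(h)$ and the subtree rooted at $h$ can succeed. Concretely, consider a continuation in which $i$ is a full dictator over $Y(h)=[F,\infty)\cap\mathbb{Z}$ at $h$ (every policy in $[F,\infty)$ forceable, each forcing action revealing $i$'s preference as Property~\ref{prop: force-asap-and-reveal} requires). This is obviously incentive compatible for $i$ at $h$ and at every later history, yet it has $Y^*_h\ni F+2$. The only thing wrong with it is the incentive of the \emph{previous} mover $j=P(h')$: since $F\in Y^*_{h'}$, agent $j$ with preference $F\!+\!1,F$ could have forced $F$ at $h'$, but by continuing into $h$ may end up with $F\!+\!2$ or worse. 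This is exactly the paper's argument --- it takes $R_j:F\!+\!1,F$, all other agents' peaks at $F\!+\!1$, and $R_i\in\mathcal{R}_i(h,a')$ for a hypothetical action $a'$ with $Y(h,a')\subseteq[F\!+\!2,\infty)$, and derives a violation of obvious dominance for $j$ at $h'$. Your cascade, which only invokes $i$'s deviations at $h$, cannot reach this contradiction.

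The cascade is also internally shaky even as a within-$h$ argument: ``the worst-case truthtelling outcome must sit within distance one of $F+k-1$'' is not what obvious dominance gives you, because a single-peaked preference with peak $p$ may rank $p-1$ above or below $p+1$ (so ``distance'' is the wrong metric), and the constraint you derive concerns the worst case of one particular preference's entire truthtelling behavior from $h$ onward --- it neither implies $F+k-1\in Y^*_h$ nor transfers to the truthtelling behavior of the next preference in the chain. Your subsequent case analysis (uniqueness of the right-continuation action $r$ via Lemma~\ref{proof 1}, pinning down $Y(h,r)$ and, in Case 3, $Y(h,a)=\{F,F\!+\!1\}$) is broadly in line with the paper's, though there the relevant incentive constraints are again those of $i=P(h)$, not of ``the next mover.'' To repair the proof, replace the cascade with the $h'$-based argument above; the rest can then be salvaged.
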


\noindent The ``mirror version'' of Lemma~\ref{proof 4}) holds for the left: 

\begin{lemma}\label{proof 5}
Let $h'$ and $h=(h',a')$ be two consecutive histories of $M$. Assume that $F:=\min Y^*_{h'}\in\mathbb{Z}$ and that $\mathcal{R}_k(h)$ contains (not necessarily exclusively) all preferences with peak $<F$, for every $k\in N$.
If $Y(h) = (-\infty,F]\cap\mathbb{Z}$, then (precisely) one of the following holds:
\begin{enumerate}
\item
$Y^*_h=\{F\}$ and $\overline{A^*_h}=\{l\}$ with $Y(h,l)=(-\infty,F]\cap\mathbb{Z}$,
\item
$Y^*_h=\{F,F-1\}$ and $\overline{A^*_h}=\{l\}$ with $Y(h,l)=(-\infty,F-1]\cap\mathbb{Z}$, or
\item
$Y^*_h=\{F-1\}$ and $\overline{A^*_h}=\{a,l\}$ with $Y(h,a)=\{F,F-1\}$ and $Y(h,l)=(-\infty,F-1]\cap\mathbb{Z}$,
\end{enumerate}
\end{lemma}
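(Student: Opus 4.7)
The plan is to deduce Lemma~\ref{proof 5} directly from Lemma~\ref{proof 4} via the reflection symmetry of the single peaked domain. Let $\sigma:\mathbb{Z}\to\mathbb{Z}$ denote the order-reversing involution $\sigma(y)=-y$. Extend $\sigma$ to single peaked preferences by $x\,\sigma(R_i)\,y \iff \sigma(x)\,R_i\,\sigma(y)$; this sends a single peaked preference with ideal point $y^*$ to a single peaked preference with ideal point $-y^*$, and is an involution on the space of such preferences.

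Given a mechanism $M$ satisfying the hypotheses of Lemma~\ref{proof 5}, construct the reflected mechanism $\sigma(M)$ by keeping the game tree, information partitions, and player function identical, but relabeling each terminal outcome $y$ with $\sigma(y)$ and each preference $R_i$ appearing in any label $\mathcal{R}_i(h)$ with $\sigma(R_i)$. Since $y\,R_i\,y'$ holds iff $\sigma(y)\,\sigma(R_i)\,\sigma(y')$ holds, every outcome comparison used to certify obvious dominance of truthtelling in $M$ carries over verbatim to $\sigma(M)$, and every partition and singleton property of gradual revelation transfers as well. Hence $\sigma(M)$ is itself an obviously incentive compatible gradual revelation mechanism on the single peaked domain. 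The derived sets $Y(h)$, $[y]_i$, $Y^*_h$, and $A^*_h$ transform covariantly: $Y(h)$ and $Y^*_h$ in $\sigma(M)$ are the $\sigma$-images of their counterparts in $M$, while the partition of $A(h)$ into dictatorial and nondictatorial actions is unchanged (only the associated outcome sets are reflected).

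Under this correspondence, the hypotheses of Lemma~\ref{proof 5} on $M$ translate into exactly the hypotheses of Lemma~\ref{proof 4} on $\sigma(M)$ with $F$ replaced by $-F$: one has $\max Y^*_{h'}=-F$ in $\sigma(M)$, each $\mathcal{R}_k(h)$ in $\sigma(M)$ contains all preferences with peak $>-F$, and $Y(h)=[-F,\infty)\cap\mathbb{Z}$ in $\sigma(M)$. Apply Lemma~\ref{proof 4} to $\sigma(M)$ at $h$ and push each of its three conclusions back through $\sigma$: a continuation action $r$ with $Y(h,r)=[-F,\infty)\cap\mathbb{Z}$ or $[-F+1,\infty)\cap\mathbb{Z}$ in $\sigma(M)$ becomes a continuation action $l$ with $Y(h,l)=(-\infty,F]\cap\mathbb{Z}$ or $(-\infty,F-1]\cap\mathbb{Z}$ in $M$; the sets $Y^*_h=\{-F\}$, $\{-F,-F+1\}$, $\{-F+1\}$ become $\{F\}$, $\{F,F-1\}$, $\{F-1\}$; and an arbitration action $a$ with $Y(h,a)=\{-F,-F+1\}$ becomes an arbitration action $a$ with $Y(h,a)=\{F,F-1\}$. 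These three images are precisely the three cases stated in Lemma~\ref{proof 5}.

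The only genuine task is the mechanical verification that $\sigma$ defines a covariant transformation of all the mechanism-theoretic structure appearing in the hypotheses and conclusion of Lemma~\ref{proof 4}---namely the gradual revelation properties of Section~\ref{sec: revelation}, obvious dominance of truthtelling, and the definitions of $Y^*_h$ and $A^*_h$. This is immediate because $\sigma$ is an order-reversing bijection on $\mathbb{Z}$ inducing an order-reversing bijection on single peaked preferences, and every relevant definition is expressed purely in terms of preferences and outcomes. There is no conceptual obstacle; the only ``hard'' part is keeping track of the sign flips when transcribing the three cases of Lemma~\ref{proof 4} back through $\sigma$.
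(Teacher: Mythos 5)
Your proof is correct and matches the paper's approach: the paper disposes of Lemma~\ref{proof 5} with the single line ``completely analogous to the proof of Lemma~\ref{proof 4},'' and your reflection involution $\sigma(y)=-y$ is exactly the right formalization of that analogy (the standing assumptions from the opening of Appendix~\ref{app: single peaked} --- unanimity and the maximality property on $|\overline{A^*_h}|$ --- are also manifestly $\sigma$-invariant, so applying Lemma~\ref{proof 4} to $\sigma(M)$ is legitimate).
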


The characterization follows from Lemmas~\ref{proof 2} through~\ref{proof 5}: By Lemma~\ref{proof 3}, if the dictator is not happy with any
option he can force, then he chooses $l$ (left) or $r$ (right), according to where his ideal point lies. Assume w.l.o.g.\ that he chooses to go
right. Then initialize $F=H^*$, and by Lemma~\ref{proof 4} (if he chooses to go left, then Lemma~\ref{proof 5} is used), some other player
is given one of the following three choice sets.
\begin{enumerate}
\item
Action $1$: Force $F$, Action $2$: continue, where $F$ (and everything higher) is still ``on the table''.
\item
Action $1$: Force $F$, Action $2$: force $F\!+\!1$, Action $3$: continue, where only $F\!+\!1$ (and everything higher) is on the table.\\
(So in this case, this agent is the last to be able to stop at $F$ and the first to be able to stop at $F\!+\!1$.)
\item
Action $1$: Force $F\!+\!1$, Action $2$: restrict to $F,F\!+\!1$ (``arbitrate,'' from here must start an onto OSP mechanism that chooses between these two
options, i.e., a proto-dictatorship), Action~$3$: continue, where only $F\!+\!1$ (and everything higher) is on the table.
\end{enumerate}
If this agent chooses continue while keeping $F$ on the table, then some other agent is given one of these three choice sets. If, alternatively, this agent chooses to continue with only $F\!+\!1$ (or higher) on the table, then $F$ is incremented
by one and some other agent is given the one of these three choice sets with the ``new'' $F$.

For any $F \ge H^*$, let $D_F$ be the set of players that were given the option to force $F$ as outcome. ($D_{H^*}$ includes the dictator by
definition.) We claim that $D_F$ is nondecreasing in~$F$. Indeed, for any player who was given the option to force the outcome to be $F$ but not
to force the outcome to be $F\!+\!1$, we have a contradiction w.r.t.\ the preferences that prefer $F\!+\!1$ the most and $F$ second, as by unanimity she
cannot force $F$, but therefore she may end up with $F\!+\!2$ or higher. Finally, note that for any given $F$, only one player can choose to arbitrate
between $F$ and $F\!+\!1$, and since that player can force $F\!+\!1$ at that point, by strategyproofness it follows that she was not given the option to
force $F$ before that, and so the history at which he was allowed to choose ``arbitrate'' was the first history at which he was given any
choice.

\section{Combinatorial Auctions}\label{sec: quasilinear}

In a combinatorial auction, there are $m>0$ goods and $n>1$ agents, called bidders.
In such a setting, an outcome is the allocation of each good to some bidder along with a specification of how much to charge each bidder. 
Each bidder has a nonnegative integer valuation for each bundle of goods, and bidder preferences are represented by utilities that are quasilinear in money: the utility of each bidder from an outcome is her valuation of the subset of the goods that is awarded to her, minus the payment she is charged.
We assume that the possible set of valuations contains (at least) all additive ones: where a bidder simply values a bundle of goods at the sum of her valuations for the separate goods in the bundle.
In this setting, it is customary to define Pareto optimality with respect to the set containing not only all bidders, but also the auctioneer who receives the revenue from the auction. (Otherwise no Pareto optimal outcome exists, as the auctioneer can always pay more and more money to each bidder.) Under this definition, and assuming that goods are worthless to the auctioneer if unsold, Pareto optimality is equivalent to welfare maximization: each good is awarded to a bidder who values it most. Furthermore, when considering combinatorial auctions, it is customary to also require that losers pay nothing.

\citet{Li2015} shows that if $m=1$, then an ascending-price implementation of a second-price auction (which is Pareto optimal and charges losers nothing) is obviously strategyproof. We will now show that this is as far as these properties can be stretched in combinatorial auctions, i.e., that for $m>1$, no social choice function satisfies these properties. In particular, even when all valuations are additive, VCG with the Clarke pivot rule \citep{Vickrey1961,Clarke1971,Groves1973} is not OSP-implementable. (Due to discreteness of the valuation space, there are other Pareto optimal and incentive compatible social choice functions that charge losers nothing beside VCG with the Clarke pivot rule.\footnote{E.g., modifying VCG with the Clarke pivot rule so that any winner who pays a positive amount gets a discount of half a dollar, does not hurt Pareto optimality or strategyproofness (and still charges losers nothing) if all valuations are restricted to be integers.})

\begin{theorem}\label{theorem: quasilinear}
For $m\ge2$ goods, no Pareto optimal (or equivalently, welfare maximizing) social choice function that charges losers nothing is OSP-implementable.
\end{theorem}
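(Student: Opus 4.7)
The plan is to invoke the gradual revelation principle (Theorem~\ref{theorem: revelation}) to reduce to an obviously incentive compatible gradual revelation mechanism $M$, and then, at a carefully chosen history, reproduce the wine-and-violin contradiction sketched in the Introduction. It suffices to handle $m=2$ goods $A, B$ (for $m\ge 3$, restrict to profiles in which all bidders value the additional goods at zero) and three bidders (padding with zero-valuation dummies as needed).

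First, I would compare two additive valuations for bidder~$1$: the ``heavy'' $v_1=(K,K)$ for a large integer $K$, and the ``light'' $v'_1=(1,1)$. Against the adversarial competitor profile $R^{\text{high}}_{-1}=((K-1,0),(0,K-1))$, Pareto optimality forces $\scf(v_1,R^{\text{high}}_{-1})$ to award both goods to bidder~$1$, while $\scf(v'_1,R^{\text{high}}_{-1})$ awards $A$ to bidder~$2$ and $B$ to bidder~$3$. Since these outcomes differ and bidders~$2, 3$ play identical truthful behaviors across the two profiles, the two truthful paths in $M$ must diverge at a history $h^*$ at which bidder~$1$ moves, with $\mathbf{T}_1(v_1)(h^*)\ne\mathbf{T}_1(v'_1)(h^*)$.

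Next, I would apply obvious dominance at $h^*$ with true preference $v_1$. The witness $(v_1,R^{\text{high}}_{-1})\in\mathcal{R}(h^*)$ shows that the worst case under truthtelling is bounded by bidder~$1$'s utility there, which by a VCG-style pinning of payments --- obtained from strategyproofness and losers-pay-nothing together with a comparison against a slight downward misreport that would turn bidder~$1$ into a loser on one good --- is a small constant, bounded independently of $K$. For the best case under the deviation $\mathbf{T}_1(v'_1)(h^*)$, the target witness is $R'=(v'_1,R^{\text{zero}}_{-1})$ with $R^{\text{zero}}_{-1}=((0,0),(0,0))$: Pareto optimality forces bidder~$1$ to win both goods (she is the only bidder with positive value), and strategyproofness applied to a sequence of vanishingly small additive reports by bidder~$1$ against zero competitors pins the payment to~$0$. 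Her utility under true $v_1$ is then $2K$, strictly larger than the worst-case bound for $K$ sufficiently large, contradicting obvious dominance.

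The main obstacle, which I expect to be the crux of the formal proof, is verifying the consistency condition $R^{\text{zero}}_{-1}\in\mathcal{R}_{-1}(h^*)$ required for this witness. My plan to handle it is to exploit the freedom in the choice of the pair $(v_1,v'_1)$: among all additive pairs that induce distinct welfare-maximizing outcomes against $R^{\text{high}}_{-1}$, select one whose first bidder-$1$ divergence in $M$ occurs earliest. At this minimal $h^*$ no prior competitor move has yet been informed by a split of bidder~$1$'s type (since bidder~$1$'s first split between the two valuations occurs precisely at $h^*$), and I would then argue --- using Properties~4 and~5 of gradual revelation mechanisms together with Pareto optimality and losers-pay-nothing --- that no such move can have separated a competitor's zero valuation from her adversarial one, as any such separation would force the resulting cells to have outcome sets incompatible with losers-pay-nothing. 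The extension to $m\ge 3$ goods and more bidders follows by the same construction together with zero-padding.
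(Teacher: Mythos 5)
There is a genuine gap, and it sits exactly where you predict: the consistency condition $R^{\text{zero}}_{-1}\in\mathcal{R}_{-1}(h^*)$. Your proposed fix --- choosing the pair $(v_1,v_1')$ whose first bidder-$1$ divergence occurs earliest, and then arguing that no prior competitor move can have separated a zero valuation from an adversarial one --- does not work. A gradual revelation mechanism is free to query the competitors \emph{before} bidder~$1$ ever moves, and such queries can separate $(0,0)$ from $(K-1,0)$ without violating obvious strategyproofness, Pareto optimality, or losers-pay-nothing: for instance, an ascending-clock question to bidder~$2$ of the form ``do you drop out of good $A$ at price $1$?'' is obviously-dominantly answerable and splits precisely these two types. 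After such a move, every history on the truthful path of $(v_1,R^{\text{high}}_{-1})$ excludes $R^{\text{zero}}_{-1}$, so your best-case witness is unavailable at $h^*$. The minimization over pairs $(v_1,v_1')$ is powerless here because the competitors' early moves do not depend on which bidder-$1$ pair you chose. Properties~4 and~5 of gradual revelation mechanisms constrain how bidder~$1$'s own type space is partitioned at her nodes; they place no useful restriction on how the opponents' types have already been partitioned by the time $h^*$ is reached.

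The paper's proof avoids this obstacle by a different device: it prunes the domain to exactly three additive types $\{(20,2),(8,8),(6,6)\}$ per bidder (with $n=2$) and analyzes the \emph{first mover at the root}, where every opponent profile is trivially consistent, so both the worst-case and best-case witnesses are automatically admissible. The cost of that route is a case analysis: with three types and at least two actions, pigeonhole isolates one type, but one does not control \emph{which} type is isolated, so the three types must be crafted so that each of them has a profitable-looking deviation against the corresponding best-case opponent; this is why the paper derives payment bounds for all nine type pairs up front. A secondary, more patchable, looseness in your write-up is the ``VCG-style pinning'' of payments: with integer valuations and possible ties (e.g., a $(1,1)$ bidder facing all-zero opponents cannot force herself to lose), the payment is not pinned to $0$ and must be bounded by explicit strategyproofness comparisons against reports that make the bidder strictly lose, as the paper does. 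But the fatal issue is the witness-consistency step, which your argument does not close.
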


It is enough to prove Theorem~\ref{theorem: quasilinear} for $m=2$ goods and $n=2$ bidders, as this is a special case of any case with more goods and/or more bidders.
The proof is by contradiction: we restrict (prune, in the language of \citet{Li2015}) the preference domain to consist of precisely three specifically crafted types $t_1,t_2,t_3$, each corresponding to an additive valuation. We show that even after this restriction, whichever agent who moves first has no obviously dominant strategy. (Regardless of whether VCG or some other social choice function satisfying the above properties is implemented.)
Say that agent~$1$ moves first. Since she has more than one possible action, then one of her actions, say $a$, is chosen by only one of her types $t_i$ (i.e., the other two types do not choose $a$). If agent 1 of type $t_i$ chooses~$a$, then she obtains small (or zero) utility under the worst-case scenario of the other agent turning out to be of the same type. It is then key to craft the three possible types such that for each type $t_i$ (where $i\in \{1,2,3\}$) there is a deviation $t_j$ with $j\ne i$ such that agent~1 of type $t_i$ obtains rather high utility pretending to be $t_j$ when the other agent declares herself to be of the best-case type $t_k$ for which $t_i$'s utility is maximized.
This proof contains elements not found in previous pruning proofs \citep{Li2015,AG2015}, both due to it ruling out a range of social choice functions rather than a single one (therefore working with bounds on, rather than precise quantification of, the utility and payment for every preference profile), and since while all previous such proofs restrict to a domain of preferences of size $2$, this proof restricts to a domain of preferences of size~$3$, which requires a qualitatively more elaborate argument. The full details of the proof are relegated to Appendix~\ref{app: quasilinear}.

\section{House Matching}\label{sec: house-matching}

In a house matching problem, the set of outcomes $Y$ consists of all one-to-one perfect matchings between 
agents in $N$ and houses in a given set $O$ with $|O|\ge|N|$.\footnote{The assumptions of a perfect matching (i.e., that all agents must be matched) and of $|O|\ge|N|$ are for ease of presentation. See Appendix~\ref{homeless-bosses} for a discussion on how our analysis extends if this is not the case.} Each agent only cares about the house she is matched with. 
In this section, we show that a Pareto optimal social choice function for this domain is
OSP-implementable if and only if it is implementable via \textbf{sequential barter with lurkers}. 

To make our definition of sequential barter with lurkers more accessible, while at the same time facilitating the comparison with other results on the OSP-implementability of matching mechanisms, we first define \textbf{sequential barter} --- without lurkers. Sequential barter establishes matchings in trading rounds. In each such round, each agent points to her preferred house. Differently from the trading cycles mechanisms in the literature, 
 houses point to agents gradually. As long as no agent is matched, the mechanism chooses an increasing set of houses and has them point to agents.  
 These choices may be 
 based on the preferences of already-matched agents. At any round, at most two agents are pointed at. Once a cycle forms, the agents and houses in that cycle are matched. Consequently, all houses that pointed to agents matched in this step reenter the process.

With an eye toward the definition of sequential barter \emph{with lurkers}, it is instructive to also consider the following equivalent description of sequential barter (without lurkers). A mechanism is a sequential barter mechanism if and only if it is equivalent to a mechanism of the following form:

\paragraph{Sequential Barter}
\begin{enumerate}
\item
Notation: The sets $O$ and $T$ respectively track the set of all unmatched houses and the set of all active traders.\footnote{An invariant of the mechanism is that $|T|\le2$.} For each active trader $i$, the set $D_i\subseteq O$ tracks the set of houses that $i$ was endowed with (i.e., offered to choose from).
\item
Initialization:
$O$ is initialized to be the set of all houses; $T\gets \emptyset$.\\
So, at the outset all houses are unmatched, and there are no active traders.
\item
Mechanism progress: as long as there are unmatched agents, perform an \textbf{endowment step}.
\begin{itemize}
\item
\textbf{Endowment step}:
\begin{enumerate}
\item
Choose\footnote{All choices in the mechanism may depend on all preferences already revealed.} an unmatched agent $i$, where $i$ must be in $T$ if $|T|=2$.
\item
If $i\notin T$, then initialize $D_i\gets\emptyset$ and update $T\gets T\cup\{i\}$.
\item
Choose some $\emptyset\ne H\subseteq O\setminus D_i$.
\item
Update $D_i\gets D_i\cup H$ and perform a \textbf{question step} for $i$.
\end{enumerate}
\item
\textbf{Question step} for an agent $i\in T$:
\begin{enumerate}
\item
Ask $i$ whether the house she prefers most among $O$ is in $D_i$.  If so, then ask~$i$ which house that is, and perform a \textbf{matching step}  for $i$ and that
house.\\
(If not, then the current mechanism round ends, and a new \textbf{endowment step} is initiated.)
\end{enumerate}
\item
\textbf{Matching step} for an agent $i\in T$ and a house $o$:
\begin{enumerate}
\item
Match $i$ and $o$.
\item
Update $T\gets T\setminus\{i\}$ and $O\gets O\setminus\{o\}$.
\item
$i$ discloses her full preferences to the mechanism.
\item
If $T\ne\emptyset$, then for the unique agent $j\in T$:
\begin{enumerate}
\item
If $o \in D_j$, then set $D_j\gets O$.
\item
Perform a \textbf{question step} for $j$.
\end{enumerate}
(If $T=\emptyset$, then the current mechanism round ends, and a new \textbf{endowment step} is initiated unless there are no more unmatched agents.)
\end{enumerate}
\end{itemize}
\end{enumerate}

All previously known OSP-implementable and Pareto optimal mechanisms for house matching are special cases of sequential barter. 
\citet{Li2015} already shows that the popular (and Pareto optimal) top trading cycles (TTC) mechanism \citep{ShapleyScarf1974} is not obviously strategyproof, yet that serial dictatorship is.  \citet{AG2015} is the first paper to follow-up on \citet{Li2015} and apply obvious strategyproofness. 
Studying marriage problems, they show that 
 no stable matching mechanism is obviously strategyproof for either the men or the women. Due to the overlap between unilateral and bilateral matching theory, the analysis of \citeauthor{AG2015} also  applies to the house matching domain studied here. 
They in particular show that the following generalizations of bipolar serially dictatorial rules \citep{BDE05} can be OSP-implemented: At each mechanism step, either choose an agent and give her free choice among all unmatched houses, or choose two agents, partition all unmatched houses into two sets, and each of the agents gets priority in one of the sets, i.e., gets free pick from that set. If any agent chooses from her set, then the other gets to pick from all remaining houses. If both agents did not choose from their sets, then each gets her favorite choice (which is in the set of the other). \citet{Troyan2016} generalizes even further by showing that any top trading cycles mechanism \cite{AS1999} where at any given point in time no more than two agents are pointed to, is OSP-implementable.  \citet{Troyan2016} also shows that no other TTC mechanism is OSP-implementable.

We now relate our work to \citet{Pycia2016}, which came out several months before our paper, and to \citet{PyciaTroyan2016}, which subsumed \citet{Pycia2016} and came out a couple of days before our paper. \citet{Pycia2016} considers a condition somewhat stronger than OSP, called strong OSP.\footnote{Strong OSP was implicitly assumed in \citet{Pycia2016}; this assumption was made explicit in \citet{PyciaTroyan2016}.} \citet{Pycia2016}
characterizes the sets of matching mechanisms that respectively are strong-OSP imple\-mentable and strong-OSP implementable as well as Pareto optimal as bossy variants of serial dictatorship.
\citet{Pycia2016} uses this result to show that random serial dictatorship is the unique symmetric and efficient rule satisfying strong OSP.
The results added in \citet{PyciaTroyan2016} consider OSP (rather than strong OSP) mechanisms on sufficiently rich preference domains without transfers.\footnote{Their richness condition holds for house matching. However, it does not hold for single peaked preferences with more than two possible outcomes (the setting of Section~\ref{sec: single peaked}), and the assumption of no transfers does not hold for combinatorial auctions (the setting of Section~\ref{sec: quasilinear}).} The first version of that paper claimed that any efficient OSP mechanism under their conditions is equivalent to what we call sequential barter (without lurkers). This result was used to characterize random serial dictatorship
as the unique symmetric and efficient OSP mechanism under their conditions.
Responding to our Theorem~\ref{theorem: houses}, which in particular identifies efficient OSP mechanisms that cannot be represented as sequential barter, \citeauthor{PyciaTroyan2016}'s subsequent versions replace their original claim with a correct, \mbox{nonconstructive} characterization of OSP mechanisms under their conditions. Their proof runs along lines roughly similar to a combination of part of our Theorem~\ref{theorem: revelation} (see Footnote~\ref{fn: gradual-revelation-properties}) and of our Lemma~\ref{lemma: one-undetermined} (this machinery already existed in the first version of \citet{PyciaTroyan2016}). Their correct, updated result implies our Theorem~\ref{theorem: two options} on majority voting and indeed recent versions of their paper explicitly state an equivalent result. Their characterization of random serial dictatorship holds via this updated result.

To see that some Pareto optimal and OSP mechanisms cannot be represented as sequential barter, consider the mechanism represented Fig.~\ref{fig-lurker}, where three traders are active at some history.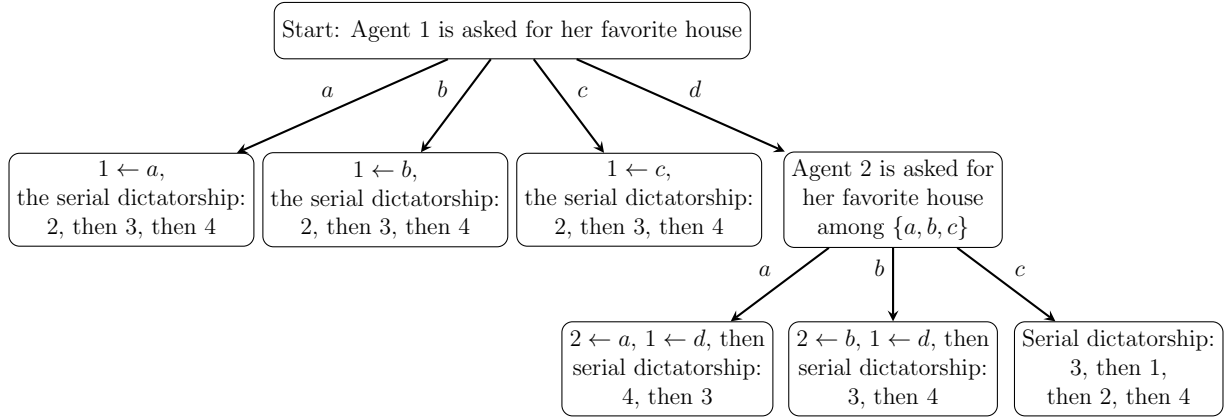
\begin{figure}[ht]
\centering
\begin{tikzpicture}[scale=0.75, every node/.style={transform shape},level distance=30mm]
\tikzstyle{level 1}=[sibling distance=45mm]
\tikzstyle{level 2}=[sibling distance=40mm]
\node [internal] {Start: Agent 1 is asked for her favorite house}
    child {node [internal,align=center] {$1\gets a$,\\the serial dictatorship:\\$2$, then $3$, then $4$}
        edge from parent [arrow] node[above left]{$a$}
    }
    child {node [internal,align=center] {$1\gets b$,\\the serial dictatorship:\\$2$, then $3$, then $4$}
        edge from parent [arrow] node[above left]{$b$}
    }
    child {node [internal,align=center] {$1\gets c$,\\the serial dictatorship:\\$2$, then $3$, then $4$}
        edge from parent [arrow] node[above right]{$c$}
    }
    child [arrow] {node [internal,align=center] {Agent $2$ is asked for\\her favorite house\\among $\{a,b,c\}$}
        child {node [internal,align=center] {$2\gets a$, $1\gets d$, then\\serial dictatorship:\\$4$, then $3$}
          edge from parent [arrow] node[above left]{$a$}
        }
        child {node [internal,align=center] {$2\gets b$, $1\gets d$, then\\serial dictatorship:\\$3$, then $4$}
          edge from parent [arrow] node[above left]{$b$}
        }
        child {node [internal,align=center] {Serial dictatorship:\\$3$, then $1$,\\then $2$, then $4$}
          edge from parent [arrow] node[above right]{$c$}
        }
        edge from parent [arrow] node[above right]{$d$}
    };
\end{tikzpicture}
\caption{An OSP and Pareto optimal mechanism for four agents $1,2,3,4$ and four houses $a,b,c,d$, with three active agents when agent $3$ chooses at the bottom-right.}\label{fig-lurker}
\end{figure}
The (bossy) mechanism in Fig.~\ref{fig-lurker} starts by offering agent $1$ to claim any house among $a$, $b$, and~$c$.
The crux of this mechanism is in that if agent $1$ chooses not to claim any of these houses, then the mechanism can deduce that agent $1$ prefers house $d$ the most, and so at this point a match between agent $2$ and house $d$ can be ruled out without violating Pareto optimality. Even though agent $2$ moves before agent $3$, the competition over house $d$ is now only between agents $1$ and $3$.
This allows the three agents $1$, $2$, and $3$ to be active at the same time. We note that this mechanism had to be crafted in quite a delicate manner to maintain obvious strategyproofness beyond this point: if agent $3$ chooses $d$, then agent $1$ must --- to maintain strategyproofness --- once again be offered to choose between $a$, $b$, and $c$; conversely, if agent $3$ chooses a different house, then since this house may be agent 1's second-most preferred house, to maintain strategyproofness in this case agent 1 must be matched with house $d$.

So, if a mechanism deduces that some agent prefers some house the most (the only way to deduce this without violating OSP and without offering this house to this agent is to offer all other possible houses to this agent; in this case, we say that this agent is a \emph{lurker} for that house), then the mechanism may decide not to allow some other agents to ever get this house, and this allows the introduction of additional traders (beyond two traders) under certain delicate constraints. We are now ready to present our characterization of OSP-implementable and Pareto optimal social choice functions.
A mechanism of sequential barter with lurkers is of the following form:

\paragraph{Sequential Barter with Lurkers}
\begin{enumerate}
\item
Notation: The sets $O$, $T$, $L$, and $G$ respectively track the set of all unmatched houses, the set of all active traders, the set of all lurkers (i.e., all active traders who lurk houses), and the set of all houses who don't have lurkers. For each active trader $i$, the sets $D_i$ and $O_i$ respectively track the set of houses that $i$ was endowed with (i.e., offered to choose from), and the set of houses that $i$ may possibly be matched to.
\item
Initialization:
$O$ is initialized to be the set of all houses; $T\gets \emptyset$,
$L\gets \emptyset$, and
$G\gets O$. So, at the outset all houses are unmatched, there are no active traders (including lurkers), and no house has a lurker.
\item
Mechanism progress: as long as there are unmatched agents, perform an \textbf{endowment step}.
\begin{itemize}
\item
\textbf{Endowment step}:
\begin{enumerate}
\item
Choose\footnote{All choices in the mechanism may depend on all preferences already revealed.} an unmatched agent $i$, where $i$ must be in $T$ if $|T\setminus L|=2$.
\item
If $i\notin T$, then:
\begin{enumerate}
\item
Initialize $D_i\gets\emptyset$.
\item
If $T\setminus L=\{j\}$ for some agent $j$ and $O_j\ne G$, then initialize $O_i\gets G$;\\
otherwise, initialize either $O_i\gets G$ or $O_i\gets O$.
\item
Update $T\gets T\cup\{i\}$.
\end{enumerate}
\item
Choose some $\emptyset\ne H\subseteq O_i\setminus D_i$ such that:\footnote{We constrain the mechanism so that it may only choose an agent $i$ in an endowment step if there is a nonempty set of houses $H$ with which the agent can be endowed (i.e., satisfying these constraints) at that step.}
\begin{itemize}
\item
If $H\setminus G \ne \emptyset$, then $H=O_i\setminus D_i$.
\item
If $O_t\ne G$ for $\{t\}=T\setminus(L\cup\{i\})$, then $H\cap D_t=\emptyset$.
\end{itemize}
\item
Update $D_i\gets D_i\cup H$ and perform a \textbf{question step} for $i$.
\end{enumerate}
\item
\textbf{Question step} for an agent $i\in T$:
\begin{enumerate}
\item
Ask $i$ whether the house she prefers most among $O_i$ is in $D_i$. If so, then ask~$i$ which house that is, and perform a \textbf{matching step}  for $i$ and that
house. If not, and if $i\notin L$, then perform a \textbf{sorting step} for $i$.
\end{enumerate}
\item
\textbf{Matching step} for an agent $i\in T$ and a house $o$:
\begin{enumerate}
\item
Match $i$ and $o$.
\item
Update $T\gets T\setminus\{i\}$ and $O\gets O\setminus\{o\}$, and $O_j\gets O_j\setminus\{o\}$ for every $j\in T$.
\item
If $i\in L$, then update $L\gets L\setminus\{i\}$.
\item
If $o\in G$, then update $G\gets G\setminus\{o\}$.
\item
$i$ discloses her full preferences to the mechanism.
\item
For every agent $j\in T$:\footnote{\label{fn: independent-traversal}The outcome of the mechanism does not depend on the order of traversal of
$T$. This insight is what ensures that the mechanism is OSP. See a discussion below.}
\begin{enumerate}
\item
If $o \in D_j$, then set $D_j\gets O_j$.
\item
Perform a \textbf{question step} for $j$.
\end{enumerate}
(After all \textbf{question steps} triggered by the present \textbf{matching step} are resolved, the current mechanism round ends, and a new \textbf{endowment step} is initiated unless there are no more unmatched agents.)
\end{enumerate}
\item
\textbf{Sorting step} for an agent $i\in T\setminus L$:
\begin{enumerate}
\item
If $|L\setminus T|=2$, then let $j$ be the unique agent $j\in T\setminus(L\cup \{i\})$ .
\item
If $O_i=G$ and $O_i\setminus D_i=\{o\}$ for some house $o$ that does not satisfy $o\in D_j$, then $i$
becomes a \emph{lurker} for $o$:
\begin{enumerate}
\item
Update $L\gets L\cup\{i\}$ and $G\gets G\setminus\{o\}$.
\item
Choose to either keep $O_j$ as is or to
update $O_j\gets O_j\setminus\{o\}$, so that after updating $D_j\subseteq O_j$ holds, and in addition
either $O_j=O$ or $O_j=G$ holds.
\item
If $O_j$ was updated, then perform a \textbf{question step} for $j$.\\
(If not, then the current mechanism round ends, and a new \textbf{endowment step} is initiated unless there are no more unmatched agents.)
\end{enumerate}
\end{enumerate}
\end{itemize}
\end{enumerate}

\begin{remark}
The above mechanism obeys a few invariant properties. $D_i$ is for each active agent $i$ a subset of $O_i$.
For any lurker $i\in L$, the set $O_i\setminus D_i$ contains exactly one house --- the house lurked by $i$, which is preferred by $i$ over every house in $D_i$. At most two active agents are not lurkers at any given time, i.e., $|T\setminus L|\le 2$. There are no lurkers (i.e., $L=\emptyset$) if and only if no unmatched house has a lurker (i.e., $G=O$).
For every $i\in T\setminus L$, either $O_i=G$ or $O_i=O$, and if these options differ (i.e., if $L\ne\emptyset$), then the latter is possible for at most one agent $i\in T\setminus L$.

The sorting step for agent $i$ determines whether $i$ is a lurker --- and should therefore be added to the set~$L$ of lurkers.
This is checked whenever the mechanism infers new information regarding $i$'s preferences, i.e., after each question step for $i$.
Two different types of events, in turn, trigger a question step for $i$: an enlargement of the set $D_i$ and a reduction of the set $O_i$.
The former happens whenever $i$ is offered new houses, i.e., at the conclusion of an endowment step. The latter can happen either due to a house in $O_i$ being matched to another agent, i.e., in a matching step, or due to a house in $O_i$ becoming lurked if $O_i$ is set to (the new) $G$, i.e., in a sorting step.
\end{remark}

Given the introduction of lurkers that precedes the description of the mechanism, the restrictions on the choice of $H$ in the endowment step, and the restrictions on the sets $O_i$ for nonlurkers in the endowment and sorting steps, may seem puzzling. (Essentially, each of these sets $O_i$ is, at each step, either $G$ or $O$, with the former holding for at least one nonlurker.) But, as we will now explain, these restrictions are exactly what drives the obvious incentive compatibility of the mechanism.
\citet{AG2015} have already used in their examples that asking an agent whether she most
prefers some given house, and if the agent's answer is ``yes'' then assigning to her that house (and otherwise continuing), is OSP if the
agent is assured she will eventually get ``at least'' that house. The matching step makes precisely this assurance when allowing
each agent $j$ to be able to claim their top choice from $D_j$ after $O_j$ is reduced, and to claim any house from $O_j$ if a house from $D_j$ becomes matched to another agent.
Therefore, to verify the obvious strategyproofness of the mechanism, what has to be checked is that these assurances, made to several agents in parallel in the matching step, can be simultaneously fulfilled for all these agents. In other words, we have to check that the corresponding loop over agents in the matching step does not depend on the order of traversal of $T$ (see Footnote~\ref{fn: independent-traversal}). 

As it turns out, the above-mentioned ``puzzling'' restrictions on $H$ and on $O_i$ guarantee 
this ``simultaneous fulfillment.'' To see this, envision a scenario with two active nonlurkers $\{i,t\}=T\setminus L$, where $O_i=G$ and $O_t\in\{O,G\}$, and with $\lambda$ lurkers $L=\{1,\ldots,\lambda\}$, such that agent $1$ became lurker first and lurks house $o_1$, agent $2$ became lurker after that and lurks house $o_2$, etc. Note that therefore, $O_1=O$, $O_2=O\setminus\{o_1\}$, $O_3=O\setminus\{o_1,o_2\}$, etc. Assume now that one of these agents $i,t,1,\ldots,\lambda$ chooses a house $o$ in the question step that immediately follows an endowment step.

If the agent that chooses $o$ is $i$, then $o\in O_i=G$, and so, since $o\in D_l$ for each lurker $l$, we have that each lurker $l$ gets free choice from $O_l$, and so each lurker $l$ chooses $o_l$, and there is no conflict (so far) in the choices. If $O_t=G$, then whichever house $t$ prefers most out of $O_t$ (or out of $D_t$) after the removal of $o$ from that set has not been claimed by any lurker, and there can be no conflict between a choice by $t$ and previous choices. On the other hand, if $O_t=O$, then by the first ``puzzling'' restriction on $H$, we have that $D_t\subseteq G$ (indeed, if a house not from $G$ were added at any time to $D_t$ by an endowment step, then by that restriction all houses in $O_t$ were added to $D_t$ and $t$ must have chosen a house in the immediately following question step), and by the second ``puzzling'' restriction on $H$, we have that $o\notin D_t$. Therefore, no house from $D_t$ was claimed by any other agent, and so $t$ is given free choice not from $O_t$ but only from $D_t$ (due to reduction in $O_t$) and there can be no conflict between a choice by $t$ and previous choices.

Now consider the case in which the agent that chooses $o$ is $t$, and that $o\in O\setminus G$ (if $o\in G$, then the previous analysis still holds).
So $o=o_l$ for some lurker $l$. In this case, similarly to the previous analysis, each lurker $l'<l$ has $o_l\in D_{l'}$ and therefore gets free choice from $O_{l'}$ and chooses $o_{l'}$.
So far, all matched houses are $o_1,\ldots,o_l$, so among them only $o_l$ is in $O_l$, and none of them are in $D_l$. So, $o_l$ gets free choice from $D_l$ and there is no conflict (so far) in the choices. If the choice of $l$ is another lurked house $o_{l'}$ (note that $l'>l$), then we reiterate: all lurkers older than $l'$ get their lurked house, and $l'$ gets free choice from $D_{l'}$, etc. This continues until some lurker chooses a house in $G$. Now, as in the previous case (of $i$ choosing $o\in G$), each remaining lurker gets matched to her lurked house with no conflicts. It remains to verify that if $i$ makes a choice, then it does not conflict with any of the choices described so far. This is done precisely as in the case $O_t=G$ of the previous case (of $i$ choosing~$o$): so far, only one matched house was not a lurked house, so only one house was removed from $O_i$; therefore, even if $i$ gets free choice from all remaining houses in $O_i$, there would be no conflict.

Finally, if the agent that chooses $o$ is a lurker $l$,\footnote{We remark that in the analysis below, $l$ in this case is not called a lurker but a dictator. We omit this distinction from the mechanism presentation, as it is not needed for complete and correct presentation of the mechanism, and would only add clutter to it.} then by a similar argument, all lurkers  $l'<l$ get matched to their lurked house. Then, $t$ gets to choose, but recall that since only lurked house were matched so far and since $D_t\subseteq G$ (by the first ``puzzling'' restriction on~$H$), then $t$ chooses from $D_t$, so no conflict arises. If $t$ makes a choice, then the remainder of the analysis is the same as the first case (of $i$ choosing $o\in G$).

So, sequential barter with lurkers is OSP-implementable. Pareto optimality follows from the fact that whenever a set of houses leaves the game, then one of them (the first in the order surveyed in the corresponding explanation in the above three paragraphs) is most-preferred by its matched agent among all not-previously-matched houses, another (the second in the same order) is most-preferred by its matched agent among all not-previously-matched houses except for the first, etc.

\begin{theorem}\label{theorem: houses}
A Pareto-optimal social choice function in a house matching problem is OSP-imple\-mentable if and only if it is implementable via sequential
barter with lurkers.
\end{theorem}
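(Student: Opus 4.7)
\medskip

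\noindent\textbf{Proof proposal.}
The ``if'' direction is essentially established by the discussion preceding the theorem: the case-by-case analysis after Figure~\ref{fig-lurker} verifies that in each question step the obviously dominant response does not conflict with any parallel commitments made to other active traders, so sequential barter with lurkers is obviously strategyproof, and Pareto optimality follows because each agent is ultimately matched to her most-preferred house among the houses still available to her at the moment of matching. It therefore suffices to prove the ``only if'' direction: any Pareto-optimal OSP-implementable social choice function admits a representation as sequential barter with lurkers. By Theorem~\ref{theorem: revelation} we may pass to an obviously incentive compatible gradual revelation mechanism $M$ implementing $\scf$, and argue history by history.

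The plan is to use the framework of $Y^*_h$ and $A^*_h$ developed in Section~\ref{sec: revelation} together with Lemma~\ref{lemma: one-undetermined}, which is stated in that section to apply to the matching domain, to show that at every nonterminal history $h$ with $P(h)=i$ the action set $A(h)$ decomposes as a set of dictatorial ``claim-a-specific-house'' actions (each $a\in A^*_h$ has $Y(h,a)$ consisting of matchings that all pair $i$ with the same house $o_a$) plus at most one nondictatorial ``continue'' action. Pareto optimality forces, in any reached history, each dictatorial option offered to agent $i$ to match $i$ with an unmatched house and forces each house currently unmatched and not offered to some lurker to be realizable as an outcome for some unmatched agent. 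This identifies the set of houses offered to $i$ at $h$ with $D_i$ and the set of houses available to $i$ with $O_i$ in the sequential-barter-with-lurkers representation.

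I would then maintain, by induction along any path in $M$, a simulation by a sequential-barter-with-lurkers mechanism in which $T$ records agents who have moved at least once without yet taking a dictatorial action; an agent enters $L$ precisely at the history where the gradual-revelation property (Property~\ref{prop: force-asap-and-reveal}) forces her to fully reveal her preferences because her most-preferred unmatched house has been narrowed down to a single option she has not yet been offered; and $G$ records the set of unmatched houses not lurked by anyone. Question, matching, endowment and sorting steps then correspond to the local moves in $M$ at histories with the corresponding structure. The equivalent description of sequential barter given just before the formal presentation of sequential barter with lurkers can be used as a warm-up to handle the case in which $|T\setminus L|\le 2$ and $L=\emptyset$ throughout, so the novel work is in handling lurkers.

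The main obstacle is showing that the intricate restrictions in the endowment and sorting steps --- the two ``puzzling'' bullets on the choice of $H$, the alternatives $O_i\in\{G,O\}$ with at most one trader enjoying $O_i=O$, and the mandatory update of $O_j$ when a new lurker appears --- are \emph{necessary}, not merely sufficient, for OSP plus Pareto optimality. The strategy for each such restriction is a local pruning argument in the spirit of \citet{Li2015} and \citet{AG2015}: assume the restriction fails at some history $h$, construct two preferences $R_i,R'_i$ for the active agent with $R_i\in\mathcal{R}_i(h,a)$ and $R'_i\in\mathcal{R}_i(h,a')$ for two distinct actions $a,a'\in A(h)$, and complete them with preferences of the other agents so that the $R_i$-truthful worst-case outcome (typically, the contested house gets claimed by another trader) is strictly worse for $R_i$ than the $R'_i$-best-case outcome (typically, the contested house is obtained via the deviation), violating obvious dominance of truthtelling at $h$. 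The central case is the one with three simultaneously active traders: here the lurker's past behavior already pins down enough of her preferences that Pareto optimality forces the contested house to go to her, and this is exactly what keeps the bookkeeping consistent. Once each local structure at each history is forced to match one of the four step types of sequential barter with lurkers, piecing them together yields the required representation of $\scf$.
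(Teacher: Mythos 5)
Your high-level architecture matches the paper's: the ``if'' direction via the discussion in Section~\ref{sec: house-matching}, and the ``only if'' direction by passing to an obviously incentive compatible gradual revelation mechanism (Theorem~\ref{theorem: revelation}), establishing that each history has at most one nondictatorial action (Lemma~\ref{lemma: one-undetermined}), and then matching the local structure of $M$ to the four step types. However, the proposal has genuine gaps in the part that carries all the difficulty.

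First, a conceptual error: you write that an agent enters $L$ at the history where Property~\ref{prop: force-asap-and-reveal} forces her to fully reveal her preferences. That is not what happens. A lurker's outcome is \emph{not} determined when she becomes a lurker --- she may still receive either the lurked house or her best house in $D_i$ --- so Property~\ref{prop: force-asap-and-reveal} does not apply, and she reveals only that the single house in $O_i\setminus D_i$ is her favorite, not her full ranking. Full revelation happens only at the matching step. Second, and more importantly, you frame the necessity of each restriction as a ``local pruning argument'' producing two preferences $R_i,R'_i$ whose worst/best cases violate obvious dominance at a single history. The paper's argument is not local in this sense. The load-bearing results are global structural facts: that any nonlurker with a lurked house in her endowment set $D_i(h)$ must already be a dictator (Lemmas~\ref{lemma: force-lurked-then-force-g}, \ref{lemma: force-g-then-dictator}, \ref{lemma: force-lurked-then-dictator}); that the lurkers form a strict hierarchy with nested option sets and distinct lurked houses (Lemma~\ref{lemma: lurkers-hierarchy}); that a single ``terminator'' agent $t$ is the only nonlurker who can ever receive a lurked house, with $D_t=O(h)$ once she dictates (Lemma~\ref{lemma: force-lurked-terminator}); and the explicit unraveling procedure by which lurkers are matched after any dictatorial action (Lemma~\ref{lemma: how-to-unravel}). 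The hardest step, ruling out three simultaneously active nonlurkers (Lemma~\ref{lemma: no-three}), requires a six-case analysis whose Case~6 constructs an entire \emph{behavior} (``pass unless you can force $o'$''), follows it along a whole path of the mechanism, and derives a contradiction from who could lurk which house at the end of that path --- nothing like a two-preference comparison at one node. Your remark that ``the central case is the one with three simultaneously active traders'' where ``Pareto optimality forces the contested house to go to the lurker'' addresses only why three traders \emph{can} coexist when one is a lurker; it does not address how to exclude three active nonlurkers, which is where the real work lies. Without these structural lemmas the simulation you describe cannot be shown well defined, so the proposal as written does not close the ``only if'' direction.
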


The proof of Theorem \ref{theorem: houses} along with the statement and proof of the supporting Lemmas \ref{lemma: SBL is OSP} through~\ref{lemma:
other-lurker-cant-force-from-terminator}, is relegated to Appendix~\ref{app: house-matching}. The adaptation of the proof to the case of matching with outside options, i.e., where agents may prefer being unmatched over being matched to certain houses (and possibly more agents exist than houses) is described in Appendix~\ref{homeless-bosses}.

\section{Conclusion}\label{sec: conclusion}

This paper characterizes the set of OSP-implementable and Pareto optimal social choice functions in the three most popular domains that allow for strategyproof and Pareto optimal social choice functions that are nondictatorial. We show that obvious strategyproofness rules out many of the most popular mechanisms in these domains, but also gives rise to reasonable mechanisms in some domains, and even to rather exotic and quite intricate mechanisms in other domains.

For single peaked preferences, while median is not obviously strategyproof, some interesting mechanisms are. Dictatorships with safeguards against extremism indeed seem to be reasonable mechanisms: in some policy problems, we may generally delegate decision making to a ``dictator,'' and only if this dictator wishes to adopt some extreme position, should there be some checks in place. It also  seems that such mechanism would be no harder to participate in than ascending auctions. 
For quasilinear preferences, even when all bidders have additive preferences, we complement the elegant positive result of \citet{Li2015} regarding ascending auctions, with a strong impossibility result. To put the restrictiveness of strategyproofness in the setting of auctions into relief, consider two sequential ascending auctions, where only one bidder takes part in both and where the other bidders in the second auction know neither the bids nor the outcome of the first. Even when the preferences of the bidder who takes part in both auctions are additive, bidding her true values for the two goods is not obviously dominant. However, it seems hard to justify that the strategyproofness of such sequential ascending auctions, possibly held months apart, should be any less ``obvious'' (in the colloquial sense of the word) to such an additive bidder than the strategyproofness of a single ascending auction. Finally, for house matching, the mechanisms that we identify are quite complex, and reasoning about them (in fact, even presenting them) seems far from a natural meaning of ``obvious.'' Indeed, while in other known OSP mechanisms, a short argument for the obvious dominance of truthtelling in any history can be written down, in sequential barter with lurkers, not only can the argument for one history be complicated and involve nontrivial bookkeeping, but it can also significantly differ from the (complicated) argument for a different history.

An integrated examination, of all of these negative and positive results, indicates that obvious strategyproofness may not precisely capture the intuitive idea of  ``strategyproofness that is easy to see.'' Indeed, for quasilinear preferences it overshoots in a sense, suggesting that the boundaries of obvious strategyproofness are significantly less far-reaching than one may hope. Conversely, in the context of house matching this definition undershoots in a sense, as it gives rise to some mechanisms that one would not naturally describe as ``easy to understand.'' In this context, we see various mechanics that come into play within obviously strategyproof mechanisms that are considerably richer and more diverse than previously demonstrated.

An interesting question for future research could be to search for an alternative (or slightly modified) concept of easy-to-understand-strategyproofness. One could, for example, consider the similarity, across different histories, of a short argument for the dominance of truthtelling.  A mechanism would then be considered easy to understand if a small set of simple arguments can be used to establish that truthelling is dominant at any possible history. Perhaps such a definition could encompass sequential ascending auctions for additive bidders, while precluding the general form of sequential barter with lurkers?

Regardless of whether OSP catches on or some alternative definition emerges, the fundamental contribution of \citet{Li2015} in moving the discussion of ``strategyproofness that is easy to see'' into the territory of formal definitions and precise analysis will surely linger on.

\bibliographystyle{abbrvnat}
\bibliography{OSP+PO}

\begin{thebibliography}{16}
\providecommand{\natexlab}[1]{#1}
\providecommand{\url}[1]{\texttt{#1}}
\expandafter\ifx\csname urlstyle\endcsname\relax
  \providecommand{\doi}[1]{doi: #1}\else
  \providecommand{\doi}{doi: \begingroup \urlstyle{rm}\Url}\fi

\bibitem[Abdulkadiro\u{g}lu and S\"{o}nmez(1999)]{AS1999}
A.~Abdulkadiro\u{g}lu and T.~S\"{o}nmez.
\newblock House allocation with existing tenants.
\newblock \emph{Journal of Economic Theory}, 88\penalty0 (2):\penalty0
  233--260, 1999.

\bibitem[Arribillaga et~al.(2016)Arribillaga, Mass\'{o}, and Neme]{AMN2016}
R.~P. Arribillaga, J.~Mass\'{o}, and A.~Neme.
\newblock Not all majority-based social choice functions are obviously
  strategy-proof.
\newblock Private communication, December 2016.

\bibitem[Ashlagi and Gonczarowski(2015)]{AG2015}
I.~Ashlagi and Y.~A. Gonczarowski.
\newblock Stable matching mechanisms are not obviously strategy-proof.
\newblock Mimeo, 2015.

\bibitem[Bogomolnaia et~al.(2005)Bogomolnaia, Deb, and Ehlers]{BDE05}
A.~Bogomolnaia, R.~Deb, and L.~Ehlers.
\newblock Strategy-proof assignment on the full preference domain.
\newblock \emph{Journal of Economic Theory}, 123\penalty0 (2):\penalty0
  161--186, 2005.

\bibitem[Clarke(1971)]{Clarke1971}
E.~H. Clarke.
\newblock Multipart pricing of public goods.
\newblock \emph{Public Choice}, 11\penalty0 (1):\penalty0 17--33, 1971.

\bibitem[Gibbard(1973)]{Gibbard1973}
A.~Gibbard.
\newblock Manipulation of voting schemes: a general result.
\newblock \emph{Econometrica: Journal of the Econometric Society}, 41\penalty0
  (4):\penalty0 587--601, 1973.

\bibitem[Groves(1973)]{Groves1973}
T.~Groves.
\newblock Incentives in teams.
\newblock \emph{Econometrica: Journal of the Econometric Society}, 41\penalty0
  (4):\penalty0 617--631, 1973.

\bibitem[Kagel et~al.(1987)Kagel, Harstad, and Levin]{KHL1987}
J.~H. Kagel, R.~M. Harstad, and D.~Levin.
\newblock Information impact and allocation rules in auctions with affiliated
  private values: a laboratory study.
\newblock \emph{Econometrica: Journal of the Econometric Society}, 55\penalty0
  (6):\penalty0 1275--1304, 1987.

\bibitem[Li(2015)]{Li2015}
S.~Li.
\newblock Obviously strategy-proof mechanisms.
\newblock Mimeo, 2015.

\bibitem[Moulin(1980)]{Moulin1980}
H.~Moulin.
\newblock On strategy-proofness and single peakedness.
\newblock \emph{Public Choice}, 35\penalty0 (4):\penalty0 437--455, 1980.

\bibitem[Pycia(2016)]{Pycia2016}
M.~Pycia.
\newblock Obvious dominance and random priority.
\newblock Mimeo, subsumed by \cite{PyciaTroyan2016}, June 2016.

\bibitem[Pycia and Troyan(2016)]{PyciaTroyan2016}
M.~Pycia and P.~Troyan.
\newblock Obvious dominance and random priority.
\newblock Mimeo, October 2016.

\bibitem[Satterthwaite(1975)]{Satterthwaite1975}
M.~A. Satterthwaite.
\newblock Strategy-proofness and {A}rrow's conditions: existence and
  correspondence theorems for voting procedures and social welfare functions.
\newblock \emph{Journal of Economic Theory}, 10\penalty0 (2):\penalty0
  187--217, 1975.

\bibitem[Shapley and Scarf(1974)]{ShapleyScarf1974}
L.~Shapley and H.~Scarf.
\newblock On cores and indivisibility.
\newblock \emph{Journal of Mathematical Economics}, 1\penalty0 (1):\penalty0
  23--37, 1974.

\bibitem[Troyan(2016)]{Troyan2016}
P.~Troyan.
\newblock Obviously strategyproof implementation of allocation mechanisms.
\newblock Mimeo, August 2016.

\bibitem[Vickrey(1961)]{Vickrey1961}
W.~Vickrey.
\newblock Counterspeculation, auctions, and competitive sealed tenders.
\newblock \emph{The Journal of Finance}, 16\penalty0 (1):\penalty0 8--37, 1961.

\end{thebibliography}

\clearpage

\appendix

\section{Proof of Theorem~\ref{theorem: revelation} and Preliminary Analysis}\label{app: revelation}

\begin{proof}[Proof of Theorem~\ref{theorem: revelation}]
Fix any mechanism $M:\mathcal{S}\to Y$, social choice function $\scf:\mathcal{R}\to Y$ and obviously strategyproof strategy profile
$\mathbf{S}:\mathcal{R}\to
\mathcal{S}$. Define a new mechanism $M^1:\mathcal{S}^1\to Y$ that is identical to $M$ except that all information sets are singletons. In $M$,
a
strategy for agent $i$ is an
 $\mathcal{I}_i$-measurable function mapping a nonterminal history $h\in P^{-1}(i)$ of player $i$ to the actions $A(h)$ available at $h$. In
 $M^1$, a strategy for agent $i$ has the same definition, only without the requirement of being $\mathcal{I}_i$-measurable. So, we have
 $\mathcal{S}\subseteq\mathcal{S}^1$. Since $\mathbf{S}$ is obviously strategyproof in $M$, we obtain that $\mathbf{S}$ is also obviously
 strategyproof in $M^1$.\footnote{Since $\mathcal{S}_i$ is obviously dominant in $M$, $M(\mathcal{S}_i(R_i),B_{-i})R_iM(B')$
 holds for all  $R_i\in \mathcal{R}_i$,  $B$, $B'$,  $h$, and $h'$ with $h\in\Path(\mathcal{S}_i(R_i),B_{-i})$, $h'\in\Path(B')$, $P(h)=P(h')=i$, $\mathcal{I}_i(h)=\mathcal{I}_i(h')$, and $\mathcal{S}_i(R_i)(h)\neq B'_i(h')$. So $M(\mathcal{S}_i(R_i),B_{-i})R_iM(B')$ in particular holds whenever $h=h'$ in the above conditions.}

For every history $h$ in $M^1$ and $i$, set $\mathcal{R}_i(h)=\{R_i\in\mathcal{R}_i\mid \exists B_{-i} : h\in
\Path(\mathbf{S_i}(R_i),B_{-i})\}$.
If $P(h)=i$, then we note that $\{\mathcal{R}_i(h,a)\mid a\in A(h)\}$ partitions $\mathcal{R}_i(h)$, however some of the sets in this partition may be empty.
Since $\mathbf{S}$ is obviously strategyproof in $M^1$, $\mathbf{T}$ is obviously strategyproof in $M^1$ as well (w.r.t.\ the maps just
defined). Furthermore, $M^1(\mathbf{T}(\cdot))=M^1(\mathbf{S}(\cdot))=\scf(\cdot)$.

In the following steps of the proof, we describe modifications to the game tree of the mechanism. For ease of presentation, we consider the maps
$P(\cdot)$, $S_i(R_i)(\cdot)$ and $\mathcal{R}_i(\cdot)$ to be defined over nodes of the tree (rather than histories, as we may modify the paths to these nodes).

Let $M^2_0=M^1$. For every agent $i\in N$ (inductively on $i$), we define a new mechanism $M^2_i:\mathcal{S}^2_i\to Y$ as follows: For every
preference
$R_i$, for every minimal history $h$ s.t.\ $P(h)=i$ and $\{M^2_{i-1}(\mathbf{T}(R))\mid h\in\Path(\mathbf{T}(R))\}$ is a nonempty
set of completely $i$-indifferent outcomes,
let $a=\mathbf{T}_i(R_i)(h)$, remove $R_i$ from the set $\mathcal{R}_i(h,a)$, and put $\mathcal{R}_i(h,a')=\{R_i\}$ for a new action $a'$ at $h$
that
leads to a subtree that is a duplicate of that to which $a$ leads before this change (with all maps from the nodes of the duplicate subtree
defined as on the original subtree). Note that $M^2_i(\mathbf{T}(R))=M^2_{i-1}(\mathbf{T}(R))$ holds for all $R$, so we have
$M^2_i(\mathbf{T}(\cdot))=M^2_{i-1}(\mathbf{T}(\cdot))=\scf(\cdot)$. Moreover, since $\mathbf{T}$ is obviously strategyproof in $M^2_{i-1}$,
$\mathbf{T}$ is obviously strategyproof in $M^2_i$. Set $M^2=M^2_n$.

 Define a new mechanism $M^3:\mathcal{S}^3\to Y$ by dropping from $M^2$ any action $a$ for which there exists no $R$ such that $a$ is on the
 path
 $\Path(\mathbf{T}(R))$ in $M^2$.
Since $\mathbf{T}$ is obviously strategyproof in $M^2$, $\mathbf{T}$ is also obviously strategyproof in $M^3$. Furthermore,
 $M^3(\mathbf{T}(\cdot))=M^2(\mathbf{T}(\cdot))=\scf(\cdot)$.

 Define a new mechanism $M^4:\mathcal{S}^4\to Y$ as follows. Identify a maximal set of histories $H^*$ in $M^3$ that satisfies all of the
 following:
\begin{itemize}
\item
Each $h\in H^*$ is either nonterminal or infinite.
\item
$P(h)=i$ for all nonterminal $h\in H^*$ and some $i$,
\item
there exists a history $h^*\in H^*$ such that $h^*\subseteq h$ for all $h\in H$, and
\item
if $h\in H^*$ then $h'\in H^*$ for all $h'$  with $h^*\subset h'\subset h$.
\end{itemize}
``Condense'' each such $H^*$ by replacing the set of actions $A(h^*)$ at $h^*$ s.t.\ at $h^*$, agent $i$ directly chooses among all possible
nodes $(h,a)$,
where $h$ is a maximal nonterminal history in $H^*$ and $a\in A(h)$; in addition, for every infinite history $h$ in $H^*$, add an action to
$A(h^*)$ that chooses a new leaf with the same outcome as  $h$. For every new action $a'$ that chooses a node $(h,a)$ from $M^3$, set
$\mathcal{R}_i(h^*,a')=\mathcal{R}_i(h,a)$;
for every new action $a'$ that chooses a new leaf with the same outcome as in an infinite history $h=(a^k)_{k=1,\ldots}$ of $M^3$, set
$\mathcal{R}_i(h^*,a')=\cap_k \mathcal{R}_i(a^1,\ldots,a^k)$.
Since $\mathbf{T}$ is obviously strategyproof in $M^3$, $\mathbf{T}$ is also obviously strategyproof in $M^4$. Furthermore,
 $M^4(\mathbf{T}(\cdot))=M^3(\mathbf{T}(\cdot))=\scf(\cdot)$.

Define a new mechanism $M^5:\mathcal{S}^5\to Y$ as follows. Identify a maximal set of histories $H^*$ in $M^4$ that satisfies all of the
following:
\begin{itemize}
\item
$|A(h)|=1$ for all nonterminal $h\in H^*$,
\item
there exists a history $h^*\in H^*$ such that $h^*\subseteq h$ for all $h\in H$, and
\item
if $h\in H^*$ then $h'\in H^*$ for all $h'$  with $h^*\subset h'\subset h$.
\end{itemize}
``Condense'' each such $H^*$ by replacing the subtree rooted at the node $h^*$ with the subtree rooted at the node $h$,
where $h$ is the maximal history in $H^*$. If $h$ is infinite, then replace $h^*$ with a new leaf with the same outcome as $h$ and the same value of the maps $\mathcal{R}_i(\cdot)$ as $h^*$.
Since $\mathbf{T}$ is obviously strategyproof in $M^4$, $\mathbf{T}$ is also obviously strategyproof in $M^5$. Furthermore,
 $M^5(\mathbf{T}(\cdot))=M^4(\mathbf{T}(\cdot))=\scf(\cdot)$.

By construction, $M^5$ is a gradual revelation mechanism that implements $\scf$.
\end{proof}

\begin{lemma}\label{lemma: owned choices}
Fix an obviously incentive compatible gradual revelation mechanism $M$. Let $h$ be a nonterminal history and let $i=P(h)$.
If there exists $y\in Y(h)$ s.t.\ $[y]_i\cap Y(h,a)\ne\emptyset\ne [y]_i\cap Y(h,a')$ for two distinct $a,a'\in A(h)$, and furthermore there
exists $R_i\in\mathcal{R}_i(h)$ s.t.\ $R_i$ ranks $[y]_i$ at
the top among $Y(h)$, then $[y]_i\cap Y(h)\subseteq  Y^*_h$.
\end{lemma}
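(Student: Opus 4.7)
The plan is to fix a preference $R_i\in\mathcal{R}_i(h)$ that ranks $[y]_i$ at the top of $Y(h)$ (which exists by hypothesis), set $a^*\eqdef\mathbf{T}_i(R_i)(h)$, and show that $Y(h,a^*)\subseteq[y]_i$. Since $[y']_i=[y]_i$ for every $y'\in[y]_i$, this single action $a^*$ will witness $y'\in Y^*_h$ for each such $y'$, yielding the desired inclusion $[y]_i\cap Y(h)\subseteq Y^*_h$.

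First I would locate a ``productive deviation.'' Since $[y]_i$ meets both $Y(h,a)$ and $Y(h,a')$ for the two distinct actions of the hypothesis, at least one of them---call it $a^\dagger$---must differ from $a^*$. Picking any $y^\dagger\in[y]_i\cap Y(h,a^\dagger)$ and any terminal history below $(h,a^\dagger)$ whose outcome is $y^\dagger$, I use the nonemptiness of every $\mathcal{R}_j$-set along that path (Property~\ref{prop: labeled-actions}) to build a profile $R^\dagger\in\mathcal{R}(h)$ with $\mathbf{T}_i(R^\dagger_i)(h)=a^\dagger$ and $M(\mathbf{T}(R^\dagger))=y^\dagger$. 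Invoking the obvious-incentive-compatibility condition displayed in the excerpt right after Theorem~\ref{theorem: revelation}, with this $R^\dagger$ and with $R=(R_i,R_{-i})$ for any $R_{-i}$ satisfying $(R_i,R_{-i})\in\mathcal{R}(h)$, yields $M(\mathbf{T}(R_i,R_{-i}))\,R_i\,y^\dagger$. Because $R_i$ ranks $[y]_i$ at the top of $Y(h)$ and $y^\dagger\in[y]_i$, this forces $M(\mathbf{T}(R_i,R_{-i}))\in[y]_i$.

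Next I would upgrade from truthtelling profiles of the other agents to arbitrary behaviors $B_{-i}$. For any $B_{-i}$ with $(h,a^*)\in\Path(\mathbf{T}_i(R_i),B_{-i})$, the realized terminal history $h_{\mathrm{term}}$ identifies a nonempty $\mathcal{R}(h_{\mathrm{term}})$ whose $i$-coordinate still contains $R_i$ (since $\mathbf{T}_i(R_i)$ was followed along the path), and choosing any $R_{-i}\in\mathcal{R}_{-i}(h_{\mathrm{term}})$ reproduces the same outcome as $M(\mathbf{T}(R_i,R_{-i}))$. Thus the set $\{M(\mathbf{T}_i(R_i),B_{-i}):(h,a^*)\in\Path(\mathbf{T}_i(R_i),B_{-i})\}$ is a nonempty subset of $[y]_i$---a nonempty set of completely $i$-indifferent outcomes---and Property~\ref{prop: force-asap-and-reveal} then makes $\mathcal{R}_i(h,a^*)$ a singleton. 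Combined with Property~\ref{prop: labeled-actions} and the ``choices are real'' requirement, agent $i$ cannot move anywhere in the subtree rooted at $(h,a^*)$ (else $\mathcal{R}_i$ would partition a singleton, forcing $|A(\cdot)|=1$), so every terminal in that subtree is reached by $(\mathbf{T}_i(R_i),B_{-i})$ for some $B_{-i}$, completing $Y(h,a^*)\subseteq[y]_i$.

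The main obstacle I anticipate is the very last sentence of the first substantive step: translating obvious incentive compatibility into set-containment in $[y]_i$ rather than mere $R_i$-indifference to $y$. This hinges on reading ``ranks $[y]_i$ at the top among $Y(h)$'' in its natural strict sense---every outcome of $Y(h)\setminus[y]_i$ is strictly $R_i$-worse than $y$---for otherwise OSP would only place the worst-case outcome within some $R_i$-indifference class that could spill outside $[y]_i$. The subsequent passage from truthtelling profiles to arbitrary behaviors, and the appeal to Property~\ref{prop: force-asap-and-reveal}, are essentially bookkeeping afforded by the very design of gradual revelation mechanisms.
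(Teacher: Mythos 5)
Your proof is correct and rests on the same key step as the paper's: the obvious-incentive-compatibility comparison at $h$ between the truthtelling action for $R_i$ and a distinct action whose subtree meets $[y]_i$, combined with the fact that $R_i$ ranks $[y]_i$ strictly at the top of $Y(h)$. The only difference is presentational --- you argue directly, exhibiting $a^*=\mathbf{T}_i(R_i)(h)$ as the witnessing dictatorial action via Property~\ref{prop: force-asap-and-reveal}, whereas the paper argues by contradiction and leaves that bookkeeping implicit.
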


\begin{proof}[Proof of Lemma~\ref{lemma: owned choices}]
Suppose not, and assume w.l.o.g.\ that $y \notin Y^*_h$.
Since $y\notin Y^*_h$, there exists a preference profile $R_{-i}\in\mathcal{R}_{-i}(h)$ (recall that this is equivalent to $h\in\Path(\mathbf{T}(R))$) such that
$M(\mathbf{T}(R))=y'\notin [y]_i$.
Assume w.l.o.g.\ that $\mathbf{T}_i(R_i)(h)\neq a'$. Since $[y]_i\cap Y(h,a')\ne\emptyset$, there exists a preference profile $R'\in\mathcal{R}(h)$ (recall that this is equivalent to $h\in
\Path(\mathbf{T}(R'))$) with $\mathbf{T}(R')(h)=a'$ such that
$M(\mathbf{T}(R'))\in[y]_i$. Since $M(\mathbf{T}(R'))R_i y P_i y'=M(\mathbf{T}(R))$, the mechanism is not obviously strategyproof, reaching a
contradiction.
\end{proof}

\section{Proof of Theorem~\ref{theorem: two options}}\label{app: voting}

\begin{proof}[Proof of Theorem~\ref{theorem: two options}]
Fix any social choice function $\scf$ that is implementable via an obviously strategyproof mechanism. By Theorem \ref{theorem: revelation},
$\scf$ must be implementable by an obviously incentive compatible gradual revelation mechanism $M$. Let $h,i$ be such that $h$ is a minimal history with $P(h)=i$. Since $M$ is a gradual revelation mechanism, $i$ must have at least two choices at $h$ (i.e., $|A(h)|\geq 2$). Since there are only two possible preferences for $i$ and since $M$ is a gradual revelation mechanism, there are at most $2=|\mathcal{R}_i|$ choices for $i$ at $h$. In sum, we have $|A(h)|=2$. Moreover, there exists no $h'$ with $h\subsetneq h'$ and $P(h')=i$, since $i$ already fully reveals his preference at $h$. By Lemma \ref{lemma: owned choices}, $Y^*_h\neq \emptyset$. So, $h$ must be covered by one of the three above cases.

To see that any proto-dictatorship  is obviously strategyproof, it is enough to analyze histories $h$ in which $Y^*_h=\{y\}$ and
$\overline{A^*_h}=\{\tilde{a}\}$ with $Y(\tilde{a})=\{y,z\}$ (histories $h$ with $Y^*_h=\{z\}$ are analyzed analogously, and in histories $h$ with
$Y^*_h=\{y,z\}$, the choosing agent is a dictator). In this case, $P(h)$ ensures that the outcome is $y$ if $y$ is his preferred option. If $z$
is his preferred option, then choosing $\tilde{a}$ is obviously strategyproof: the best outcome under the
deviation to ensuring $y$ is identical to the worst outcome given $\tilde{a}$.
\end{proof}

\section{Proof of Lemmas~\ref{proof 0} through~\ref{proof 5}}\label{app: single peaked}

\begin{proof}[Proof of Lemma~\ref{proof 0}]
As outlined in Section~\ref{sec: single peaked}.
\end{proof}

For Lemmas \ref{proof 1} through \ref{proof 2.5} and \ref{proof 2} through \ref{proof 5}, fix an obviously incentive compatible gradual revelation mechanism $M$ that implements a given
social choice function, with the following property: For each nonterminal history $h$ of $M$, there does not exist another obviously incentive
compatible gradual revelation mechanism $M'$ that implements the same social choice function as $M$ and such that $M$ and $M'$ coincide except
for the subtree at $h$, and such that $|\overline{A^*_h}|$ is at most $1$ at $M$, but greater than $1$ at $M'$. Such an $M$ always exists: start
with any obviously incentive compatible gradual revelation mechanism $M'$ that implements the given social choice function, and then,
considering first each nodes $h$ in the first level in the tree of $M'$,
 if $h$ violates the above condition, replace the subtree at $h$ with another
subtree that satisfies the above condition. Next replace all subtrees that violate the above condition at nodes in the second level, then in the third, and so forth.  Since each node does not change any more after some finite number of steps (equal to the level of this node), the resulting mechanism is well defined, even though the height of the tree of $M$ (and so the number of steps in the process defining $M$) may be infinite.

Fix an obviously incentive compatible gradual revelation mechanism $M$ that implements a unanimous social choice function. Assume that $M$ satisfies the above property and assume w.l.o.g.\ that $P(\emptyset)=1$.

\begin{lemma}\label{proof 1}
Let $h$ be a nonterminal history in $M$ and let $i=P(h)$.
Let $y\in Y$ s.t.\ $y\in Y(h,a)$ and $y+1\in Y(h,a')$ for two distinct $a,a'\in A(h)$.
If there exist\footnote{Similarly to notation of other sections, we use, e.g., $R_i:y,y+1$ to denote a preference $R_i$ for agent~$i$ that ranks $y$ first and $y+1$ second.} $R_i: y, y+1\in\mathcal{R}_i(h)$ and $R'_i: y+1\in\mathcal{R}_i(h)$  (or  $R_i: y\in\mathcal{R}_i(h)$ and $R'_i: y+1,y\in\mathcal{R}_i(h)$), then $\{y,y+1\}\cap Y^*_{h}\neq \emptyset$.
\end{lemma}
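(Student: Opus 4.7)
I argue by contradiction: assume $y, y+1 \notin Y^{*}_{h}$ and derive a violation of obvious incentive compatibility or of a defining property of a gradual revelation mechanism. It suffices to handle the first alternative of the hypothesis ($R_i:y,y+1$ and $R'_i:y+1$ in $\mathcal{R}_i(h)$); the second follows by the symmetric roles of $y$ and $y+1$ (swapping $a\leftrightarrow a'$ and $R_i\leftrightarrow R'_i$). Fix $R^{y}\in\mathcal{R}(h,a)$ with $M(\mathbf{T}(R^{y}))=y$ and $R^{y+1}\in\mathcal{R}(h,a')$ with $M(\mathbf{T}(R^{y+1}))=y+1$.

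\textbf{Locating the truthtelling actions of $R_i$ and $R'_i$.} Let $\hat a:=\mathbf{T}_i(R_i)(h)$. If $\hat a \ne a$, then obvious incentive compatibility at $h$ applied to $(R_i,R^{y}_{-i})$ versus $R^{y}$ yields that $M(\mathbf{T}(R_i,R^{y}_{-i}))$ is $R_i$-weakly preferred to $y$; since $y$ is $R_i$'s unique ideal this forces the outcome to equal $y$, whence $y\in Y(h,\hat a)\cap Y(h,a)$. Lemma~\ref{lemma: owned choices}, applied with $R_i$ (which ranks $[y]_i=\{y\}$ at the top of $Y(h)$), then yields $y \in Y^{*}_{h}$, a contradiction. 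Hence $\hat a = a$, and an analogous argument with $R'_i$ and $R^{y+1}$ gives $\check a := \mathbf{T}_i(R'_i)(h) = a'$. Two further invocations of Lemma~\ref{lemma: owned choices} now rule out $y\in Y(h,a')$ (else $y\in Y^{*}_{h}$ via $R_i$) and $y+1\in Y(h,a)$ (symmetrically, via $R'_i$).

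\textbf{Pinning down $R_i$'s truthtelling outcome.} Obvious incentive compatibility at $h$ applied to $(R_i,R_{-i})$ versus $R^{y+1}$, for arbitrary $R_{-i}$ with $(R_i,R_{-i})\in\mathcal{R}(h)$, gives that $M(\mathbf{T}(R_i,R_{-i}))$ is $R_i$-weakly preferred to $y+1$. Since $R_i:y,y+1$, the set of outcomes weakly $R_i$-preferred to $y+1$ is exactly $\{y,y+1\}$. Combining this with $M(\mathbf{T}(R_i,R_{-i}))\in Y(h,a)$ and $y+1\notin Y(h,a)$, we conclude $M(\mathbf{T}(R_i,R_{-i}))=y$ for every such $R_{-i}$.

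\textbf{Concluding via Property~\ref{prop: force-asap-and-reveal}.} Any behavior $B_{-i}$ reaching $h$ can be matched on its induced path by the truthful profile $\mathbf{T}_{-i}(R_{-i})$ for some $R_{-i}\in\mathcal{R}_{-i}(h)$, by picking $R_j$ in the nonempty nested intersection of the labels $\mathcal{R}_j(\cdot,B_j(\cdot))$ afforded by Property~\ref{prop: labeled-actions}; hence $\{M(\mathbf{T}_i(R_i),B_{-i})\mid h\in\Path(\mathbf{T}_i(R_i),B_{-i})\}=\{M(\mathbf{T}(R_i,R_{-i}))\mid R_{-i}\in\mathcal{R}_{-i}(h)\}=\{y\}$, a trivially completely $i$-indifferent set. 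Property~\ref{prop: force-asap-and-reveal} then forces $\mathcal{R}_i(h,a)=\{R_i\}$; since gradual revelation forbids singleton action sets and nests labels, agent $i$ can make no further move in the subtree rooted at $(h,a)$, so $Y(h,a)$ depends only on the behaviors of the other agents and hence equals $\{y\}$. Thus $a \in A^{*}_{h}$ and $y \in Y^{*}_{h}$, contradicting our standing assumption. The main subtlety lies in this last step: to invoke Property~\ref{prop: force-asap-and-reveal} one must upgrade the previous paragraph's conclusion from truthful profiles to arbitrary other-agent behaviors, which is precisely what the nested-label structure of a gradual revelation mechanism allows.
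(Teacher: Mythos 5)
Your proof is correct and follows essentially the same route as the paper's: assume $y,y+1\notin Y^*_h$, use Lemma~\ref{lemma: owned choices} to separate $y$ and $y+1$ across the two actions, and then derive an OSP violation from comparing the worst case under $a$ with the best case under $a'$. You merely spell out two steps the paper leaves implicit — pinning down $\mathbf{T}_i(R_i)(h)=a$ and $\mathbf{T}_i(R'_i)(h)=a'$ rather than assuming it ``w.l.o.g.,'' and invoking Property~\ref{prop: force-asap-and-reveal} to turn ``truthtelling for $R_i$ always yields $y$'' into $y\in Y^*_h$ — which is a welcome clarification, not a different argument.
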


\begin{proof}[Proof of Lemma~\ref{proof 1}]
We prove the lemma for the first case (swapping $y$ and $y+1$ obtains the proof for the second case). Suppose that $y, y+1\notin Y^*_h$.
Assume w.l.o.g.\ that $\mathbf{T}_i(R_i)(h)=a$.
By Lemma~\ref{lemma: owned choices} and by definition of $R_i,R'_i\in \mathcal{R}_i(h)$, we have that $y,y+1\notin Y(h,a)\cap Y(h,a')$, so $y\notin Y(h,a')$
and
$y+1\notin Y(h,a)$. Since $y\notin Y^*_h$, there is some
preference
profile $R_{-i}\in\mathcal{R}_{-i}(h)$ with $M(\mathbf{T}(R))=y'\neq y$. Since the second
ranked choice under $R_i$, namely $y+1$, is not in $Y(h,a)$, we have $y'\neq y+1$. Since $y+1\in Y(h,a')$, there exists a preference profile
$R'\in\mathcal{R}(h)$ with $\mathbf{T}_i(R'_i)(h)=a'$ and \ $M(\mathbf{T}(R))=y+1$. A contradiction to the obvious strategyproofness arises, since $y+1=M(\mathbf{T}(R'))P_i
M(\mathbf{T}(R))=y'$.
\end{proof}

\begin{lemma}\label{proof 1.5}
Let $i\in N$ and let $h$ be a minimal nonterminal history  in $M$ s.t.\ $P(h)=i$. If $Y(h)=Y$, then $Y^*_h\ne\emptyset$.
\end{lemma}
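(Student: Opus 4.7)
The plan is to argue by contradiction, assuming $Y^*_h=\emptyset$, and to chain together Lemmas~\ref{lemma: owned choices} and~\ref{proof 1}. I would first collect two preliminary facts. Since $h$ is a \emph{minimal} nonterminal history with $P(h)=i$, no ancestor $h'\subsetneq h$ satisfies $P(h')=i$, so the labeling-preservation clause of Property~\ref{prop: labeled-actions} (applied inductively from the empty history) yields $\mathcal{R}_i(h)=\mathcal{R}_i(\emptyset)=\mathcal{R}_i$. In particular, $\mathcal{R}_i(h)$ contains a single-peaked preference with peak at every integer. Moreover, because the peak-$y$ preference strictly prefers $y$ to every $y'\neq y$, complete $i$-indifference classes are trivial on this domain: $[y]_i=\{y\}$ for every $y\in\mathbb{Z}$.

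I would then show the sets $\{Y(h,a)\}_{a\in A(h)}$ are pairwise disjoint. Indeed, if some $y$ lay in $Y(h,a)\cap Y(h,a')$ for distinct $a,a'$, then taking any peak-$y$ preference $R_i\in\mathcal{R}_i(h)$ --- which, since $[y]_i=\{y\}$ and $Y(h)=\mathbb{Z}$, ranks $[y]_i$ at the top of $Y(h)$ --- Lemma~\ref{lemma: owned choices} would place $\{y\}=[y]_i\cap Y(h)$ inside $Y^*_h$, contradicting our assumption. Since additionally $\bigcup_{a\in A(h)}Y(h,a)=Y(h)=\mathbb{Z}$ and $|A(h)|\ge 2$ (gradual revelation forbids singleton action sets), this makes $\{Y(h,a)\}$ a partition of $\mathbb{Z}$ into at least two nonempty cells.

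Such a partition cannot assign every pair of consecutive integers to the same cell, so I can pick $y\in\mathbb{Z}$ together with distinct $a,a'\in A(h)$ such that $y\in Y(h,a)$ and $y+1\in Y(h,a')$. Since $\mathcal{R}_i(h)=\mathcal{R}_i$ contains both a peak-$y$ single-peaked preference ranking $y+1$ second (e.g., the right-leaning one with ranking $y,y+1,y+2,\ldots,y-1,y-2,\ldots$) and a peak-$(y+1)$ single-peaked preference, Lemma~\ref{proof 1} applies and yields $\{y,y+1\}\cap Y^*_h\neq\emptyset$, the desired contradiction.

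There is no real obstacle once the two preliminary observations are in place: the argument is a direct chaining of the two preceding lemmas, with the only subtlety being to verify their hypotheses --- in particular, that $\mathcal{R}_i(h)$ is the full domain $\mathcal{R}_i$ (by minimality of $h$) and that $[y]_i$ collapses to $\{y\}$ (by single-peakedness on $\mathbb{Z}$).
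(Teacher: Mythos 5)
Your proof is correct and follows essentially the same route as the paper's: assume $Y^*_h=\emptyset$, locate consecutive outcomes $y\in Y(h,a)$ and $y+1\in Y(h,a')$ for distinct actions, and invoke Lemma~\ref{proof 1} (using $\mathcal{R}_i(h)=\mathcal{R}_i$ from minimality) to reach a contradiction. The only difference is that you spell out, via Lemma~\ref{lemma: owned choices} and the triviality of $[y]_i$, why the sets $Y(h,a)$ must be disjoint and hence why such a split consecutive pair exists --- a step the paper's one-line proof leaves implicit.
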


\begin{proof}[Proof of Lemma~\ref{proof 1.5}]
Suppose $Y^*_{h}=\emptyset$.
Since $M$ is a gradual revelation mechanism, $A(\emptyset)$ must contain at least two choices. So, there exists $y\in Y$ such that $y\in
Y(h,a)$, $y+1\in Y(h,a')$ for $a\neq a'$ and $y,y+1\notin Y^*_{h}$ (and recall that $\mathcal{R}_i(h)=\mathcal{R}_i$), a contradiction to Lemma
\ref{proof 1}. So $Y^*_{h}$ must be nonempty.
\end{proof}

\begin{proof}[Proof of Lemma~\ref{proof 2}]
By Lemma~\ref{proof 1.5}, $Y^*_{\emptyset}\ne\emptyset$.
Let $y^*<y^{\circ}$ be two policies in $Y^*_{\emptyset}$. Suppose we had  $y'\in (y^*,y^{\circ})\cap Y$ but $y'\notin Y^*_{\emptyset}$. Since $y'\notin
Y^*_{\emptyset}$, there exists a preference profile $R$ such that $R_1$ ranks $y'$ at the top but the outcome of the mechanism is
$M(\mathbf{T}(R))=\tilde{y}\neq y'$. Assume without loss of generality that $\tilde{y}<y'$.

Define two preference profiles $R'$ and $R''_{-1}$ such that $R'_i:\tilde{y}$ and $R''_i: y'$ for all $i\neq 1$, and such that $R'_1: y'$ and $R'_1$ ranks $y^{\circ}$ strictly above $\tilde{y}$. Starting with the profile $R$ and inductively switching the preference of each agent $i\ne1$ from $R_i$ to $R'_i$, the strategyproofness of $M$ implies that $M(\mathbf{T}(R_1,R'_{-1}))=\tilde{y}$. Since $R'_1$ ranks $y^{\circ}$ strictly above $\tilde{y}$, since $y^{\circ}\in Y^*_{\emptyset}$, since $R'_1$ is single peaked, and since $M$ is strategyproof, we have that $M(\mathbf{T}(R'))\in (\tilde{y},y^{\circ}]$.

Assume for contradiction that $M(\mathbf{T}(R')) > y'$. Since $\tilde{y}<y'$, since $R'_2:\tilde{y}$, and since by strategyproofness for we have (similarly to he above argument for $R'$) that $M(\mathbf{T}(R''_2, R'_{-2}))>\tilde{y}$, we have that $M(\mathbf{T}(R''_2, R'_{-2}))\ge M(\mathbf{T}(R'))$ holds by strategyproofness. (Indeed, if we had $M(\mathbf{T}(R'))>M(\mathbf{T}(R''_2, R'_{-2}))>\tilde{y}$ then agent 2 with preference $R'_2$ would have an incentive to lie.) Inductively switching the preference of each agent $i\ne 1$ from $R'_i$ to $R''_i$ and applying the preceding argument, we obtain that $M(\mathbf{T}(R'_1, R''_{-1}))\ge M(\mathbf{T}(R'))$. Since $M(\mathbf{T}(R'))>y'$ we obtain a contradiction to unanimity, which requires $M(\mathbf{T}(R'_1, R''_{-1}))=y'$ as all preferences in $(R'_1, R'_{-1})$ have the ideal point $y'$.
 Therefore,
$M(\mathbf{T}(R')) \le y'$.

Therefore,
$M(\mathbf{T}(R_1,R'_{-1}))=\tilde{y}<M(\mathbf{T}(R'))\leq y'$, and so $M(\mathbf{T}(R'))P_1M(\mathbf{T}(R_1,R'_{-1}))$, contradicting
the strategyproofness of $M$. So we must have $y'\in Y^*_{\emptyset}$, and therefore $y'\in Y^*_{\emptyset}$ is a nonempty ``interval''.
\end{proof}

\begin{lemma}\label{proof 2.5}
Let $h$ be a nonterminal history. If $\overline{A^*_h}=\{a\}$ for some action $a$, then
for every $y^*\in Y^*_h$, either $y^*\le Y(h,a)$ or $y^*\ge Y(h,a)$.
\end{lemma}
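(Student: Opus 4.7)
The plan is to argue by contradiction using the maximality property imposed on $M$ at the opening of Appendix~\ref{app: single peaked}. Suppose $y^*\in Y^*_h$ together with $y_1,y_2\in Y(h,a)$ satisfy $y_1<y^*<y_2$; let $a^*\in A^*_h$ force $y^*$ and let $i=P(h)$. I would construct a mechanism $M'$ that coincides with $M$ outside the subtree at $h$, implements the same social choice function, is still an obviously incentive compatible gradual revelation mechanism, but whose node $h$ satisfies $|\overline{A^*_h}|\ge 2$, contradicting the maximality of $M$.

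I first analyze $\mathcal{R}_i(h,a)$. The OSP inequality at $h$, whose right-hand side always contains $y^*$ via the deviation to $a^*$, implies that every $R_i'\in\mathcal{R}_i(h,a)$ realizing $y_1$ (under some $R_{-i}$) satisfies $y_1\,R_i'\,y^*$; since $y_1<y^*$, single-peakedness then forces $\mathrm{peak}(R_i')<y^*$. The symmetric argument shows that every $R_i'\in\mathcal{R}_i(h,a)$ realizing $y_2$ has $\mathrm{peak}(R_i')>y^*$. Moreover, Property~\ref{prop: force-asap-and-reveal} applied to the truth behavior of any putative $R_i'\in\mathcal{R}_i(h,a)$ with $\mathrm{peak}(R_i')=y^*$ (OSP would force all reachable outcomes against every $B_{-i}$ to equal the top choice $y^*$) would force $\mathcal{R}_i(h,a)$ to be a singleton, which is impossible since it contains at least two distinct preferences realizing $y_1\ne y_2$. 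Hence $\mathcal{R}_i(h,a)$ partitions into the nonempty sets $\mathcal{R}_i^L:=\{R_i'\in\mathcal{R}_i(h,a):\mathrm{peak}(R_i')<y^*\}$ and $\mathcal{R}_i^R:=\{R_i'\in\mathcal{R}_i(h,a):\mathrm{peak}(R_i')>y^*\}$. Define $M'$ by replacing $a$ at $h$ with two actions $a_L$ and $a_R$ labeled by $\mathcal{R}_i^L$ and $\mathcal{R}_i^R$ respectively, each with subtree obtained from the subtree of $a$ in $M$ by pruning every $i$-action whose preference cell has empty intersection with the corresponding side, and condensing any resulting singleton action sets. Truth-telling paths and outcomes for every $R\in\mathcal{R}$ are unchanged, so $M'$ implements the same social choice function as $M$; OSP at $h$ in $M'$ is preserved because, for $R_i'\in\mathcal{R}_i^L$, the new deviation to $a_R$ grants access only to outcomes which by the preceding OSP analysis lie in $[y^*,\infty)$ and whose best element for a peak-$<y^*$ preference is $R_i'$-weakly no better than $y^*$---a benchmark already available via $a^*$; and OSP inside each pruned subtree is inherited because the pruning only shrinks both the worst-case outcomes under truth and the available deviations.

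The remaining, and main, obstacle is showing that both $a_L$ and $a_R$ are nondictatorial in $M'$, i.e.\ that $|Y(h,a_L)|\ge 2$ and $|Y(h,a_R)|\ge 2$. This is where Property~\ref{prop: force-asap-and-reveal} carries the weight. Suppose for contradiction that $Y(h,a_L)=\{y_1\}$, so that for every $R_i'\in\mathcal{R}_i^L$ and every $R_{-i}$, $M(\mathbf{T}(R_i',R_{-i}))=y_1$. Fix any $R_i'\in\mathcal{R}_i^L$ and set $B_i:=\mathbf{T}_i(R_i')$. For every $B_{-i}$ with $h\in\Path(B_i,B_{-i})$, the choices by the other agents at their respective nodes along the realized path inside the subtree of $a$ lie in a nested chain of $\mathcal{R}_j(\cdot)$-cells (by the gradual revelation structure), hence the chain intersects in a nonempty set; thus $B_{-i}$ agrees along the realized path with the truth strategy of some preference profile $R_{-i}'$, yielding $M(B_i,B_{-i})=M(\mathbf{T}(R_i',R_{-i}'))=y_1$. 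Hence the outcome set across all such $B_{-i}$ is the $i$-indifferent singleton $\{y_1\}$, and Property~\ref{prop: force-asap-and-reveal} forces $\mathcal{R}_i(h,a)$ to be a singleton, contradicting the existence of preferences in both $\mathcal{R}_i^L$ and $\mathcal{R}_i^R$. The symmetric argument rules out $Y(h,a_R)=\{y_2\}$, so $|\overline{A^*_h}|_{M'}\ge 2$, contradicting the maximality of $M$.
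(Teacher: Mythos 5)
Your proposal is correct and follows essentially the same route as the paper's proof: split the nondictatorial action $a$ into a ``left'' and a ``right'' action according to the position of the peak relative to $y^*$, check that the resulting mechanism is still an obviously incentive compatible gradual revelation mechanism implementing the same social choice function (using that the deviation to forcing $y^*$ already bounded the worst case of $a$, and Property~\ref{prop: force-asap-and-reveal} to ensure both new actions remain nondictatorial), and contradict the maximality property imposed on $M$. The only quibble is your side-step excluding peak-$y^*$ types from $\mathcal{R}_i(h,a)$: it is dispensable (the paper simply places peaks $\le y^*$ in the left cell), and your stated reason for the contradiction is imprecise --- a singleton $\mathcal{R}_i(h,a)$ is not directly refuted by two realized outcomes, though Property~\ref{prop: force-asap-and-reveal} in fact makes the whole profile set $\mathcal{R}(h,a)$ a singleton, which does contradict $|Y(h,a)|\ge 2$.
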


\begin{proof}[Proof of Lemma~\ref{proof 2.5}]
Let $i=P(h)$.
Assume for contradiction that $X = \{y \in Y(h,a) \mid y < y^*\}$ and $Z = \{y \in Y(h,a) \mid y > y^*\}$ are both nonempty for some $y^*\in Y^*_h$. Let

\[Y_X = \{R_i \in \mathcal{R}_i(h) \mid \mathbf{T}_i(R_i)(h)=a \And \mbox{the peak of $R_i$ is $\le y^*$}\},\]
\[Y_Z = \{R_i \in \mathcal{R}_i(h) \mid \mathbf{T}_i(R_i)(h)=a \And \mbox{the peak of $R_i$ is $> y^*$}\}.\]

We claim that for every $R_i \in Y_X$, for every $R_{-i}\in\mathcal{R}_{-i}(h)$ it is the case that $M(\mathbf{T}(R)) \in X \cup
\{y^*\}$. Indeed, if $M(\mathbf{T}(R)) \in Z$, then we would have a contradiction to strategyproofness because at $h$ agent $i$ can ensure that
the outcome is $y^*$, which he prefers over $M(\mathbf{T}(R))>y^*$. Similarly, for every $R_i \in Y_Z$, for every $R_{-i}\in\mathcal{R}_{-i}(h)$ it is the case that $M(\mathbf{T}(R)) \in Z \cup \{y^*\}$.
Since $X$ and $Z$ are nonempty, $Y_X$ and $Y_Z$ are nonempty as well. (Otherwise we would have a contradiction to the
fact that $Y(h,a)$ contains the disjoint union of $X$ and $Y$, and possibly also $y^*$.) 

Define two new two new distinct actions $l$ and $r$. Let  $\mathcal{R}_i(l)=Y_X$ and let $l$ lead to a copy of the subtree that $a$ leads to, where all choices outside $\mathcal{R}_i(l)$ have been pruned. Similarly Let  $\mathcal{R}_i(r)=Y_Z$ and let $r$ lead to a copy of the subtree that $a$ leads to, where all choices outside $\mathcal{R}_i(r)$ have been pruned. Define a new tree by replacing $a\in A(h)$ with the two new actions $l$ and $r$. If this new mechanism $\hat{M}$ is obviously incentive compatible (under the above-defined truthtelling strategy), then it implements the same social choice function as the original mechanism, and so our assumption that if $|\overline{A^*_h}|=1$ then no
modification can result in $|\overline{A^*_h}|>1$ is violated. (Note that indeed we have in $\hat{M}$ that $l,r\in\overline{A^*_h}$ and so $|\overline{A^*_h}|=2$, since by gradual revelation any preference $R_i$ that chooses $a$ in the original mechanism does not ensure the final outcome, and so also every preference $R_i$ that chooses $l$ or $r$ in $\hat{M}$ does not ensure the final outcome.)

To see that the new mechanism $\hat{M} $ is indeed obviously incentive compatible (under the above-defined truthtelling strategy), note that for every history $h'$ that is not a subhistory
of $h$, the conditions for obvious strategyproofness do not change, while for every history $h'$ that is a strict subhistory of $h$, when a
player makes a choice, he faces a pruned version of the subtree he faced in the original mechanism, and therefore obvious strategyproofness is maintained. It
remains to show that obvious strategyproofness is not violated for $i$ at~$h$. Indeed, obvious strategyproofness when playing any action in
$A^*_h$ is maintained since $i$ would --- under the original mechanism and under $\hat{M}$ --- for any deviation at best get his best choice from $Y(h)$, which has not changed.  
 Similarly, obvious strategyproofness when playing $l$ or~$r$ is
maintained w.r.t.\ deviating to forcing some outcome, because the worst outcome when playing $l$ or $r$ in $\hat{M}$ is no worse than the worst outcome when playing $a$ in
the original mechanism, and deviation was not incentivized there. Finally, obvious strategyproofness when playing $l$ is maintained w.r.t.\
deviating to playing $r$, since the best outcome that a type that chooses $l$ (and by definition has peak $\le y^*$) can get by playing $r$ is no
better than $y^*$ (since all possible outcomes when playing $r$ are in $Z\cup\{y^*\}$ and therefore are $\ge y^*$), however the worst outcome
such a type gets when playing $l$ is $\le y^*$.
Similarly, obvious strategyproofness when playing $r$ is maintained w.r.t.\ deviating to playing $l$ as well.
\end{proof}

\begin{proof}[Proof of Lemma~\ref{proof 3}]
Assume that $H^*<\infty$ (the proof when $H^*=\infty$ and $L^*>-\infty$ is analogous to the proof when $H^*<\infty$ and $L^*=-\infty$). Assume for contradiction that there exist two different actions $a,a'\in A(\emptyset)$ such that $Y(a)\cap
{[H^*+1,\infty)}\neq\emptyset \neq Y(a')\cap {[H^*+1,\infty)}$.
So there must exist some $y>H^*$ such that
$y\in Y(a)$ and $y+1\in Y(a')$, a contradiction to Lemma \ref{proof 1}. Therefore there is at most one action $r\in A(\emptyset)$ such that
$Y(r)\cap {[H^*+1,\infty)}\neq \emptyset$. By unanimity, $\bigcup_{a\in A(\emptyset)}Y(a)=Y$, and so for each  $y\in  {[H^*+1,\infty)}$ there must exist an action $a$ such that $y\in Y(a)$. By the preceding two statements,
there exists a unique action $r \in A(\emptyset)$ s.t.\
$Y(r)\cap {[H^*+1,\infty)}\neq \emptyset$, and furthermore, $Y(r)\supseteq{[H^*+1,\infty)}\cap\mathbb{Z}$.

Assume for contradiction that $H^*\notin Y(r)$. Consider the preference profile $R_1$ that ranks $H^*+1$ first, and ranks $H^*$ second. By
unanimity, $\mathbf{T}_1(R_1)(\emptyset)=r$ (as no other action has $H^*+1$ as a possible outcome), however, since $H^*+1\notin Y^*_\emptyset$,
we have that there exists a preference profile $R_{-i}\in\mathcal{R}_{-i}$ s.t.\ $M(\mathbf{T}(R))\ne H^*+1$. Therefore, since
$H^*\in Y^*_\emptyset$, we obtain a contradiction to obvious strategyproofness. Therefore, $Y(r)\supseteq{[H^*,\infty)}\cap\mathbb{Z}$.

Mutatis mutandis if $L^*>-\infty$.
there exists a unique action $l \in A(\emptyset)$ s.t.\ $Y(l)\cap {(-\infty,L^*-1]}\neq \emptyset$, and furthermore,
$Y(l)\supseteq{(-\infty,L^*]}\cap\mathbb{Z}$.
We note that we have not yet determined whether or not $l=r$.

To complete the proof, we reason by cases. Assume first that either $L^*=-\infty$ or $r\ne l$.
Therefore, if $L^*>-\infty$, then $Y(r)\cap{(-\infty,L^*-1]}=\emptyset$, and so $Y(r)\subseteq [L^*,\infty)\cap\mathbb{Z}$.
As regardless of whether or not $L^*>-\infty$, we have in the current case that $Y(r)\subseteq [L^*,\infty)\cap\mathbb{Z}$, to complete the proof
for this case,
it is enough to show that $Y(r)\cap[L^*,H^*-1]=\emptyset$. Assume for contradiction that there exists $y\in Y(r)\cap[L^*,H^*-1]$. Therefore,
there exists a preference profile $R$ s.t.\
$\mathbf{T}_1(R_1)(\emptyset)=r$ and $M(\mathbf{T}(R))=y$.
Since the mechanism is a gradual revelation mechanism, there exists $y'\in Y(r)\setminus\{y\}$
and a preference profile $R'_{-1}$ s.t.\ $M(\mathbf{T}(R_1,R'_{-1}))=y'$. By strategyproofness and since $y\in Y^*_{\emptyset}$, we have that $y'
R_1 y$.
If $y R_1 y'$ as well, then the peak $y''$ of $R_1$ is between $y$ and $y'$; by strategyproofness, therefore $y''\in Y(r) \setminus
Y^*_{\emptyset}$ and so $y''>H^*$.
If $y' P_1 y$, then by strategyproofness, $y'\notin Y^*_{\emptyset}$; therefore, as $y'\in Y(r)$, we have that $y'>H^*$. Either way, there exists
$y''>H^*$ that $R_1$ ranks strictly above $y$. Since $H^*>y$, we therefore obtain a contradiction to strategyproofness, as $H^*\in
Y^*_{\emptyset}$ is preferred by $R_1$ over $M(\mathbf{T}(R))=y$.

It remains to consider the case in which $L^*>-\infty$ (recall also that $H^*<\infty$) and $r=l$, but such a setting is impossible, as it
contradicts Lemma~\ref{proof 2.5}.
\end{proof}

\begin{proof}[Proof of Lemma~\ref{proof 4}]
Let $j=P(h')$ and  $i=P(h)$.
To see that there is no
action $a'$ at $h$ with $Y(h,a')\subseteq[F\!+\!2,\infty)$, suppose  there was such an action. Consider a profile of preferences $R$ with
$R_j: F\!+\!1, F$, with $R_k:F\!+\!1$ for all $k\in N\setminus\{i,j\}$, and with $R_i\in\mathcal{R}_i(h,a')$. (Since for each $k\ne i$, agent $k$'s ideal point is larger than $F$, we have that $R$ reaches $h$.) Since $Y(h,a')\subseteq[F\!+\!2,\infty)$, necessarily $M(\mathbf{T}(R))\geq F\!+\!2$. This is a contradiction to strategyproofness, since $j$ strictly prefers $F$ to $F\!+\!2$ and $M(\mathbf{T}(R'_j,R_{-j}))=F$ for $R'_j:F$ (since $F\in Y^*_{h'}$). There is in sum no
action $a'$ at $h$ with $Y(h,a')\subseteq[F\!+\!2,\infty)$.

Since $Y(h)=[F,\infty)\cap\mathbb{Z}$ and since there is more than one action in $A(h)$ (by gradual revelation),
Lemma~\ref{proof 1} yields $Y^*_h\ne\emptyset$.
If $Y^*_h$ did intersect
$[F\!+\!2,\infty)$, there would be an action $a'\in A(h)$ with $Y(h,a')\subseteq[F\!+\!2,\infty)$, a contradiction to the preceding paragraph.
Therefore, $\emptyset\ne Y^*_h\subseteq\{F,F\!+\!1\}$.

Let $G:=\max Y^*_h$.
Since $Y(h)=\cup_{a'\in A(h)}Y(h,a')$, there exists some action $a'\in A(h)$ such that $[G\!+\!1,\infty)\cap Y(h,a')\ne\emptyset$. By Lemma ~\ref{proof 1}, there is at most one such action, call it $r$. In sum we have $[G\!+\!1,\infty)\cap \mathbb{Z}\subseteq Y(h,r)$. 
To see that $G\in Y(h,r)$, suppose not.
Let $R_i:G\!+\!1,G$. ($R_i\in\mathcal{R}_i(h)$ since agent $i$'s ideal point is larger than $F$.)
We note that truthtelling for agent $i$ requires choosing $r$ at $h$. Since $G\!+\!1\notin Y^*_h$, there exists $R_{-i}\in\mathcal{R}_{-i}(h)$ s.t.\ $M(\mathbf{T}(R))\ne G\!+\!1$.
Since $G\notin Y(h,r)$, we furthermore have that $M(\mathbf{T}(R))\ne G$.
Therefore, $i$ strictly prefers $G$ to $M(\mathbf{T}(R))$, even though $G\in Y^*_h$ --- a contradiction to strategyproofness.
 In sum we have that $[G,\infty)\cap \mathbb{Z}\subseteq Y(h,r)$, and that for every $a'\in A(h)\setminus\{r\}$, the set $Y(h,a')$ does not intersect $[G\!+\!1,\infty)$. We complete the proof by reasoning over cases.

If $Y^*_h=\{F\}$, then since $G=F$ we have that $Y(h,r)\supseteq[F,\infty)\cap\mathbb{Z}$ and so $Y(h,r)=[F,\infty)\cap\mathbb{Z}$.
Since for any action $a'\neq r$, the set $Y(h,a')$ does not intersect $[G\!+\!1,\infty)=[F\!+\!1,\infty)$, we therefore have that $\overline{A^*_h}=\{r\}$, as required.

The two remaining cases are those where $Y^*_h$ is either $\{F,F\!+\!1\}$ or $\{F\!+\!1\}$. In either of these cases, since $G=F\!+\!1$, we have that $Y(h,r)\supseteq[F\!+\!1,\infty)\cap\mathbb{Z}$
and that for any action $a'\ne r$, the set $Y(h,a')$ does not intersect $[F\!+\!2,\infty)$, i.e., $Y(h,a')\subseteq\{F,F\!+\!1\}$.

Assume first that $Y^*_h=\{F,F\!+\!1\}$.
Since any action $a'\ne r$ at $h$ has $Y(h,a) \subseteq \{F,F\!+\!1\} = Y^*_h$, by gradual revelation and strategyproofness, $a' \in A^*_h$. Therefore, $\overline{A^*_h}=\{r\}$. It remains to show that $Y(h,r)=[F\!+\!1,\infty)\cap\mathbb{Z}$, i.e., that $F\notin Y(h,r)$.
Note that for preferences with ideal point $\le F\!+\!1$, agent $i$ can force at $h$ his most-preferred option from $Y(h)$, and so, by gradual revelation and strategyproofness, chooses an action in $A^*_h$, i.e., an action other than~$r$. Therefore, agent $i$ chooses $r$ only if her ideal point is greater than $F\!+\!1$. Assume for contradiction that $F \in Y(h,r)$. Therefore, there exists some
preference profile $R\in\mathcal{R}(h,r)$ with $M(\mathbf{T}(R))=F$. Since $R_i$ chooses $r$ at $h$, the ideal point of $R_i$ is greater than $F\!+\!1$, so $R_i$ strictly prefers $F\!+\!1$ (which he can force at $h$) to $F=M(\mathbf{T}(R))$ --- a contradiction to strategyproofness. So $Y(h,r)=[F\!+\!1,\infty)\cap\mathbb{Z}$, as required.

Finally, assume that $Y^*_h=\{F\!+\!1\}$.
By Lemma~\ref{proof 2.5}, we have either $Y(h,r)\leq F\!+\!1$ or $Y(h,r)\geq F\!+\!1$. Therefore, $Y(h,r)=[F\!+\!1,\infty)\cap\mathbb{Z}$.
Since $F \in Y(h)$ and $F\notin Y^*_h$, there exists an action $a \in \overline{A^*_h}$ s.t.\ $F \in Y(h,a)$ and $|Y(h,a)|>1$.
Since $Y(h,a)\subseteq\{F,F\!+\!1\}$, we have that $Y(h,a)=\{F,F\!+\!1\}$.
Therefore, since $F\!+\!1\in Y^*_h$, by gradual revelation and strategyproofness every $R_i\in\mathcal{R}_i(h)$ that chooses $a$ at $h$ has peak $\le F$.
In particular, since $\mathcal{R}_i(h,a)\ne\emptyset$, there exists $R_i\in\mathcal{R}_i(h)$ that has peak $\le F$. Therefore, since $F \notin Y^*_h$, we have by Lemma~\ref{lemma: owned choices}
that no other action $a'\in A(h)\setminus\{a\}$ has $F\in Y(h,a')$. Therefore (and since for any $a'\in A(h)\setminus\{r\}$ we have $Y(h,a')\subseteq\{F,F\!+\!1\}$), we conclude that any action $a'\in A(h)\setminus\{a,r\}$ has $Y(h,a')=\{F\!+\!1\}$, and so $\overline{A^*_h}=\{a,r\}$, as required.
\end{proof}

\begin{proof}[Proof of Lemma~\ref{proof 5}]
Completely analogous the the proof of Lemma~\ref{proof 4}.
\end{proof}

\section{Proof of Theorem~\ref{theorem: quasilinear}}\label{app: quasilinear}

\begin{proof}[Proof of Theorem~\ref{theorem: quasilinear}]
Assume the contrary.
Call the goods $a$ and $b$ and the bidders $1$ and $2$.
We denote by $(v(a), v(b))$ the preferences induced by a valuation of $v(a)$ for good $a$ and $v(b)$ for good $b$.

\medskip

We first consider the possible outcomes when the preferences of each bidder belong to the set $\{(20,2),(8,8),(6,6)\}$. (For ease of
presentation, we make no attempt to choose particularly low valuations or to tighten the analysis of utilities below.)

If both bidders have the same preferences $(v(a),v(b))$, then we claim that the utility of each bidder is at most $2$. (Under VCG with the Clarke
pivot rule, the utility of each bidder would be $0$.) Since in this case any allocation is welfare maximizing, we reason by cases.
\begin{itemize}
\item
If bidder~$1$ is awarded both goods and bidder~$2$ is awarded no good, then bidder~$2$ is charged nothing. Assume for contradiction that
bidder~$1$ is charged less than $v(a)+v(b)-2$. Then, bidder $1$ with preferences $(v(a)-1,v(b)-1)$ (with the preferences of bidder~$2$
unchanged), who would get no good if declaring his true preferences (and thus have utility~$0$), would rather misrepresent himself as
$(v(a),v(b))$ and get both goods (and have strictly positive utility), in contradiction to strategyproofness. So, the utility of bidder~$1$ is
at most $v(a)+v(b)-(v(a)+v(b)-2)=2$. The case in which bidder~$2$ is awarded both goods and bidder~$1$ is awarded no good is analyzed
analogously.
\item
If bidder~$1$ is awarded $a$ and bidder~$2$ is awarded $b$, then assume for contradiction that bidder~$1$ is charged less than $v(a)-1$. In
this case, bidder $1$ with preferences $(v(a)-1,0)$ (with the preferences of bidder~$2$ unchanged), who would get no good if declaring his true
preferences (and thus have utility $0$), would rather misrepresent himself as $(v(a),v(b))$ and get good $a$ (and have strictly positive
utility), in contradiction to strategyproofness. So, the utility of bidder~$1$ is at most $v(a)-(v(a)-1)=1$. Similarly, bidder~$2$ is charged
at least $v(b)-1$ and so has utility at most $1$.
The case in which bidder~$2$ is awarded $a$ and bidder~$1$ is awarded $b$ is is analyzed analogously.
\end{itemize}

If bidder~$1$ has preferences $(20,2)$ and bidder~$2$ has preferences $(8,8)$, then bidder~$1$ gets good $a$ and bidder~$2$ gets good $b$. In
this case, bidder~$1$ is charged at most $9$. Indeed, to see this note that if bidder~$1$ had preferences $(9,2)$ (with the preferences of
bidder~$2$ unchanged), then he would also get good $a$, but by strategyproofness would be charged at most $9$ as otherwise his utility would be
negative and he would have rather misrepresented his preferences to be $(0,0)$ for a utility of $0$. Therefore, since bidder~$1$ with preferences
$(20,2)$ has no incentive to misrepresent his preferences as $(9,2)$, he is charged at most $9$ as well.
Similarly, bidder~$2$ is charged at most $3$. (Under VCG with the Clarke pivot rule, they would be charged $8$ and $2$, respectively.) Similarly,
if bidder~$1$ has preferences $(8,8)$ and bidder~$2$ has preferences $(20,2)$, then bidder~$1$ is charged at most $3$ and bidder~$2$ is charged
at most~$9$.

If bidder~$1$ has preferences $(20,2)$ and bidder~$2$ has preferences $(6,6)$, then bidder~$1$ gets good $a$ and bidder~$2$ gets good $b$. In
this case, similarly to the above analysis, bidder~$1$ is charged at most $7$ and bidder~$2$ is charged at most $3$. (Under VCG with the Clarke
pivot rule, they would be charged $6$ and $2$, respectively.) Similarly, if bidder~$1$ has preferences $(6,6)$ and bidder~$2$ has preferences
$(20,2)$, then bidder~$1$ is charged at most $3$ and bidder~$2$ is charged at most $7$.

Finally, If bidder~$1$ has preferences $(8,8)$ and bidder~$2$ has preferences $(6,6)$, then bidder~$1$ gets both goods and bidder~$2$ gets no
good (and is charged nothing). In this case, bidder~$1$ is charged at most $14$.
Indeed, to see this note that if bidder~$1$ had preferences $(7,7)$ (with the preferences of bidder~$2$ unchanged), then he would also get both
goods, but by strategyproofness would be charged at most $14$ as otherwise his utility would be negative and he would have rather misrepresented
his preferences to be $(0,0)$ for a utility of $0$. Therefore, since bidder~$1$ with preferences $(8,8)$ has no incentive to misrepresent his
preferences as $(7,7)$, he is charged at most $14$ as well. (Under VCG with the Clarke pivot rule, bidder~$1$ would be charged $12$.) Similarly,
if bidder~$1$ has preferences $(6,6)$ and bidder~$2$ has preferences $(8,8)$, then bidder~$1$ is charged nothing and bidder~$2$ is charged at
most $14$.

\medskip

To see that the given social choice function cannot be implemented by a strategyproof mechanism, suppose it could, and let $\hat{M}$ be an
obviously incentive compatible gradual revelation mechanism that implements it. We now prune $\hat{M}$ by restricting the preferences of both
bidders to only be in the set $\{(20,2),(8,8),(6,6)\}$, to obtain an obviously incentive compatible gradual revelation mechanism $M$ for the
restricted preference domain.

Assume without loss of generality that $P(\emptyset)=1$. Since the choice of bidder~$1$ at $\emptyset$ is real (i.e., more than one action
exists), then since bidder~$1$ has only three possible preferences, some action $a$ is chosen by a single preference of bidder~$1$, and so
reveals his preferences (any other preference for bidder~$1$ chooses an action different from $a$). We reach a contradiction by reason by cases.

If action $a$ is chosen by the preferences $(20,2)$, then the minimal utility bidder~$1$ can get by reporting truthfully is at most $2$ (when
bidder~$2$ has preferences $(20,2)$ as well). Nonetheless, the maximal utility bidder~$1$ can get by deviating by misreporting his preferences as
$(8,8)$ is at least $22-14=8>2$ (when bidder~$2$ has preferences $(6,6)$). This contradicts obvious strategyproofness.

If action $a$ is chosen by the preferences $(8,8)$, then the minimal utility bidder~$1$ can get by reporting truthfully is at most $2$ (when
bidder~$2$ has preferences $(8,8)$ as well). Nonetheless, the maximal utility bidder~$1$ can get by deviating by misreporting his preferences as
$(6,6)$ is at least $8-3=5>2$ (when bidder~$2$ has preferences $(20,2)$). This contradicts obvious strategyproofness.

Finally, if action $a$ is chosen by the preferences $(6,6)$, then the minimal utility bidder~$1$ can get by reporting truthfully is at most $2$
(when bidder~$2$ has preferences $(6,6)$ as well). Nonetheless, the maximal utility bidder~$1$ can get by deviating by misreporting his
preferences as $(8,8)$ is at least $6-3=3>2$ (when bidder~$2$ has preferences $(20,2)$). This contradicts obvious strategyproofness.
\end{proof}

\section{Proof of Theorem~\ref{theorem: houses}}\label{app: house-matching}

Before beginning our analysis of housing problems, we introduce some more notation, which helps simplify it. While $Y^*_h$ has been defined as the set of outcomes that agent $P(h)$ can enforce at $h$, we here somewhat abuse notation by considering $Y^*_h$ as the set of houses that agent $P(h)$ can choose to be matched with at $h$. So $o\in Y^*_h$ means that $P(h)$ as an action $a\in A(h)$ such that $P(h)$ is matched with~$o$ for any outcome $y$ associated with a terminal history $h'$ with $(h,a)\subseteq h'$. While $M(B)$ generally denotes the outcome of the mechanism $M$ given the behavior profile~$B$, we here let $M(B)(i)$ be the house that $i$ is matched with under the outcome $M(B)$. Since the agents' preferences over matchings are functions of their preferences over the houses they are matched with, and since only a few top- and bottom-ranked houses matter for many of our arguments, we write $R_i:o$ for an (arbitrary) preference $R_i$ that ranks house $o$ at the top. Similarly $R_i:o, o'$
 denotes a preference that ranks $o$ above $o'$, which it ranks in turn above all other houses. This definition is extended to lists of any
 length. Analogously, $R_i:\ldots,o$ denotes a preference that ranks $o$ at the bottom, and $R_i:o',\ldots,o$ denotes a preference that ranks $o'$ at the top and $o$ at the bottom.

\begin{lemma}\label{lemma: SBL is OSP}
Any social choice function implemented via sequential barter with lurkers is Pareto-optimal. Moreover, such a social choice function can
be OSP-implemented.
\end{lemma}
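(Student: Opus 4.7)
The plan is to prove obvious strategyproofness and Pareto optimality separately, with the main work going to the OSP argument. Define the truthtelling strategy so that at each question step for agent $i$, she reports truthfully whether her top-ranked house in $O_i$ lies in $D_i$, naming that house if so. Fix a question step for $i$, and let $o^{*}$ denote her true top of $O_i$ and $o^{**}$ her true top of $D_i$. If $o^{*}=o^{**}$, truthtelling immediately secures $o^{*}$, and no deviation can do strictly better since every outcome matches $i$ to a house in $O_i$. Otherwise the truthful answer is ``no,'' and any deviation (of the form ``yes, $o$'') obtains some $o\in D_i$, which is $R_i$-weakly-worse than $o^{**}$. It then suffices to establish the key invariant: every continuation of truthtelling eventually matches $i$ to a house that is $R_i$-weakly-preferred to $o^{**}$.

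The plan for the invariant is induction on the events that can trigger a further question step for $i$. An endowment step only enlarges $D_i$, so its top can only weakly improve. A matching step removes a house $o$: if $o\in D_i$, then $D_i$ is reset to the updated $O_i$, and since every house in $O_i\setminus D_i$ is $R_i$-preferred to $o^{**}$ (this is precisely why $i$ said ``no''), the top of the new $D_i$ remains at least as good as $o^{**}$; if $o\notin D_i$, then $D_i$ is unchanged. A sorting step for another agent only shrinks $O_i$ without altering $D_i$. When $i$ eventually answers ``yes,'' she receives the then-current top of $D_i$, which by the invariant is at least as good as $o^{**}$, completing the proof of obvious dominance. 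The main obstacle is showing that the outcome of the matching-step loop over $T$ does not depend on the order of traversal: otherwise, different agents making their simultaneously triggered choices could lay conflicting claims to the same houses. The plan here is to formalize the three-paragraph case analysis that precedes the theorem statement --- distinguishing by which of the active nonlurker $i$, the other nonlurker $t$, or a lurker acts first --- and to verify that the two ``puzzling'' constraints on the endowed set $H$ in the endowment step, together with the restrictions on the sets $O_j$ maintained in the endowment and sorting steps, rule out such conflicts regardless of traversal order. This order-independence is the structural property that rescues OSP once three or more active traders are present, and it is the most delicate part of the argument.

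For Pareto optimality, the plan is to list the houses matched within a single mechanism round as $(o_1,i_1),\ldots,(o_k,i_k)$ according to the very order produced by the ``simultaneous fulfillment'' analysis used above: $o_j$ is the $R_{i_j}$-top choice of $i_j$ among all houses unmatched at the start of the round and outside $\{o_1,\ldots,o_{j-1}\}$. Concatenating these orderings across rounds yields a global linear order on the matched agents, in which each agent receives her most-preferred house among all houses not already matched to an earlier agent in the order. A standard argument then shows that any Pareto improvement over the resulting matching would have to strictly improve some agent in this sequence, contradicting the fact that she was already matched to her top choice among the houses still available at her matching time.
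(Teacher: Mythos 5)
Your proposal is correct and takes essentially the same route as the paper, whose formal proof of this lemma simply defers to the outline in Section~\ref{sec: house-matching}: the crux is the order-independence of the matching-step loop over $T$ (guaranteed by the restrictions on $H$ and on the sets $O_i$, established by exactly the three-case analysis you invoke), and Pareto optimality via the greedy ordering of the matches made in each round. One minor imprecision: when $D_i$ is reset to $O_i$ after a house in $D_i$ is matched away, it is not true that \emph{every} house in $O_i\setminus D_i$ is $R_i$-preferred to $o^{**}$ (only the top of $O_i$ need be), but since that top house survives the reset into the new $D_i$, your invariant still holds.
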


\begin{proof}[Proof of Lemma~\ref{lemma: SBL is OSP}]
As outlined in Section~\ref{sec: house-matching}.
\end{proof}

 Theorem \ref{theorem: revelation} shows that any OSP implementable rule is implementable via an obviously incentive compatible gradual revelation mechanism. So, to show the converse direction, it is sufficient to show that any Pareto optimal rule that is implementable via an obviously incentive compatible gradual revelation mechanism that corresponds to  sequential barter with lurkers. For the reminder of the proof,
let $M$ be an obviously incentive compatible gradual revelation mechanism that implements a Pareto optimal rule. Since $\mathcal{R}$ is finite, all histories in $M$ are finite.

\begin{definition}
Let $h$ be a history of $M$ and $i\in N$ an agent.
\begin{enumerate}
\item
$O_i(h)\eqdef\{o: M(\mathbf{T}(R))(i)=o\text{ for some }R\in \mathcal{R}(h) \}$ is the set of houses assigned to $i$ in any terminal history following on $h$.
\item
Agent $i$ is matched to house $o$ at $h$ if $ O_i(h)=\{o\}$.
\item
 $D_i(h)\eqdef\{o: o\in Y^*_{h'}\text{ for some }h'\subseteq h\text{ with }P(h')=i \}$ is the set of houses that agent $i$ can choose to be matched with at some subhistory  of $h$.
\item
$D^<_i(h)\eqdef\{o: o\in Y^*_{h'}\text{ for some }h'\subsetneq h\text{ with }P(h')=i \}=D_i(h')$ for $h=(h',a)$ is the set of houses that agent $i$ can choose to be matched with at some \emph{strict} subhistory of $h$.
 \item
 If $P(h)=i$, then we say that agent $i$ moves at $h$. If $P(h')=i$ for a strict subhistory $h'$ of $h$, then we say that
 $i$ moves before $h$.
 \end{enumerate}
\end{definition}

\begin{remark}
$O_i(h)$ is (weakly) decreasing in $h$, while $D_i(h)$ and $D^<_i(h)$ are both (weakly) increasing in $h$. $D^<_i(h)$ is always a subset of $D_i(h)$, and that $i$ moves at $h$ if $D_i(h)$ and $D^<_i(h)$ differ.
\end{remark}

\begin{lemma}\label{lemma: one-undetermined}
Fix  a history $h$ of $M$ and let $i=P(h)$.
\begin{enumerate}
\item
There exists at most one action $a\in \overline{A^*_h}$ --- denote it by $\tilde{a}$.
\item
The behavior induced by the truthtelling strategy for a preference $R_i\in\mathcal{R}_i(h)$ at $h$ is as follows. Let $o=\max_{R_i} O_i(h)$.
\begin{itemize}
\item
If $o\in Y^*_h$, then there is an action $a\in A^*$ such that $\{R_i\}=\mathcal{R}_i(h,a)$, and furthermore, $O_i(h,a)=\{o\}$. Choose this action.
\item
Otherwise, the action $\tilde{a}$ exists. Choose this action.
\end{itemize}
\end{enumerate}
\end{lemma}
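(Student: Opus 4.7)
The plan is to prove the first claim of the lemma (that $|\overline{A^*_h}|\le 1$) and then derive the truthtelling characterization from it using gradual-revelation Property~\ref{prop: force-asap-and-reveal} together with obvious incentive compatibility. Throughout I use the housing-problem conventions of this section: $a\in A^*_h$ iff $|O_i(h,a)|=1$, and $Y^*_h$ is the set of houses that $i=P(h)$ can thus force at $h$.

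For the first claim, I argue by contradiction: assume that two distinct actions $a_1,a_2\in\overline{A^*_h}$ exist, so $|O_i(h,a_j)|\ge 2$ for $j=1,2$. The core device is an \emph{interleaving preference}: pick $o_{j1},o_{j2}\in O_i(h,a_j)$ and consider a preference $R_i\in\mathcal{R}_i(h)$ whose top four houses read $o_{11},o_{21},o_{12},o_{22}$. The obvious-incentive-compatibility inequality at $h$ (worst outcome under truthtelling $R_i$-dominates the best outcome under any deviation) rules out truthtelling picking $a_1$ or $a_2$, since the interleaving places a worst-case outcome of $a_j$ strictly below a best-case outcome of $a_{3-j}$. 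Hence the truthtelling action $\bar{a}$ for $R_i$ lies in $A^*_h$, so $O_i(h,\bar{a})=\{o^*\}$ for some $o^*\in Y^*_h$, and the OSP inequality comparing $\bar{a}$ to the deviation $a_1$ forces $o^*=o_{11}$, the $R_i$-top of $O_i(h,a_1)$. Varying $o_{11}$ over $O_i(h,a_1)\setminus O_i(h,a_2)$ puts that set into $Y^*_h$; for houses $o\in O_i(h,a_1)\cap O_i(h,a_2)$, Lemma~\ref{lemma: owned choices} (applied to any $R_i$ with $o$ on top of $O_i(h)$) directly yields $o\in Y^*_h$. Hence $O_i(h,a_1)\subseteq Y^*_h$, and symmetrically $O_i(h,a_2)\subseteq Y^*_h$. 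For the final contradiction, pick $R_i\in\mathcal{R}_i(h,a_1)$ (nonempty in an obviously incentive-compatible gradual revelation mechanism, since every action is played by some truthtelling path), and let $a^{**}\in A^*_h$ denote the dictatorial action guaranteeing $o^{**}:=\max_{R_i}O_i(h,a_1)$, which lies in $Y^*_h$ by the preceding step. The OSP inequality between truthtelling action $a_1$ and deviation $a^{**}$ requires every house in $O_i(h,a_1)$ to $R_i$-dominate $o^{**}$, collapsing $O_i(h,a_1)$ to $\{o^{**}\}$ and contradicting $|O_i(h,a_1)|\ge 2$.

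The second claim follows cleanly. Fix $R_i\in\mathcal{R}_i(h)$, set $o=\max_{R_i}O_i(h)$, and let $\bar{a}$ be its truthtelling action. If $o\in Y^*_h$, pick a dictatorial action $a\in A^*_h$ with $O_i(h,a)=\{o\}$; OSP requires every house in $O_i(h,\bar{a})$ to $R_i$-dominate $o$. But $O_i(h,\bar{a})\subseteq O_i(h)$ and $o$ is the $R_i$-top in $O_i(h)$, so $O_i(h,\bar{a})=\{o\}$, whence $\bar{a}\in A^*_h$. Property~\ref{prop: force-asap-and-reveal} then forces $\mathcal{R}_i(h,\bar{a})=\{R_i\}$, proving the first bullet. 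If instead $o\notin Y^*_h$, then $\bar{a}\in A^*_h$ is impossible: such a $\bar{a}$ would give $O_i(h,\bar{a})=\{o^*\}$ with $o^*\,R_i\,o'$ for every $o'\in O_i(h)$ reachable via a deviation, and since $o$ is either reachable via a deviation (forcing $o^*=o$) or already equals $o^*$, we would obtain $o\in Y^*_h$. Hence $\bar{a}\in\overline{A^*_h}$, and by the first claim $\bar{a}=\tilde{a}$; in particular $\tilde{a}$ exists.

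The main obstacle is the first claim, and within it the book-keeping when $O_i(h,a_1)$ and $O_i(h,a_2)$ overlap: the clean interleaving construction implicitly assumes the four houses $o_{11},o_{21},o_{12},o_{22}$ are distinct, so the subcases where some coincide must be addressed by either Lemma~\ref{lemma: owned choices} (for houses in the intersection) or by choosing the $o_{jk}$ in the respective set differences. Once every house in $O_i(h,a_1)\cup O_i(h,a_2)$ is shown to lie in $Y^*_h$, the final singleton collapse is a short OSP-based calculation.
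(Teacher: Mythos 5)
Your reduction of Part~2 to Part~1 is essentially the paper's, but Part~1 has a genuine gap: at every step you assert that a preference with a prescribed top segment lies in $\mathcal{R}_i(h)$ --- the interleaving preference with top four houses $o_{11},o_{21},o_{12},o_{22}$, the preferences obtained by ``varying $o_{11}$,'' and the preference ranking a house of $O_i(h,a_1)\cap O_i(h,a_2)$ first that Lemma~\ref{lemma: owned choices} requires. For a non-minimal history of agent $i$ this is not automatic: $\mathcal{R}_i(h)$ consists exactly of those preferences whose truthtelling behavior avoided every dictatorial action at each earlier juncture of $i$, and \emph{which} preferences those are is precisely the content of Part~2 at those earlier junctures. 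This is why the paper proves Parts~1 and~2 by a joint induction over the histories at which $i$ moves, and why its sufficient condition for $|\overline{A^*_h}|\le 1$ (Observation~(*)) is phrased only for pairs of houses in $O_i(h)\setminus D_i(h)$: houses that $i$ could already have forced earlier are exactly those for which the needed preferences may fail to reach $h$. Your argument sets up no induction and never mentions $D_i(h)$, so as written it establishes Part~1 only at the first history where $i$ moves (where $\mathcal{R}_i(h)=\mathcal{R}_i$); the general case cannot be obtained by a local patch.

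A secondary, repairable flaw: in the final ``collapse'' step (and again in the first bullet of Part~2) you let the worst case under truthtelling range over all of $O_i(h,a_1)$. For a fixed $R_i$ the obvious-dominance worst case ranges only over $\{M(\mathbf{T}(R_i,R_{-i}))(i)\mid R_{-i}\in\mathcal{R}_{-i}(h)\}$, whereas $O_i(h,a_1)$ also varies $i$'s own continuation over $\mathcal{R}_i(h,a_1)$. The intended conclusion survives --- the OSP inequality guarantees that type $R_i$ always receives $o^{**}$ after $a_1$, and Property~\ref{prop: force-asap-and-reveal} then forces $\mathcal{R}_i(h,a_1)=\{R_i\}$ and hence $O_i(h,a_1)=\{o^{**}\}$, contradicting $a_1\in\overline{A^*_h}$ --- but that appeal to the gradual-revelation property is necessary and is absent from your write-up.
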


Note that Lemma \ref{lemma: one-undetermined} implies that $\mathcal{R}_i(h,\tilde{a})$ equals $\{R_i\in \mathcal{R}_i(h)\mid \max_{R_i}O_i(h)\notin Y^*_h\}$. Moreover, an action other than $\tilde{a}$ is chosen by some agent if and only if by choosing this action, this agent is matched with their most preferred house among all houses they could possibly be matched with at $h$; each such agent fully reveals her preference to the mechanism when making this choice.

\begin{proof}[Proof of Lemma~\ref{lemma: one-undetermined}]
We start with a preliminary observation, which states a sufficient condition for $\overline{A^*_h}$ to be a singleton.

\bigskip

(*) Fix $h,i$ such that $P(h)=i$. If for all $o,o'\in O_i(h)\setminus D_i(h)$ with $o\neq o'$ there exists some $R_i\in \mathcal{R}_i(h)$ with $R_i: o, o'$, then $\overline{A^*_h}$ is a singleton.

\bigskip

Suppose that Observation (*) does not hold, that is, suppose there exist two different actions $a$ and $a'$ in $\overline{A^*_h}$ under the condition of the observation. Since~$M$ is a gradual revelation mechanism,
$\mathcal{R}_i(h,a)\neq \emptyset$. Say $R_i\in \mathcal{R}_i(h,a)$ and let $o\eqdef\max_{R_i}O_i(h)$.
Since $o\in O_i(h)$, we have by strategyproofness that $o\in O_i(h,a)$.
Since $a\in \overline{A^*_h}$, there exists
 a profile $R\in\mathcal{R}(h,a)$ such that $M(\mathbf{T}(R))(i)\neq o$. Therefore, $o\notin D_i(h)$, for otherwise $\mathbf{T}_i(R_i)$  would not be a best reply to  $\mathbf{T}_{-i}(R_{-i})$. By the same reasons, there exists a
$R'_i\in \mathcal{R}_i(h,a')$ with $o'=\max_{R'_i}O_i(h)\in O_i(h,a')\setminus D_i(h)$. Since $o'\notin D_i(h)$, we have by gradual revelation that $o'\notin Y^*_h$, and therefore, by Lemma~\ref{lemma: owned choices} we have that $o'\notin O_i(h,a)$. In particular, $o\neq o'$. By assumption, $\mathcal{R}_i(h)$ therefore contains  a preference $R''_i: o, o'$. The truthtelling strategy for $R''_i$
cannot prescribe $a$ at $h$, as for any behavior that chooses $a$ at $h$, agent $i$ may be matched (i.e., is matched, for some $R_{-i}$) with a house in $O_i(h,a)\setminus \{o\}$, while for some behavior that chooses $a'$ at $h$, agent $i$ may be matched (i.e., is matched, for some, possibly different, $R'_{-i}$) with $o'$, which he strictly $R''_i$-prefers to each house in $O_i(h,a)\setminus\{o\}$. So the truthtelling strategy for $R''_i$ must prescribe an action $a''\neq a$ at $h$. But since $o \notin D_i(h)$, it follows that for any behavior that chooses $a''$ at $h$, agent $i$ may be matched with an outcome other than~$o$, while for some behavior that chooses $a$ at $h$, agent $i$ may be matched with $o$. There is, in sum, no obviously incentive compatible strategy for $R''_i$.

\bigskip

We now fix an agent $i$ and prove both parts of the lemma by full induction over histories $h$ with $P(h)=i$. If $h^*$ is minimal among the set of such histories, then $\mathcal{R}_i(h^*)=\mathcal{R}_i$ and the
 condition in
Observation (*) (and therefore, Part 1) is satisfied.
 We next show that Part 2 holds at a history $h$ if Part 1 does. Finally, we argue that the condition in Observation (*) holds at $h$ if Parts 1
 and 2 hold for all strict subhistories $h'\subsetneq h$ with $P(h')=i$.

\bigskip

Assume that Part 1 holds at a history $h$ with $P(h)=i$. Let $R_i\in\mathcal{R}_i(h)$ and let $o=\max_{R_i}O_i(h)$. If $o\in Y^*_h$, then by definition there exists an action $a'$ with $O_i(h,a')=\{o\}$. The truthtelling strategy for $R_i$ must prescribe an action $a$ with $O_i(h,a)=\{o\}$: by gradual revelation, any behavior that does not immediately (i.e., by the action it specifies at $h$) enforce that $i$ is matched with $h$, does not enforce it as a behavior as well; that is, when playing any such behavior, agent $i$ may be matched (i.e., is matched, for some $R_{-i}$) to an $R_i$-inferior house than when choosing $a'$ at $h$. Therefore, the truthtelling strategy for $R_i$ must prescribe at $h$ an action $a$ with $O_i(h,a)=\{o\}$.
Since $M$ is a gradual revelation mechanism, we have $\mathcal{R}_i(h,a)=\{R_i\}$.

Now suppose that $o\notin Y^*_h$. By Part 1, it is enough to show that the truthtelling strategy for $R_i$ prescribes an action $a\in\overline{A^*_h}$ (and so, by Part 1, $a=\tilde{a}$).
Suppose that the truthtelling strategy for $R_i$ prescribes an action $a\in A^*_h$, so $O_i(h,a)=\{o'\}$ for some $o'\in Y^*_h$, so $o'\ne o$. Since $o\in O_i(h)$,
by gradual revelation there must exist some $R'\in\mathcal{R}(h)$ with  $M(\mathbf{T}(R'))(i)=o$, so the $R_i$-best possible house when choosing $a'=\mathbf{T}(R'_i)(h)$ at $h$ (namely $o$) is strictly $R_i$-preferred to  the $R_i$-worst possible house when choosing $a$ at $h$ (namely $o'\ne o$), and the choice of $a$ at $h$ is not obviously strategyproof for $R_i$.

\bigskip

Finally, assume that Parts 1 and 2 hold for all strict subhistories $h'\subsetneq h$ with $P(h')=i$.
The condition in Observation (*) is trivially satisfied if   $O_i(h)\setminus D_i(h)$ is a singleton, so suppose that $|O_i(h)\setminus D_i(h)|\ge2$.
Therefore, by the induction hypothesis for Part 1, agent $i$ chooses $\tilde{a}$ in all strict subhistories of $h$.
Let $o,o'\in O_i(h)\setminus D_i(h)$ be distinct, and let
$R_i:o,o'$; since $D_i(\cdot)$ and $O_i(\cdot)$ are respectively weakly increasing and weakly decreasing, we have that $O_i(\cdot)\setminus D_i(\cdot)$ is weakly decreasing. Since $o\in O_i(h)\setminus D_i(h)$, we therefore have that $o\in O_i(h')\setminus D_i(h')$ for all strict subhistories $h'\subsetneq h$. Therefore, by the induction hypothesis for Part 2, the truthtelling strategy for $R_i$ prescribes the choice of $\tilde{a}$ at every such $h'$. Therefore, $R_i\in\mathcal{R}_i(h)$, and the sufficient condition cited in Observation (*) holds.
\end{proof}

\begin{definition}\label{def-lurker-force-dictator}
Let $h$ be a history of $M$ and let $i\in N$.
\begin{enumerate}
\item
Agent $i$ is \emph{active} at $h$ if $D_i(h)\ne\emptyset$ and $|O_i(h)|>1$.
\item
Agent $i$ can force a house $o$ at $h$ if he has a behavior for which he is matched with $o$ if $h$ is reached. (I.e.,
 there exists $R_i\in \mathcal{R}_i(h)$ such that $o=M(\mathbf{T}(R))(i)$ holds for all $R_{-i}\in\mathcal{R}_{-i}(h)$.)
\item
Agent $i$  is a \emph{dictator} at $h$ if he can force every house in $O_i(h)$ at $h$. (I.e., if $R\in\mathcal{R}(h)$, then $M(\mathbf{T}(R))(i)$ depends only on~$R_i$.)
\end{enumerate}
\end{definition}

\begin{remark}
No part of Definition~\ref{def-lurker-force-dictator} assumes that $i=P(h)$.
\end{remark}

\begin{remark}
If $i$ is a dictator at $h$, then it holds that
$O_i(h)=\{\max_{R_i}\!O_i(h)\mid \text{ for some }R_i\in \mathcal{R}_i(h)\}$.
\end{remark}

\begin{lemma}\label{lemma: can-force-d}
Fix a history $h$ of $M$ and an agent $i$ who has moved before or at $h$.
\begin{enumerate}
\item
If $o=\max_{R_i}O_i(h)\in D_i(h)$ holds for some $R_i\in\mathcal{R}_i(h)$,  then $i$ can force $o$  at $h$.
\item If $i$ is not a dictator at $h$, then $D_i(h)\subsetneq O_i(h)$.
\end{enumerate}
\end{lemma}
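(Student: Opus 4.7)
The plan is to prove Part~1 by invoking obvious strategyproofness at the earliest subhistory of $h$ at which $i$ moves and $o$ is ``on offer'', and then to deduce Part~2 as a short consequence of Part~1.

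For Part~1, I will fix $R_i\in\mathcal{R}_i(h)$ with $o=\max_{R_i}O_i(h)\in D_i(h)$, and pick a subhistory $h^\circ\subseteq h$ at which $P(h^\circ)=i$ and $o\in Y^*_{h^\circ}$, with accompanying action $a^\circ\in A(h^\circ)$ satisfying $O_i(h^\circ,a^\circ)=\{o\}$. If $h^\circ=h$, the action $a^\circ$ already witnesses the claim, so I will assume $h^\circ\subsetneq h$ and apply the second part of Lemma~\ref{lemma: one-undetermined} to truthtelling for $R_i$ at $h^\circ$: this action either matches $i$ with $\max_{R_i}O_i(h^\circ)$ (if this top lies in $Y^*_{h^\circ}$) or is the unique nondictatorial action $\tilde a_{h^\circ}$. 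If it matches $i$ with some $o''\ne o$, then $O_i(h)\subseteq\{o''\}$ would contradict $o\in O_i(h)$; if it matches $i$ with $o$, then $O_i(h)=\{o\}$ and the claim is trivial. Thus truthtelling plays $\tilde a_{h^\circ}$, making $a^\circ$ a genuine deviation. Obvious strategyproofness at $h^\circ$ then compares the worst outcome under truthtelling for $R_i$ with the deterministic deviation outcome $o$, yielding $M(\mathbf{T}(R_i,R_{-i}))(i)\,R_i\,o$ for every $R_{-i}\in\mathcal{R}_{-i}(h^\circ)$. Restricting to $R_{-i}\in\mathcal{R}_{-i}(h)\subseteq\mathcal{R}_{-i}(h^\circ)$, this outcome lies in $O_i(h)$, and since $o=\max_{R_i}O_i(h)$ and agents strictly rank distinct houses, both inequalities collapse to $M(\mathbf{T}(R_i,R_{-i}))(i)=o$. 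Hence $\mathbf{T}_i(R_i)$ witnesses that $i$ forces $o$ at $h$.

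For Part~2, I will prove the equivalent contrapositive: if $O_i(h)\subseteq D_i(h)$, then $i$ is a dictator at $h$. Fix any $o\in O_i(h)$ and pick $R\in\mathcal{R}(h)$ with $M(\mathbf{T}(R))(i)=o$. Set $o^*=\max_{R_i}O_i(h)$, which lies in $O_i(h)\subseteq D_i(h)$. Part~1 applied to $R_i$ says that truthtelling for $R_i$ forces $o^*$ against every opponent profile in $\mathcal{R}_{-i}(h)$; taking that profile to be $R_{-i}$, we obtain $o=M(\mathbf{T}(R))(i)=o^*$, so $o$ is the $R_i$-top of $O_i(h)$. A second application of Part~1 then shows that $i$ can force $o$ at $h$; since $o\in O_i(h)$ was arbitrary, $i$ is a dictator.

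The main obstacle will be the case analysis at $h^\circ$ in Part~1: establishing that truthtelling for $R_i$ at $h^\circ$ really picks $\tilde a_{h^\circ}$ requires ruling out the alternative in which Lemma~\ref{lemma: one-undetermined} selects a dictatorial action that locks $i$ into the wrong house. Once this case distinction is settled, the remainder is a clean instance of the worst-under-truthtelling $R_i$-dominates best-under-deviation inequality, closed off by the strict-preference structure of house-matching domains.
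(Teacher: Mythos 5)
Your Part~1 is correct and is essentially the paper's own argument: locate a subhistory $h^\circ\subseteq h$ with $P(h^\circ)=i$ and $o\in Y^*_{h^\circ}$, note that the deviation to the action forcing $o$ yields $o$ deterministically, and conclude from (obvious) strategyproofness that the truthtelling outcome, which lies in $O_i(h)$, must be weakly $R_i$-preferred to $o=\max_{R_i}O_i(h)$ and hence equal to it. Your extra case analysis at $h^\circ$ is harmless but not needed for this comparison.

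Part~2, however, has a genuine gap. The claim $D_i(h)\subsetneq O_i(h)$ asserts two things: (i) $D_i(h)\subseteq O_i(h)$ and (ii) the containment is proper. What you call ``the equivalent contrapositive'' --- if $O_i(h)\subseteq D_i(h)$ then $i$ is a dictator --- is only the contrapositive of the statement ``$i$ not a dictator $\Rightarrow O_i(h)\not\subseteq D_i(h)$,'' which is equivalent to the target only once (i) is already known. Your argument therefore establishes only that some $o\in O_i(h)\setminus D_i(h)$ exists, and never shows that every $d\in D_i(h)$ still lies in $O_i(h)$. This inclusion is not automatic: a house that $i$ could have claimed at an earlier juncture might, a priori, have been matched to another agent by the time $h$ is reached and thus dropped out of $O_i(h)$. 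The paper closes this with a separate strategyproofness argument: take $o\in O_i(h)\setminus D_i(h)$ (which your argument does supply), fix $d\in D_i(h)$, and consider $R'_i:o,d$; by Lemma~\ref{lemma: one-undetermined} this preference reaches $h$, by the non-forceability of $o$ some $R_{-i}\in\mathcal{R}_{-i}(h)$ yields an outcome other than $o$, and since $i$ could have deviated to force $d$, that outcome must be weakly $R'_i$-preferred to $d$ and hence equal $d$, giving $d\in O_i(h)$. This missing inclusion is not cosmetic --- it is exactly the form in which Part~2 is invoked later (e.g., in Lemmas~\ref{lemma: force-g-then-dictator} and~\ref{lemma: other-lurker-cant-force-from-terminator}, which use ``$o^*\in D_i(h)$ implies $o^*\in O_i(h)$'').
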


\begin{proof}[Proof of Lemma~\ref{lemma: can-force-d}]\leavevmode
\begin{enumerate}
\item Since $o\in D_i(h)$, there is a history $h'\subseteq h$ with $P(h')=i$ and $o\in Y^*_{h'}$. Since $i$ may deviate  to a behavior that chooses an action $a$ at $h'$ with $O_i(h',a)=\{o\}$, agent $i$ with the given preference $R_i$ must weakly prefer $M(\mathbf{T}(R))(i)\in O_i(h)$ to $o$ for any $R_{-i}\in\mathcal{R}_{-i}(h)$. Therefore, since $o=\max_{R_i}O_i(h)$,
we have that $M(\mathbf{T}(R))(i)=o$ for any $R_{-i}\in\mathcal{R}_{-i}(h)$, and so agent $i$ can force $o$ at $h$.
\item
Since $i$ is not a dictator at $h$, there exists $R_i\in\mathcal{R}_i(h)$ such that $i$ cannot force at $h$ the house
$o=\max_{R_i}O_i(h)$. By Part 1, therefore $o\notin D_i(h)$. In particular, $O_i(h)\ne D_i(h)$. It therefore remains to show that $D_i(h)\subseteq O_i(h)$. Let $d\in D_i(h)$.
Consider the preference $R'_i:o,d$. Since $o\in O_i(h)\setminus D_i(h)$, by Lemma~\ref{lemma: one-undetermined} we have that $R'_i\in \mathcal{R}_i(h)$.
Since $i$  cannot force the house $o$ at $h$, there exist preferences $R_{-i}\in\mathcal{R}_{-i}(h)$ such that $M(\mathbf{T}(R'_i,R_{-i}))(i)\neq o$.
Since $M$ is strategyproof and since $d\in D_i(h)$, agent $i$ must weakly $R'_i$-prefer $M(\mathbf{T}(R'_i,R_{-i}))(i)$ to $d$. Since $R'_i: o,d$, we have that $M(\mathbf{T}(R'_i,R_{-i}))(i)=d$, and therefore $d\in O_i(h)$.\qedhere
\end{enumerate}
\end{proof}

\begin{definition}
Let $h$ be a history of $M$.
\begin{itemize}
\item
An agent $i$ who has moved before $h$, but is not a dictator at $h$ is a \emph{lurker} for a house $o$ at $h$ if the maximal
\emph{strict} subhistory $h'$ of $h$ with $P(h')=i$ is such that $o\in O_i(h')$ and $D_i(h')=O_i(h')\setminus\{o\}$.
\item
$O(h)$ is the union of the sets  $O_i(h)$ for all agents $i$ that are not yet matched at $h$.
\item
$G(h)$ is the set of houses in $O(h)$ that have no lurkers at $h$.
\end{itemize}
\end{definition}

\begin{lemma}\label{lemma: lurker-iff}
Fix a history $h$ in $M$, an agent $i$ who is not yet matched at $h$, and a house $o\in O_i(h)$ that $i$ does not lurk at $h$. Then there
exists $R_i\in\mathcal{R}_i(h)$ with $R_i:\ldots,o$.
\end{lemma}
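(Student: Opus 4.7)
The plan is to argue by cases on whether agent $i$ has moved before $h$, leveraging the characterization of $\mathcal{R}_i(h)$ provided by Lemma~\ref{lemma: one-undetermined}. The easy case is when $i$ has not moved before $h$: then $\mathcal{R}_i(h)=\mathcal{R}_i$, and any preference ranking $o$ last suffices. Otherwise, I will let $h'$ be the maximal strict subhistory of $h$ with $P(h')=i$, and enumerate the $i$-moves preceding $h$ as $h^{(1)}\subsetneq\cdots\subsetneq h^{(k)}=h'$. Since $|O_i(h)|\ge 2$ (as $i$ is not matched at $h$) and $O_i(\cdot)$ is weakly decreasing, the action taken at each $h^{(j)}$ along the path to $h$ cannot match $i$ with a house; so by Lemma~\ref{lemma: one-undetermined} it must be the unique nondictatorial action $\tilde{a}$, which shows that $\mathcal{R}_i(h)$ consists of precisely those $R_i$ satisfying $\max_{R_i}O_i(h^{(j)})\notin Y^*_{h^{(j)}}$ for every $j\le k$. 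This characterization will be the workhorse of the argument.

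Since $\tilde{a}$ exists at $h'$, Lemma~\ref{lemma: can-force-d} yields $D_i(h')\subsetneq O_i(h')$, and I split on whether $i$ is a dictator at $h$. In the non-dictator sub-case, the hypothesis that $i$ does not lurk $o$ forces $D_i(h')\ne O_i(h')\setminus\{o\}$; a short case analysis on whether $o\in D_i(h')$ then shows that $(O_i(h')\setminus\{o\})\setminus D_i(h')$ is nonempty (the alternative would force $D_i(h')=O_i(h')$, contradicting non-dictatorship at $h'$). Picking any $o^*$ from this set and taking $R_i$ to rank $o^*$ first and $o$ last, the monotonicities of $O_i$ (decreasing) and $D_i$ (increasing) along $h^{(1)},\ldots,h^{(k)}$ give $o^*\in O_i(h^{(j)})$ and $o^*\notin D_i(h^{(j)})$ for every $j$; since $Y^*_{h^{(j)}}\subseteq D_i(h^{(j)})$, this yields $\max_{R_i}O_i(h^{(j)})=o^*\notin Y^*_{h^{(j)}}$, so $R_i\in\mathcal{R}_i(h)$.

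The hard part is the dictator sub-case: the ``doesn't lurk'' hypothesis no longer provides an element of $(O_i(h')\setminus\{o\})\setminus D_i(h')$ (indeed, $D_i(h')=O_i(h')\setminus\{o\}$ is perfectly compatible with dictatorship at $h$), so the previous recipe fails. My plan is instead to pick $o^*\in O_i(h)\setminus\{o\}$ (which exists since $|O_i(h)|\ge 2$); dictatorship at $h$ supplies $R_i^0\in\mathcal{R}_i(h)$ with $M(\mathbf{T}(R_i^0,R_{-i}))(i)=o^*$ for every consistent $R_{-i}$, and strategyproofness then forces $o^*=\max_{R_i^0}O_i(h)$, so $R_i^0$ ranks $o^*$ strictly above $o$. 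I will define $R_i$ by moving $o$ to the bottom of $R_i^0$'s ranking, and verify $R_i\in\mathcal{R}_i(h)$ via the observation that for every $j\le k$ the top $o^{\dagger}_j$ of $O_i(h^{(j)})$ under $R_i^0$ cannot equal $o$ (else $R_i^0$ would rank $o$ above every house in $O_i(h^{(j)})$, including $o^*\in O_i(h)\subseteq O_i(h^{(j)})$, contradicting the previous); hence demoting $o$ leaves the top of each $O_i(h^{(j)})$ unchanged and equal to $o^{\dagger}_j\in O_i(h^{(j)})\setminus Y^*_{h^{(j)}}$.
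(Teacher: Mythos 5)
Your proof is correct and takes essentially the same route as the paper's: the same three-way split (no prior $i$-moves / dictator at $h$ / non-dictator with prior moves), the same witness $o^*\in O_i(h')\setminus(D_i(h')\cup\{o\})$ in the non-dictator case, and the same demote-$o$-to-the-bottom construction in the dictator case, with your explicit characterization of $\mathcal{R}_i(h)$ via repeated choices of $\tilde{a}$ merely spelling out the paper's terser appeals to Lemma~\ref{lemma: one-undetermined}. One small nit: you invoke Lemma~\ref{lemma: can-force-d} to conclude $D_i(h')\subsetneq O_i(h')$ \emph{before} the dictator/non-dictator split, but that lemma's hypothesis is non-dictatorship (not the mere existence of $\tilde{a}$); since you only use the containment in the non-dictator sub-case, where non-dictatorship at $h'$ is available, nothing breaks.
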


\begin{proof}[Proof of Lemma~\ref{lemma: lurker-iff}]
If $i$ does not move at any strict subhistory of $h$, then $\mathcal{R}_i(h)=\mathcal{R}_i$ and the claim trivially holds.

If $i$ is a dictator at $h$, then $O_i(h)=\{\max_{R_i}O_i(h)\mid R_i\in \mathcal{R}_i(h)\}$.
Since $i$ is not yet matched, then $|O_i(h)|>1$. Therefore, there exists $R'_i\in\mathcal{R}_i(h)$ whose top choice among $O_i(h)$ is some $o'\ne o$. Let $R_i$ be the preference obtained from $R'_i$ by demoting $o$ to be least preferred. Since $i$ is not yet matched at $h$ and since $o'\in O_i(h)$ is ranked by $R_i$ strictly above $o$ (since $o\in O_i(h)$), by Lemma~\ref{lemma: one-undetermined}, we have that $R_i\in \mathcal{R}_i(h)$ as well.

Finally, if $i$ moves at some strict subhistory of $h$ and is not a dictator, then let $h'\subsetneq h$ be the maximal strict subhistory of $h$ with $P(h')=i$.
Since $i$ is not a dictator at $h$, she is not a dictator at $h'$, and so by Lemma~\ref{lemma: can-force-d}(2), $D_i(h')\subsetneq O_i(h')$.
Since $i$ is neither a dictator nor a lurker for $o$ at $h$, there exists $o'\in O_i(h')\setminus (D_i(h')\cup\{o\})$. By
Lemma~\ref{lemma: one-undetermined},  $R_i\in\mathcal{R}_i(h',\tilde{a})=\mathcal{R}_i(h)$ for every $R_i:o'$, and in particular for $R_i:o',\ldots,o$. (Note that
$(h',\tilde{a})$ is indeed a subhistory of $h$ since $i$ is not yet matched at $h$.)
\end{proof}

\begin{lemma}\label{lemma: not-in-o-lurked}
Fix a history $h$ of $M$. If an agent $i$ is not yet matched at $h$, then  $G(h)\subseteq{O_i(h) \cup D^<_i(h)}$. If an agent $i$ is a lurker at $h$ then $G(h)\subseteq D^<_i(h)$.
\end{lemma}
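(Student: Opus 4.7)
My plan is to establish the first inclusion by constructing, for any $o \in G(h)$ with $o \notin D^<_i(h)$, a preference profile $R \in \mathcal{R}(h)$ such that $M(\mathbf{T}(R))(i) = o$; the second inclusion then follows quickly from the first together with the definition of a lurker. Throughout, I will exploit the fact that $\mathcal{R}_i(h) \neq \emptyset$ (since $h$ is reached by some profile) and that $o \in G(h)$ means both $o \in O_j(h)$ for some unmatched $j$ and that no agent lurks $o$ at $h$.

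The construction of $R$ proceeds in three steps. First, I observe that at every history $h'' \subsetneq h$ with $P(h'') = i$, the action taken on the path to $h$ must lie in $\overline{A^*_{h''}}$: otherwise, Lemma~\ref{lemma: one-undetermined}(2) forces $i$ to be matched to a specific house already at a node $(h'', a) \subseteq h$, contradicting that $i$ is unmatched at $h$; in particular, the action $\tilde{a}_{h''}$ promised by Lemma~\ref{lemma: one-undetermined}(1) exists at every such $h''$. Second, starting from any $R_i \in \mathcal{R}_i(h)$, I define $\hat{R}_i$ by promoting $o$ to the top while preserving the relative ranking of the remaining houses, and verify via Lemma~\ref{lemma: one-undetermined}(2) that the truthtelling behavior of $\hat{R}_i$ still selects $\tilde{a}_{h''}$ at each such $h''$. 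Indeed, if $o \in O_i(h'')$ then $\max_{\hat{R}_i} O_i(h'') = o$, and the hypothesis $o \notin D^<_i(h) \supseteq D_i(h'') \supseteq Y^*_{h''}$ yields $o \notin Y^*_{h''}$; otherwise $\max_{\hat{R}_i} O_i(h'') = \max_{R_i} O_i(h'')$, which truthtelling for $R_i$ already avoids placing in $Y^*_{h''}$. Hence $\hat{R}_i \in \mathcal{R}_i(h)$. Third, for each unmatched $j \neq i$ with $o \in O_j(h)$, I apply Lemma~\ref{lemma: lurker-iff} (applicable because no agent lurks $o$) to obtain $R_j \in \mathcal{R}_j(h)$ ranking $o$ last, and I pick an arbitrary preference in $\mathcal{R}_j(h)$ for every other agent. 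Setting $R = (\hat{R}_i, R_{-i})$ gives $R \in \mathcal{R}(h)$, and Pareto optimality closes the argument: if $M(\mathbf{T}(R))(i) = o' \neq o$, then $o$ is received by some $j \neq i$, necessarily unmatched with $o \in O_j(h)$ (matched agents have fixed outcomes differing from $o$, and unmatched agents with $o \notin O_j(h)$ cannot receive $o$); by construction $R_j$ ranks $o$ last, so swapping $i$'s and $j$'s allotted houses strictly improves both of them and contradicts Pareto optimality of $M(\mathbf{T}(R))$. Therefore $o \in O_i(h)$.

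For the second inclusion, suppose $i$ is a lurker at $h$ for some house $o^*$. Then $i$ is unmatched, and if $h'$ denotes the maximal strict subhistory of $h$ with $P(h') = i$, the definition of lurker yields $D^<_i(h) = D_i(h') = O_i(h') \setminus \{o^*\}$. Every $o \in G(h)$ is unlurked and thus distinct from $o^*$, and the first inclusion applied to $i$ places $o$ in $O_i(h) \cup D^<_i(h)$; since $O_i(h) \subseteq O_i(h') = D^<_i(h) \cup \{o^*\}$ and $o \neq o^*$, we conclude $o \in D^<_i(h)$. The principal obstacle is the verification that $\hat{R}_i \in \mathcal{R}_i(h)$ in the first inclusion: this is precisely the step where the hypotheses $o \notin D^<_i(h)$ and $i$ unmatched at $h$ combine, via the dichotomy of Lemma~\ref{lemma: one-undetermined}(2), to ensure that promoting $o$ does not derail the truthtelling path to $h$.
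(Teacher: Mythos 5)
Your proof is correct and follows essentially the same route as the paper's: promote $o$ to the top of $i$'s preference (with membership in $\mathcal{R}_i(h)$ justified via Lemma~\ref{lemma: one-undetermined}), demote $o$ to the bottom for every competing unmatched agent via Lemma~\ref{lemma: lurker-iff}, and conclude $o\in O_i(h)$ by Pareto optimality, with the lurker case following from $O_i(h)\subseteq D^<_i(h)\cup\{o^*\}$. Your expanded verification that promoting $o$ does not derail the truthtelling path is a faithful unpacking of the step the paper states in one line.
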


\begin{proof}[Proof of Lemma~\ref{lemma: not-in-o-lurked}]
Fix a house $o\in G(h)$ such that $o\notin D^<_i(h)$. It is enough to show that $o\in O_i(h)$. Let $R'_i\in\mathcal{R}_i(h)$, and let $R_i$ be the preference obtained from $R'_i$ by promoting $o$ to be most preferred. Since $i$ is not yet matched at $h$ and since $o\notin D^<_i(h)$, by Lemma~\ref{lemma: one-undetermined} we have that $R_i\in\mathcal{R}_i(h)$ as well. By Lemma~\ref{lemma: lurker-iff}, for every not-yet-matched agent $j\ne i$ with $o\in O_j(h)$, there exists $R_j\in\mathcal{R}_j(h)$ with $R_j:\ldots,o$. Fix some $R_k\in\mathcal{R}_k(h)$ for every other agent~$k$ (i.e., matched agent or agent with $o\notin O_k(h)$; note that matched agents are not matched to $o$, since $o\in G(h)\subseteq O(h)$). Note that $R\in\mathcal{R}(h)$. Since $R_i$ ranks $o$ at the top while any agent $j\ne i$ with $o\in O_j(h)$ ranks $o$ at the bottom, the Pareto optimality of~$M$ implies that $M(\mathbf{T}(R))(i)=o$, and therefore $o\in O_i(h)$.

If $i$ lurks some house $o_i$ at $h$ then by definition, $O_i(h)=\{o_i\}\cup D^<_i(h)$. Since by the first claim $G(h)\subseteq O_i(h)\cup D^<_i(h)=\{o_i\}\cup D^<_i(h)$, but by definition $o_i\notin G(h)$, the second claim follows.
\end{proof}

\begin{lemma}\label{lemma: force-lurked-then-force-g}
Let $h$ be a history of $M$. Let $i$ be an active agent at $h$ such that some $o_l\in D_i(h)$ has a lurker at $h$. Then for every $o\in G(h)$, either $o\in D_i(h)$ or $i$ can force
$o$  at $h$.
\end{lemma}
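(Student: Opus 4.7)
The plan is to fix an arbitrary $o \in G(h)$ with $o \notin D_i(h)$ and to exhibit a preference $R_i \in \mathcal{R}_i(h)$ under which $i$ is matched with $o$ for every $R_{-i} \in \mathcal{R}_{-i}(h)$; the natural candidate is $R_i : o, o_l$. First, Lemma~\ref{lemma: not-in-o-lurked} gives $o \in O_i(h)$, and the lurker $j$ for $o_l$ necessarily differs from $i$ (since $o_l \in D_i(h)$, yet a lurker never has her lurked house in her $D$-set). Letting $h_j \subsetneq h$ denote the maximal strict subhistory with $P(h_j) = j$, the definition of lurker yields $o_l \in O_j(h_j)$, $D_j(h_j) = O_j(h_j) \setminus \{o_l\}$, and $o_l \notin Y^*_{h'}$ for every $h' \subseteq h_j$ with $P(h') = j$. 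Moreover, $o \ne o_l$ and $o \in D_j(h_j)$ (applying Lemma~\ref{lemma: not-in-o-lurked} to $j$), while Lemma~\ref{lemma: can-force-d}(2) yields $o_l \in O_j(h)$.

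Second, I verify $R_i \in \mathcal{R}_i(h)$ by inductively tracking the path to $h$. At every $h' \subsetneq h$ with $P(h') = i$, since $O_i(\cdot)$ is weakly decreasing we have $o = \max_{R_i} O_i(h')$, and since $o \notin Y^*_{h'} \subseteq D_i(h)$, Lemma~\ref{lemma: one-undetermined} forces $R_i$'s truthtelling at $h'$ to choose $\tilde{a}$. The action on the path from $h'$ to $h$ must also be $\tilde{a}$, because any singleton action in $A^*_{h'}$ would pin $|O_i(h)| \le 1$, contradicting that $i$ is active. An analogous argument at $j$-histories along $h_j$ shows that the engineered preference $R'_j : o_l, o$ lies in $\mathcal{R}_j(h)$.

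Now, fixing any $h_i \subseteq h$ with $P(h_i) = i$ and $o_l \in Y^*_{h_i}$ (which exists because $o_l \in D_i(h)$), OSP at $h_i$ pits $R_i$'s truthtelling worst case against the best deviation of forcing some house in $Y^*_{h_i}$, the $R_i$-top of which is $o_l$ (as $o \notin Y^*_{h_i}$). So for every $R_{-i} \in \mathcal{R}_{-i}(h)$, $M(\mathbf{T}(R_i, R_{-i}))(i) \in \{o, o_l\}$. To exclude the $o_l$ option, I suppose for contradiction that $M(\mathbf{T}(R_i, R_{-i}^0))(i) = o_l$ for some $R_{-i}^0 \in \mathcal{R}_{-i}(h)$ and replace $R_j^0$ by $R'_j$ to obtain $R^1 = (R_i, R'_j, R_{-\{i,j\}}^0) \in \mathcal{R}(h)$. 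Applying OSP to $R'_j$ at a $j$-history $\subseteq h_j$ where $o$ is forceable (such an $h$-subhistory exists since $o \in D_j(h_j)$) gives $M(\mathbf{T}(R^1))(j) \in \{o_l, o\}$. Combined with $M(\mathbf{T}(R^1))(i) \in \{o, o_l\}$ and the impossibility of $i$ and $j$ sharing a house, $(M(\mathbf{T}(R^1))(i), M(\mathbf{T}(R^1))(j)) \in \{(o, o_l), (o_l, o)\}$; Pareto optimality eliminates $(o_l, o)$ via swap (strictly preferred by $i$ under $R_i$ and by $j$ under $R'_j$), forcing $M(\mathbf{T}(R^1))(j) = o_l$.

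Finally, classical strategyproofness for $j$ with true preference $R_j^0$ and misreport $R'_j$ gives $M(\mathbf{T}(R_i, R_{-i}^0))(j) R_j^0 o_l$. Since $i$ already takes $o_l$ under $R^0$, the outcome $o^* \eqdef M(\mathbf{T}(R_i, R_{-i}^0))(j)$ must satisfy $o^* \ne o_l$ yet $o^* R_j^0 o_l$ strictly, so $o^* \in O_j(h) \subseteq O_j(h_j) = D_j(h_j) \cup \{o_l\}$ forces $o^* \in D_j(h_j)$. The main obstacle will be ruling out this last scenario: the plan is to exhibit a $j$-history $h^* \subseteq h_j$ with $o^* \in Y^*_{h^*}$ and invoke OSP for $R_j^0$ there (supplemented, if needed, by Lemma~\ref{lemma: owned choices} to rule out multiple actions at $h^*$ reaching $o^*$) to show that no such $R_j^0 \in \mathcal{R}_j(h)$ can strictly prefer $o^*$ over $o_l$; careful bookkeeping then delivers the final contradiction.
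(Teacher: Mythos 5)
Your construction is the right one and, up to the last step, it mirrors the paper's argument: you take $R_i\colon o,o_l$, pair it with the lurker's preference $R'_j\colon o_l,o$, and use the strategyproofness constraints coming from $o_l\in D_i(h)$ and $o\in D_j(h)$ together with Pareto optimality to conclude that under $(R_i,R'_j,R^0_{-\{i,j\}})$ agent $i$ gets $o$ and the lurker gets $o_l$. The problem is that your write-up stops exactly where the remaining work is: the final paragraph is a plan, not an argument, and the plan as described would not close the gap. Invoking OSP for $R^0_j$ at a $j$-history $h^*\subseteq h_j$ with $o^*\in Y^*_{h^*}$ only yields that $j$'s match under $R^0$ is weakly $R^0_j$-preferred to $o^*$ --- a condition that the hypothesized match $o^*$ itself satisfies (indeed, by Lemma~\ref{lemma: can-force-d}(1) a preference with $\max_{R^0_j}O_j(h)=o^*\in D_j(h)$ would simply force $o^*$, with no tension with OSP at $h^*$). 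Lemma~\ref{lemma: owned choices} does not help either, as it concerns an outcome being reachable under two distinct actions, not the exclusion of a preference from $\mathcal{R}_j(h)$.

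What is actually needed --- and what the paper invokes at precisely this point --- is the lurker invariant: for \emph{every} $R_j\in\mathcal{R}_j(h)$ one has $o_l=\max_{R_j}O_j(h)$, i.e., every preference of $j$ consistent with $j$ still being an unmatched lurker at $h$ ranks $o_l$ above everything else in $O_j(h)$. (This rests on Lemma~\ref{lemma: one-undetermined}, which pins down which preferences survive the choice of $\tilde a$ at each $j$-history up to $h_j$, combined with $D_j(h_j)=O_j(h_j)\setminus\{o_l\}$; it is the invariant recorded in the Remark following the description of sequential barter with lurkers.) Given this fact, your scenario $o^*\,P^0_j\,o_l$ with $o^*\in O_j(h)$ is immediately impossible, and the paper's shorter route becomes available: since $j$ gets $o_l$ under $(R'_j,R^0_{-j})$ and $o_l$ is already $R^0_j$'s top choice in $O_j(h)$, strategyproofness forces $j$ to get $o_l$ under $R^0$ as well, so $i$ cannot get $o_l$ and must get $o$. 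Until you prove (or properly cite) this property of lurkers, the proof is incomplete, and the substitute argument you sketch for it does not go through.
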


\begin{proof}[Proof of Lemma~\ref{lemma: force-lurked-then-force-g}]
It suffices to show that  $i$ can force any $o \in G(h)\setminus D_i(h)$ at $h$.
Since $o \in G(h)\setminus D_i(h)$, by Lemma~\ref{lemma: not-in-o-lurked} we have that $o\in O_i(h)$. Since $o \notin D_i(h)$ and $o_l\in
D_i(h)$, we have that $o\ne o_l$.
Let $l$ be the lurker of $o_l$; since $o_l\in D_i(h)$, we have that $l\ne i$.
Lemma~\ref{lemma: not-in-o-lurked} and $o\in G(h)$ imply $o \in D_l(h)$.
Since $o\in O_i(h)\setminus D_i(h)$, we have that $R_i:o,o_l\in\mathcal{R}_i(h)$.
We will show that $M(\mathbf{T}(R))(i)=o$ holds for every $R_{-i}\in \mathcal{R}_{-i}(h)$.
Since $l$ is a lurker for $o_l$ at $h$, we have that $o_l=\max_{R_l}O_l(h)$. By Lemma~\ref{lemma: one-undetermined}, also $R'_l:o_l,o\in\mathcal{R}_i(h)$.
Since $o_l\in D_i(h)$ and $o\in D_l(h)$, $i$ and $l$ respectively must $R_i$-prefer and $R'_l$-prefer their matches
$M(\mathbf{T}(R'_l,R_{-l}))(i)$ and $M(\mathbf{T}(R'_l,R_{-l}))(l)$
to $o_l$ and $o$. Since $M$ is Pareto optimal, $M(\mathbf{T}(R'_l,R_{-l}))(i)= o$ and  $M(\mathbf{T}(R'_l,R_{-l}))(l)=o_l$. Since $M$ is strategyproof, $M(\mathbf{T}(R))(l)=o_l$ must also hold. Since $M(\mathbf{T}(R))(i)R_i o_l$ and $M(\mathbf{T}(R))(i)\neq  o_l$, we have that $M(\mathbf{T}(R))(i)= o$ and $i$ can force $o$ by playing $R_i$.
\end{proof}

\begin{lemma}\label{lemma: force-g-then-dictator}
Let $h$ be a history of $M$, and let $i$ be an active nonlurker at $h$.
If for every $o\in G(h)$, either $o\in D_i(h)$ or $i$ can force $o$ at $h$, then $i$ is a dictator at $h$.
\end{lemma}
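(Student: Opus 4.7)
The plan is to prove that $i$ can force every house $o\in O_i(h)$ at $h$, which is precisely the definition of $i$ being a dictator. Fix $o\in O_i(h)$; the argument splits on whether $o$ lies in $G(h)$.

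\textbf{Case 1:} $o\in G(h)\cap O_i(h)$. The hypothesis gives either that $i$ already forces $o$ at $h$ (done), or that $o\in D_i(h)$. In the latter sub-case I plan to invoke Lemma~\ref{lemma: can-force-d}(1), which requires producing $R_i\in\mathcal{R}_i(h)$ with $o=\max_{R_i}O_i(h)$. I will construct such $R_i$ by exploiting that $i$ is a nonlurker. If $P(h)=i$ and $o\in Y^*_h$, then any $R_i\in\mathcal{R}_i(h,a)$ with $a$ the action forcing $o$ suffices (this set is nonempty by gradual revelation). If $o\in D_i^<(h)$, then $o\in Y^*_{h'}$ for some strict prior $i$-move $h'$, and I need $R_i$ that takes $\tilde a$ at every prior $i$-move. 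Here the quantitative nonlurker bound $|O_i(h^*)\setminus D_i(h^*)|\ge 2$ at $i$'s latest prior move $h^*$---size $1$ is ruled out by nonlurker, while size $0$ would via Lemma~\ref{lemma: can-force-d}(2) and Lemma~\ref{lemma: one-undetermined}(2) force $i$ to take a dictatorial action rather than $\tilde a$ at $h^*$, contradicting that $i$ is still unmatched at $h$---provides the slack needed to select a preference whose top within each $O_i(h')$ is a non-$Y^*_{h'}$ witness while $o$ remains the maximum on $O_i(h)$.

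\textbf{Case 2:} $o\in O_i(h)\setminus G(h)$, so $o$ is lurked by some $l\ne i$ (since $i$ is a nonlurker). I will show this case cannot arise. The preference $R_l$ placing $o$ atop $O_l(h)$ lies in $\mathcal{R}_l(h)$: at $l$'s latest prior move $h'_l$, the lurker condition gives $o\in O_l(h'_l)\setminus D_l(h'_l)\subseteq O_l(h'_l)\setminus Y^*_{h'_l}$, so by Lemma~\ref{lemma: one-undetermined}(2), $R_l$'s truthtelling takes $\tilde a$ at $h'_l$ and stays on the path to $h$. By Lemma~\ref{lemma: lurker-iff}, for each unmatched $j\ne i,l$ with $o\in O_j(h)$ I can choose $R_j\in\mathcal{R}_j(h)$ ranking $o$ at the bottom. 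For any $R_i\in\mathcal{R}_i(h)$, Pareto optimality at the composed profile forces $l$ to receive $o$, so $i$ does not. Combining this with Lemma~\ref{lemma: not-in-o-lurked} applied to $l$ (which yields $G(h)\subseteq D_l^<(h)$) and the hypothesis equipping $i$ with offering-or-forcing over all of $G(h)$, I expose an exchange conflict: there is a house $o'\in G(h)$ for which the forcing (or offering) powers of both $i$ and $l$ cannot be reconciled with $l$ receiving $o$, yielding the required contradiction with $o\in O_i(h)$.

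The main obstacle is the contradiction in Case 2, which requires simultaneously orchestrating the hypothesis on $i$ over $G(h)$, the lurker's offering structure from Lemma~\ref{lemma: not-in-o-lurked}, Pareto optimality, and obvious incentive compatibility. Case 1, by contrast, reduces to a bookkeeping exercise on $i$'s prior moves once the quantitative nonlurker bound is extracted and propagated via Lemma~\ref{lemma: one-undetermined}(2).
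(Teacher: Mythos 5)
Your overall strategy---verifying directly that $i$ can force each $o\in O_i(h)$---diverges from the paper's, and both of your cases contain gaps that I do not see how to close. In Case~1, to invoke Lemma~\ref{lemma: can-force-d}(1) for a house $o\in D^<_i(h)$ you must exhibit some $R_i\in\mathcal{R}_i(h)$ with $o=\max_{R_i}O_i(h)$. But by Lemma~\ref{lemma: one-undetermined}, every $R_i$ that survives past the earlier move $h''$ at which $o\in Y^*_{h''}$ must rank some witness $w\in O_i(h'')\setminus Y^*_{h''}$ above $o$; for $o$ to nevertheless be the top of $O_i(h)$ you need $w\notin O_i(h)$, i.e., that $O_i$ has strictly shrunk below that witness between $h''$ and $h$. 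The bound $|O_i(h^*)\setminus D_i(h^*)|\ge 2$ says nothing about this, so the ``slack'' argument does not produce the required preference. In Case~2, the claim that $o\in O_i(h)\setminus G(h)$ ``cannot arise'' is false: by Lemma~\ref{lemma: force-lurked-terminator}, exactly one nonlurker (the terminator $t$) may be matched with lurked houses, and the hypotheses of the present lemma are perfectly consistent with $i=t$ being a dictator whose $O_i(h)$ contains lurked houses---so you are attempting to refute a situation that actually occurs. Relatedly, your Pareto step fails at the quantifier ``for any $R_i$'': if $R_i$ ranks $o$ at the top, Pareto optimality alone does not award $o$ to $l$ rather than to $i$. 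The concluding ``exchange conflict'' is a placeholder rather than an argument, and you yourself flag it as the main obstacle.

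The paper sidesteps both difficulties by arguing by contradiction and never attempting to force houses in $D_i(h)$. If $i$ is not a dictator, then (as in Lemma~\ref{lemma: can-force-d}(2)) there is a house $o_l\in O_i(h)\setminus D_i(h)$ that $i$ cannot force; the hypothesis then places $o_l$ outside $G(h)$, so $o_l$ is lurked by some $l\ne i$. The contradiction is extracted not from $i$ but from the \emph{next mover} $k=P(h')$, where $h'=h$ or $h'=(h,\tilde{a})$: Lemma~\ref{lemma: one-undetermined} guarantees a forcible house $o^*\in Y^*_{h'}$, and a three-way case analysis---$o^*\in G(h)$, $o^*=o_l$, or $o^*$ is some other lurked house---shows, using tailored preference profiles for $i$, $l$, and possibly a second lurker together with strategyproofness and Pareto optimality, that $o^*$ is always absorbed by $i$ and the lurkers, so $k$ cannot force $o^*$ after all. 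This pivot to the next mover's forcible option is the idea your proposal is missing.
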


\begin{proof}[Proof of Lemma~\ref{lemma: force-g-then-dictator}]
Assume that $i$ is not a dictator at $h$. Therefore, there exists $o_l\in O_i(h)\setminus D_i(h)$ that $i$ cannot force at $h$.
By assumption, $o_l\notin G(h)$, implying that $o_l$ has a lurker, say $l$, at $h$. Since $i$ is not a lurker, $l\neq i$.

If $P(h)\ne i$ let $h'=h$; otherwise, let $h'=(h,\tilde{a})$. (Since $i$ is not a dictator at $h$, then $\tilde{a}\in\overline{A^*_h}$ and $(h,\tilde{a})$ is a nonterminal history of $M$.) By gradual revelation, $k\eqdef P(h')\neq i$.
By Lemma~\ref{lemma: one-undetermined}, $Y^*_{h'}\ne\emptyset$, so there exists $o^* \in Y^*_{h'}$. We conclude the proof by considering three cases, in each case reaching a contradiction by showing that $k$ cannot force $o^*$ at $h'$.

Case 1: If $o^*\in G(h)$, then by assumption, either $o^*\in D_i(h)$ or $i$ can force $o^*$ at $h$. If $o^*\in D_i(h)$, Lemma~\ref{lemma:
can-force-d}(2) implies that $o^*\in O_i(h)$. Since $l$ is a lurker at $h$ and $o^*\in G(h)$, by Lemma~\ref{lemma: force-lurked-then-force-g}, we
have that $o^*\in D_l(h)$.
Fix $R_{-k}\in\mathcal{R}_{-k}(h')$ with $R_i: o_l,o^*$ and $R_l: o_l, o^*$.
By strategyproofness, and since either $i$ can force $o^*$ at $h'$ or $o^*\in D_i(h')$, we obtain that $M(\mathbf{T}(R))(\{i,l\})=\{o,o^*\}$ for every $k\in\mathcal{R}_k(h)$, so $k$ cannot force $o^*$ at $h'$ --- a contradiction.

Case 2: If $o^*=o_l$, then let $o'\in G(h)$.
Fix $R_{-k}\in\mathcal{R}_{-k}(h')$ with $R_l: o_l, o'$ and $R_i: o_l,o'$.
By the argument of Case 1 with $o^*$ replaced by $o'$, we have that $M(\mathbf{T}(R))(\{i,l\})=\{o_l,o'\}$ for every $R_k\in \mathcal{R}_k(h')$, so $k$ cannot force $o_l=o^*$ at $h'$  --- a contradiction.

Case 3: Finally, if $o^*\notin G(h)$ and $o^*\ne o_l$, then $o^*$ has a lurker, say $l^*$, at $h$. Let $o'\in G(h)$. Note that $o^*,o_l,o'$ are all distinct.
Fix $R_{-k}\in\mathcal{R}_{-k}(h')$ with $R_i: o_l,o'$, $R_l: o_l,o'$, and $R_{l^*}: o^*,o'$.
By the argument of Case 2, we have that $M(\mathbf{T}(R))(\{i,l\})=\{o_l,o'\}$ for every $R_k\in\mathcal{R}_k(h')$.
Since $l^*$ is a lurker at $h$ and $o'\in G(h)$, by Lemma~\ref{lemma: not-in-o-lurked} we have that $o'\in D_{l^*}(h)$. So $M(\mathbf{T}(R))(l^*)$ must by strategyproofness be $R_{l^*}$-preferred to $o'$. Since $o'$ is matched to either $l$ or $i$, then $M(\mathbf{T}(R))(l^*)$ must equal $o^*$, so $k$ cannot force $o^*$ at $h'$  --- a contradiction.
\end{proof}

The next Lemma combines the two preceding ones to show that any agent $i$ who has a lurked house  $o_l$ in his set $D_i(h)$  is a dictator at $h$.

\begin{lemma}\label{lemma: force-lurked-then-dictator}
Let $h$ be a history of $M$. Let $i$ be a nonlurker at $h$ such that some $o_l\in D_i(h)$ has a lurker at $h$. Then $i$ is a dictator at $h$.
\end{lemma}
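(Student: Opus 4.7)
The plan is to combine Lemmas \ref{lemma: force-lurked-then-force-g} and \ref{lemma: force-g-then-dictator}, which together almost immediately yield the conclusion. Specifically, the hypothesis of Lemma \ref{lemma: force-g-then-dictator} is exactly the conclusion of Lemma \ref{lemma: force-lurked-then-force-g}, so once the hypotheses of both lemmas are verified, chaining them gives that $i$ is a dictator. Both lemmas require $i$ to be \emph{active} at $h$, while the statement here only assumes that $i$ is a nonlurker and that $o_l\in D_i(h)$ is lurked. So the only real work is to verify activeness.

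I would proceed by contrapositive: suppose $i$ is not a dictator at $h$, aiming for a contradiction. Since $o_l\in D_i(h)$, the set $D_i(h)$ is nonempty, so one half of activeness is free. For the other half, I would invoke Lemma \ref{lemma: can-force-d}(2), which says that whenever $i$ is not a dictator at $h$ one has $D_i(h)\subsetneq O_i(h)$; combined with $|D_i(h)|\ge 1$ this gives $|O_i(h)|\ge 2$. Thus $i$ is active at $h$, and also a nonlurker by hypothesis, so both Lemmas \ref{lemma: force-lurked-then-force-g} and \ref{lemma: force-g-then-dictator} apply.

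Applying Lemma \ref{lemma: force-lurked-then-force-g} (with the given lurked house $o_l\in D_i(h)$) yields that for every $o\in G(h)$, either $o\in D_i(h)$ or $i$ can force $o$ at $h$. This is precisely the hypothesis of Lemma \ref{lemma: force-g-then-dictator}, whose conclusion then gives that $i$ \emph{is} a dictator at $h$, contradicting our assumption. Hence $i$ must have been a dictator at $h$ to begin with, completing the proof.

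The argument is essentially a bookkeeping composition of the two preceding lemmas, so there is no substantive obstacle; the only small point to be careful about is that the activeness hypothesis needs to be derived from $o_l\in D_i(h)$ together with Lemma \ref{lemma: can-force-d}(2), rather than being assumed outright.
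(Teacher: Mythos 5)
Your proposal is correct and is essentially identical to the paper's own proof: assume $i$ is not a dictator, derive activeness from $o_l\in D_i(h)$ (the paper does this implicitly, you via Lemma~\ref{lemma: can-force-d}(2)), then chain Lemmas~\ref{lemma: force-lurked-then-force-g} and~\ref{lemma: force-g-then-dictator} to reach a contradiction.
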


\begin{proof}[Proof of Lemma~\ref{lemma: force-lurked-then-dictator}]
Assume that $i$ is not a dictator at $h$, so since $o_l\in D_i(h)$, agent $i$ is active at $h$.
For every $o\in G(h)$, by Lemma \ref{lemma: force-lurked-then-force-g} either $o\in D_i(h)$ or $i$ can force $o$  at $h$. 
Since $i$ is an active nonlurker at $h$, $i$ is by Lemma \ref{lemma: force-g-then-dictator} a dictator at $h$.
\end{proof}

\begin{definition}
For the remainder of the proof, we assume without loss of generality that the set of lurkers at a history $h$ ($h$ will be clear from context) is $L\eqdef\{1,\ldots,\lambda\}$, where $l<l'$ holds for $l,l'\in L$ if and only if $l$ becomes a lurker at a subhistory of the history at which $l'$ becomes a lurker (where both are subhistories of $h$). Moreover, we denote the house lurked by lurker $l$ by~$o_l$.
\end{definition}

\begin{lemma}\label{lemma: lurkers-hierarchy}
Let $h$ be a history of $M$.
\begin{enumerate}
\item
$o_1,\ldots,o_\lambda$ are distinct.
\item
$O_{l}(h)=O(h)\setminus\{o_1,\ldots,o_{l-1}\}$ holds for all $l\in L$.
\end{enumerate}
\end{lemma}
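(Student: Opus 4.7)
The plan is to prove both parts simultaneously by induction on $l\in L$, drawing on Lemmas~\ref{lemma: not-in-o-lurked}, \ref{lemma: force-lurked-then-dictator}, \ref{lemma: force-g-then-dictator}, \ref{lemma: can-force-d}, and the characterization of $\mathcal{R}_i(h)$ provided by Lemma~\ref{lemma: one-undetermined}. A first observation is that for each $k<l$, agent $k$ is also a lurker at $h_l$ with the same lurked house $o_k$: since $k$'s lurker-birth history precedes $l$'s, agent $k$'s last move occurs strictly before $h_l$, so her lurker condition at $h_l$ coincides with the one at $h$. A recurring sub-claim that I will invoke is: \emph{if an agent $i$ who is a lurker at $h$ happens to be a dictator at $h_i$, contradiction ensues}. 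Indeed, by Lemma~\ref{lemma: one-undetermined}, every $R_i\in\mathcal{R}_i(h)$ ranks $o_i$ atop $i$'s reachable houses $O_i(h_i)$; since $i$ is a dictator at $h_i$ there is $R_i^*\in\mathcal{R}_i(h_i)$ forcing $o_i$, and it is not hard to check that $R_i^*$ actually lies in $\mathcal{R}_i(h)$; strict preferences plus strategyproofness then force $M(\mathbf{T}(R))(i)=o_i$ for every $R\in\mathcal{R}(h)$, collapsing $O_i(h)=\{o_i\}$ and rendering $i$ a (trivial) dictator at $h$---contradicting her lurker status.

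For Part~1 at index $l$, I would suppose for contradiction that $o_l=o_k=:o$ for some $k<l$. By the inductive hypothesis and the first observation above, $o_1,\ldots,o_{l-1}$ are distinct houses lurked by $1,\ldots,l-1$ at $h_l$, so $G(h_l)=O(h_l)\setminus\{o_1,\ldots,o_{l-1}\}$ and in particular $o\notin G(h_l)$. Applying Lemma~\ref{lemma: not-in-o-lurked} to the unmatched nonlurker $l$ at $h_l$, together with Lemma~\ref{lemma: can-force-d}(2) on the a-priori nondictator $l$, yields $G(h_l)\subseteq O_l(h_l)\cup D^<_l(h_l)\subseteq O_l(h_l)$; since $D_l(h_l)=O_l(h_l)\setminus\{o\}$ by the lurker condition, every $o'\in G(h_l)$ lies in $D_l(h_l)$. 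Lemma~\ref{lemma: force-g-then-dictator} then forces $l$ to be a dictator at $h_l$, and the sub-claim furnishes the contradiction.

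For Part~2, with Part~1 now established at $l$, applying Lemma~\ref{lemma: not-in-o-lurked} and Lemma~\ref{lemma: can-force-d}(2) as above gives $G(h_l)\subseteq O_l(h_l)$; combined with Part~1 and the inductive hypothesis, $G(h_l)=O(h_l)\setminus\{o_1,\ldots,o_{l-1}\}\subseteq O_l(h_l)$. For the reverse inclusion, I would use Lemma~\ref{lemma: force-lurked-then-dictator} to preclude any $o_j\in D_l(h_l)$ for $j<l$---lest $l$ be a dictator at $h_l$ and trigger the sub-claim---so $O_l(h_l)=D_l(h_l)\cup\{o_l\}\subseteq O(h_l)\setminus\{o_1,\ldots,o_{l-1}\}$ by Part~1. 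Equality at $h_l$ follows. To lift to $h$, I would observe that $O_l(h_l)\setminus O_l(h)=O(h_l)\setminus O(h)$: a house is lost from $O_l$ in the interim iff it is matched in the interim, and each $o_j$ for $j<l$ persists in $O(h)$ because its lurker $j$ remains unmatched at $h$, so no lurked house gets matched in the interim.

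The main obstacle is the sub-claim---translating a dictator-at-$h_i$ conclusion into a contradiction via the lurker status at $h$. The argument hinges on Lemma~\ref{lemma: one-undetermined}'s characterization (which pins $o_i$ to the top of $i$'s reachable houses across all $R_i\in\mathcal{R}_i(h)$), a verification that the ``dictating'' preference $R_i^*$ itself lies in $\mathcal{R}_i(h)$ (because its truthful choice at $h_i$ must be $\tilde{a}$, since $o_i\notin Y^*_{h_i}$), and the strictness of preferences to convert the weak-dominance inequality from strategyproofness into the equality $M(\mathbf{T}(R))(i)=o_i$ within $O_i(h_i)$.
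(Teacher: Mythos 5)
Your proof is correct and follows essentially the same route as the paper's: induction on $l$, with Lemma~\ref{lemma: not-in-o-lurked} giving $G(h_l)\setminus\{o_l\}\subseteq D_l(h_l)$, Lemma~\ref{lemma: force-g-then-dictator} yielding the contradiction that establishes distinctness, Lemma~\ref{lemma: force-lurked-then-dictator} excluding earlier lurked houses from $D_l(h_l)$, and monotonicity of $D_l(\cdot)$ and $O(\cdot)$ lifting the identity from $h_l$ to $h$. The only real difference is cosmetic: your elaborate ``sub-claim'' is disposed of in the paper in one line, since dictatorship at $h_l$ persists to the superhistory $h$ and so directly contradicts the definition of a lurker at $h$.
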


\begin{proof}[Proof of Lemma~\ref{lemma: lurkers-hierarchy}]
We prove the claim by induction over $l\in L$.
Assume that the claim holds for the lurkers $1,\ldots,l\!-\!1$ (and their associate houses) and let $h_l\subsetneq h$ be the history at which $l$ becomes a lurker. (So $P(h_l)=l$, and agent $l$ is not a lurker at $h_l$, but is a lurker at $(h_l,\tilde{a})$.)
Since $l$ is a lurker at $(h_l,\tilde{a})$, Lemma \ref{lemma: not-in-o-lurked} implies that $G(h_l)\setminus\{o_l\}=G(h_l,\tilde{a})\subseteq D^<_l(h_l,\tilde{a})=D_l(h_l)$.

Assume for contradiction that $o_l\in\{o_1,\ldots,o_{l-1}\}$. Therefore, $G(h_l)=G(h_l)\setminus\{o_l\}\subseteq D_l(h_l)$, and so $l$, who is an active nonlurker at $h_l$, is by Lemma~\ref{lemma:
force-g-then-dictator} a dictator at $h_l$ (and its superhistory $h$) --- a contradiction. In sum, $o_l$ is distinct from $o_1,\ldots, o_{l-1}$.

Recall that $O(h)\setminus\{o_1,\ldots,o_l\}=G(h_l)\setminus\{o_l\}\subseteq D_l(h_l)$. 
If this containment is strict, then $o_k\in D_l(h_l)$ for some lurker $k<l$, and therefore, by Lemma \ref{lemma: force-lurked-then-dictator}, agent $l$ is a dictator at $h_l$ --- a contradiction.
So, $D_l(h_l)=O(h)\setminus\{o_1,\ldots,o_l\}$. 
Since $D_l(\cdot)$ and $O(\cdot)$ are respectively weakly increasing and weakly decreasing, we have $D_{l}(h)= O(h)\setminus\{o_1,\ldots,o_{l}\}$ as well,
and since $l$ is still a lurker at $h$, we have that $O_l(h)=D_l(h)\cup\{o_l\}=O(h)\setminus\{o_1,\ldots, o_{l-1}\}$, as required.
\end{proof}

\begin{lemma}\label{lemma: how-to-unravel}
Let $h$ be a history of $M$ with $i=P(h)\notin L$.
Let $R\in\mathcal{R}(h)$ with $\mathbf{T}(R)(i)=a\in A^*_h$. Let $o$ be the unique house such that $O_i(h,a)=\{o\}$.
\begin{itemize}
\item
 $M(\mathbf{T}(R))(l)$ can for each lurker $l\in L$ be calculated by the following inductive procedure, starting with Step $1$ and $o'\gets o$.

Step $l$:
\begin{itemize}
\item
If $o'\in G(h)$, then:
\begin{enumerate}
\item
$M(\mathbf{T}(R))(k)=o_k$ for each $l\le k\le\lambda$.
\end{enumerate}
\item
Else:
\begin{enumerate}
\item
Let $l'\in L$ be the unique lurker such that $o_{l'}=o'$.
\item
$M(\mathbf{T}(R))(k)=o_k$ for each $l\leq k< l'$
\item
$M(\mathbf{T}(R))(l')=\max_{R_{l'}}D_{l'}(h)$.
\item
If $l'<\lambda$, then go to Step $l'+1$ with $o'\gets M(\mathbf{T}(R))(l')$.
\end{enumerate}
\end{itemize}
\item
$M(\mathbf{T}(R))(\{i,1,\ldots, \lambda\})=\{o_1,\ldots, o_{\lambda}, o'\}$, where $o'\in G(h)$. If $o\in G(h)$, then $o'=o$.
\end{itemize}
\end{lemma}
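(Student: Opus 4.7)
The plan is to prove both bullets together by strong induction on the iteration step of the procedure, leaning on the hierarchical structure from Lemma~\ref{lemma: lurkers-hierarchy}: for every lurker $l\in L$, $O_l(h)=O(h)\setminus\{o_1,\ldots,o_{l-1}\}$, $D_l(h)=O(h)\setminus\{o_1,\ldots,o_l\}$, and $o_l=\max_{R_l}O_l(h)$ for every $R_l\in\mathcal{R}_l(h)$. The identity $M(\mathbf{T}(R))(i)=o$ is immediate from the hypothesis that $i$ plays $a\in A^*_h$ with $O_i(h,a)=\{o\}$, which forces $i$'s match regardless of other behaviors.

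The core is a ``bumping'' claim: if, at the current iteration, some house $o'$ has been consumed by an already-processed agent and $o'=o_{l'}$ for some lurker $l'$, then $M(\mathbf{T}(R))(l')=\max_{R_{l'}}D_{l'}(h)$. The upper bound on quality comes from $o_{l'}$ being consumed, so $l'$'s outcome must lie in $O_{l'}(h)\setminus\{o_{l'}\}=D_{l'}(h)$. The matching lower bound comes from obvious strategyproofness at the history $h_{l'}\subsetneq h$ where $l'$ became a lurker: at $h_{l'}$ agent $l'$ had deviations forcing any house in $D_{l'}(h_{l'})=D_{l'}(h)$, so her worst-case truthful outcome is $R_{l'}$-weakly preferred to the best such forcing option, namely $\max_{R_{l'}}D_{l'}(h)$. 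Together these pin down $l'$'s match.

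For lurkers $k$ whose lurked house $o_k$ is never consumed in the bumping chain (including all lurkers when $o\in G(h)$), I plan to show $M(\mathbf{T}(R))(k)=o_k$ via Pareto optimality. Since $o_k$ is the $R_k$-top of $O_k(h)$, a matching assigning $k$ to some $x\neq o_k$ admits a cyclic reassignment: tracing the chain $k=k_0\to k_1=M(\mathbf{T}(R))^{-1}(o_{k_0})\to\cdots$, if it closes within lurkers, rotating each $k_j\mapsto o_{k_j}$ strictly Pareto-improves $M(\mathbf{T}(R))$ (each lurker's $O$-top is her own lurked house), contradicting Pareto optimality. If the chain instead exits to a non-lurker, Lemma~\ref{lemma: force-lurked-then-dictator} constrains which non-lurkers can hold a lurked house (forcing them to be dictators with tailored $D$-sets), and we rule out this possibility by perturbing the non-lurker's preference (within her $\mathcal{R}_j(h)$) and applying strategyproofness.

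Iterating the bumping claim advances $o'\gets\max_{R_{l'}}D_{l'}(h)$ and strictly progresses through lurker indices, so the procedure terminates in at most $\lambda$ steps; the collective bullet then follows by collecting matches --- non-bumped lurkers contribute their own $o_l$, bumped lurkers contribute the prior chain link's lurked house, and the chain terminates at exactly one house in $G(h)$. The main obstacle I anticipate is rigorously executing the ``chain exits the lurker set'' case, which requires careful bookkeeping of the remaining agents' $O$- and $D$-sets, combined with Lemma~\ref{lemma: force-lurked-then-dictator}, to preclude any non-lurker from ending up matched to a lurked house in the outcome.
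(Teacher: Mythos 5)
Your treatment of the ``bumped'' lurker $l'$ matches the paper's: the upper bound (her match lies in $D_{l'}(h)$ because $o_{l'}$ is consumed) plus the strategyproofness lower bound (she could have forced any house in $D_{l'}(h)$, so her truthful match must be weakly $R_{l'}$-preferred to $\max_{R_{l'}}D_{l'}(h)$) pins down her match exactly as in the paper. The gap is in your handling of the lurkers whose houses are never consumed. First, your ``chain closes within lurkers'' case is vacuous: if lurker $k_1$ holds $o_{k_0}$, then $o_{k_0}\in O_{k_1}(h)=O(h)\setminus\{o_1,\ldots,o_{k_1-1}\}$ forces $k_1<k_0$ by Lemma~\ref{lemma: lurkers-hierarchy}, so the indices strictly decrease and no cycle can form. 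Consequently the entire burden falls on the ``chain exits the lurker set'' case, which you only sketch and correctly identify as the obstacle. That case cannot be waved away: by Lemma~\ref{lemma: force-lurked-terminator} the terminator $t$ genuinely can have lurked houses in $O_t(h)$, and the agent $i$ itself takes a lurked house whenever $o\notin G(h)$, so chains really do exit to nonlurkers; ruling out that $t$ ends up with $o_k$ for a non-bumped $k$ requires an argument you have not supplied, and the rotation is not even a Pareto improvement at the exit agent without knowing her preference ranks the next house in the chain above her current one.

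The paper avoids all of this with a direct strategyproofness argument that you should adopt: at step $l$ the house $o'$ driving that step is already matched to some agent in $\{i,1,\ldots,l-1\}$ and lies in $D_k(h)$ for each relevant remaining lurker $k$ (via $G(h)\subseteq D_k(h)$ from Lemma~\ref{lemma: not-in-o-lurked} when $o'\in G(h)$, and via the hierarchy when $o'=o_{l'}$ and $k<l'$). For the hypothetical preference $R'_k\colon o_k,o'$, which lies in $\mathcal{R}_k(h)=\mathcal{R}_k(h,a)$ because $k$ lurks $o_k$, strategyproofness forces $k$'s match to be weakly $R'_k$-preferred to $o'$; since $o'$ is taken, the match is $o_k$. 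A second application of strategyproofness (using that $o_k=\max_{R_k}O_k(h)$ for every $R_k\in\mathcal{R}_k(h)$) transfers this to the actual $R_k$. This shows directly that each non-bumped lurker receives her own lurked house, which makes the chain analysis unnecessary. With that replacement your induction and the derivation of the second bullet from the first go through.
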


\begin{proof}[Proof of Lemma~\ref{lemma: how-to-unravel}]
To prove Part 1, assume that in the steps preceding step $l$, lurkers $\{1,\ldots,l\!-\!1\}$ (this set is empty if $l=1$) are matched in accordance with the above procedure.
We will show that the matches described in step $l$ are correct. Recall that $o'$ (as it is defined for step $l$) has already been matched to some agent in $\{i,1,\ldots,l\!-\!1\}$.

If $o'\in G(h)$, then fix some $l\le k\le\lambda$. We must show that $k$ is matched with $o_k$. By Lemma~\ref{lemma: not-in-o-lurked}, $G(h)\subseteq D_{k}(h)$, and so $o'\in D_k(h)$.
If it is the case that $R_k=R'_k:o_k,o'$, then by strategyproofness $k$ must be matched with a house she weakly $R_k$-prefers to $o'$, meaning that $k$ must be matched at $(h,a)$ with $o_k$. Since $k$ lurks $o_k$ at $h$, we have by Lemma~\ref{lemma: one-undetermined} that $R'_k\in\mathcal{R}_k(h)=\mathcal{R}_k(h,a)$.
By strategyproofness and since $k$ lurks $o_k$ at $h$, we therefore have that $k$ must be matched with $o_k$ at $(h,a)$ regardless of whether or not $R_k=R'_k$.

It remains to consider the case in which $o'\notin G(h)$, i.e., $o'=o_{l'}$ for for some $l'\in L$. By the preceding steps, $l'\ge l$. (Trivial if $l=1$, and otherwise follows from the fact that $o_{l'}=o'\in D_{l-1}(h)$ and from Lemma~\ref{lemma: lurkers-hierarchy}.)
For every lurker $l\le k<l'$, we have that $o'=o_{l'}\in D_k(h)$ and by an argument similar to the one in the preceding paragraph, we have that $k$ is matched with $o_k$ at $(h,a)$.
Since $o_{l'} \notin O_{l'}(h,a)$, we have by strategyproofness that lurker $l'$ must be matched at $(h,a)$ with his most preferred house $o^*$ in $D_{l'}(h)$, as required.

Since there are finitely many lurkers and since at least one lurker is matched at each step, the process terminates after finitely many steps. The second Part is a direct consequence of the first.
 \end{proof}

\begin{lemma}\label{lemma: force-lurked-terminator}
Let $h$ be a history of $M$ with lurked houses. Let $t=P(h')$ for $h'$ the maximal superhistory of $h$ of the form $h'=(h,\tilde{a},\tilde{a},\ldots,\tilde{a})$.
Then the agent $t$ is not a lurker at $h$. Furthermore,
If $o_l\in O_k(h)$ for some
nonlurker $k$ and a lurked house $o_l$ at $h$, then $k=t$.
Moreover, $D_t(h')=O(h)$.
\end{lemma}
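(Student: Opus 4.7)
The approach is to first extract a key consequence of the chain's maximality --- that $t$ is a dictator at $h'$ --- and then prove Parts 1, 2, and 3 through intertwined arguments. If some $\tilde{a}\in\overline{A^*_{h'}}$ existed, the extension $(h',\tilde{a})$ would either be nonterminal (contradicting maximality) or terminal; in the latter case $\tilde{a}$ would determine a unique outcome for $t$ and hence belong to $A^*_{h'}$, contradicting $\tilde{a}\in\overline{A^*_{h'}}$. Hence $\overline{A^*_{h'}}=\emptyset$, so $t=P(h')$ is a dictator at $h'$, yielding $D_t(h')=O_t(h')=Y^*_{h'}$.

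For Part 3, the easy inclusion $D_t(h')\subseteq O(h)$ follows from $O(h')=O(h)$ (no matching along the $\tilde{a}$-chain, since every action taken along it is $\tilde{a}\in\overline{A^*}$) combined with $D_t(h')\subseteq O_t(h')\subseteq O(h')$. For the reverse inclusion $O_t(h')\supseteq O(h)$, I plan an induction along the chain $h=h_0\subsetneq h_1\subsetneq\cdots\subsetneq h_K=h'$. At each step the mover $P(h_j)$ picks $\tilde{a}$, which by Lemma~\ref{lemma: one-undetermined}(2) reveals that her top preference lies outside $Y^*_{h_j}$. Combining this accumulating information with Pareto optimality and with Lemma~\ref{lemma: how-to-unravel} governing the unravelling of lurker chains upon a nonlurker's choice, I aim to exhibit, for each $o\in O(h)$, a preference profile consistent with the chain under which $t$ is matched with $o$ at $h'$, establishing $O_t(h')\supseteq O(h)$.

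Part 2 uses the same chain analysis: suppose a nonlurker $k$ has a lurked $o_l\in O_k(h)$. Lemma~\ref{lemma: force-lurked-then-dictator} rules out $o_l\in D_k(h)$ unless $k$ is a dictator, so $o_l\in O_k(h)\setminus D_k(h)$. Then by Lemma~\ref{lemma: one-undetermined}(2), whenever $k$ moves along the chain with $o_l$ as top, she must choose $\tilde{a}$; since $\tilde{a}$ is unique at each history (Lemma~\ref{lemma: one-undetermined}(1)), tracking $k$ through the chain and using the characterization of $O_t(h')=O(h)$ from Part 3, I conclude that $k$ must be the unique mover at $h'$, namely $t$. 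Part 1 then follows by contradiction from Part 3: if $t$ were a lurker for $o_t$, Lemma~\ref{lemma: lurkers-hierarchy} gives $O_t(h)=O(h)\setminus\{o_1,\ldots,o_{t-1}\}$, and since $O_t$ weakly decreases along the chain, for $t\ge 2$ we obtain $O_t(h')\subsetneq O(h)$, contradicting $O_t(h')=D_t(h')=O(h)$. For $t=1$, every $R_t\in\mathcal{R}_t(h')$ ranks $o_t$ at the top, so truthtelling matches $t$ with $o_t$ whenever $o_t\in O_t$, giving $O_t(h')=\{o_t\}$ and contradicting $|O(h)|\ge 2$ (guaranteed since the existence of a lurker at $h$ implies $|O(h)|\ge 2$, as otherwise the top lurker would have a singleton $O$-set and be a trivial dictator).

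The main obstacle is the inductive argument for $O_t(h')\supseteq O(h)$ in Part 3: although $O_t$ is weakly decreasing along the chain, one must show that every unmatched house remains a possible outcome for $t$ after all the chain's $\tilde{a}$-reveals. The unravelling structure of Lemma~\ref{lemma: how-to-unravel}, together with the sequential imposition of constraints on other agents' top preferences, is what ensures $t$ is never excluded as a recipient for any $o\in O(h)$; but the bookkeeping for lurked versus non-lurked houses along the possibly long chain will require the kind of meticulous case analysis typical of this section.
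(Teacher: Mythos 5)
Your opening observation --- that maximality of the $\tilde a$-chain forces $\overline{A^*_{h'}}=\emptyset$, so that $t$ is a dictator at $h'$ with $O_t(h')=Y^*_{h'}\subseteq D_t(h')$ --- is correct and is indeed the backbone of the paper's argument as well. But the rest of the proposal has a genuine gap. The crux of the lemma is the inclusion $O(h)\subseteq D_t(h')$, and for this you only announce a plan (``I aim to exhibit, for each $o\in O(h)$, a preference profile\ldots''), conceding yourself that the bookkeeping is not carried out. This is exactly the step that needs a new idea, and the one the paper supplies is not a history-by-history induction along the chain at all: one first shows that $o_1$, the house lurked by the \emph{oldest} lurker, lies in $D_t(h')$ (lurker $1$ does not always receive $o_1$, so some agent receives it on some profile; by Lemma~\ref{lemma: lurkers-hierarchy} no younger lurker can, by Part~2 no nonlurker other than $t$ can, and that profile must survive to $h'$, where $t$'s choice is dictatorial). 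Then, for any $o'\in O(h)\setminus D_t(h')$, the profile with $R_t:o',o_1$ and $R_1:o_1,o'$ forces $t$ to take $o_1$ and lurker $1$ to take $o'$ (using $o'\in D_1(h)$), an outcome that a swap would Pareto-improve --- contradiction. Nothing in your sketch plays the role of this pairing of $t$ against the oldest lurker, and without it the ``meticulous case analysis'' you defer has no engine.

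There is also a structural problem: your dependency graph is circular. You derive Part~2 and Part~1 \emph{from} Part~3 ($D_t(h')=O(h)$), but Part~3 cannot be established without Part~2, since identifying $t$ as the unique nonlurker who can ever be matched with a lurked house is precisely what yields $o_1\in D_t(h')$. The paper therefore proves Part~2 first and independently: any profile at which a nonlurker $k$ is matched with a lurked house must reach $h'$ (an early non-$\tilde a$ move by a nondictator matches that agent to a house in $G(h)$ and, by Lemma~\ref{lemma: how-to-unravel}, sends every lurked house to its lurker), and the unravelling triggered by $t$'s dictatorial choice at $h'$ assigns all lurked houses to $\{t\}\cup L$, whence $k=t$. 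Part~1 then follows from Part~2 alone: if $t$ were a lurker, no nonlurker could ever receive a lurked house, so Pareto optimality would hand every lurker her lurked house and make her a dictator, contradicting the definition of a lurker. Your own Part~1 argument additionally stumbles on the case where $t$ is the \emph{oldest} lurker: there $O_t(h)=O(h)$, so no cardinality contradiction arises, and your fallback claim $O_t(h')=\{o_t\}$ is false --- a lurker is by definition not a dictator, so she cannot be guaranteed her lurked house.
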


\begin{proof}[Proof of Lemma~\ref{lemma: force-lurked-terminator}]
We start by proving the second statement.
Say that
 $o_l\in O_k(h)\setminus G(h)$ holds for some nonlurker $k$.
Let $l$ be the lurker of $o_l$ at $h$; by assumption, $l\ne k$.
Since $o_l\in O_k(h)$, there exists some $R\in \mathcal{R}(h)$ with $M(\mathbf{T}(R))(k)=o_l$.

If $R\notin\mathcal{R}(h')$, then at some $h''\subsetneq h'$, the agent $i=P(h'')$ chooses an action $\mathbf{T}_i(R_i)(h'')=a\ne\tilde{a}$. Since $a\ne\tilde{a}$, the choice of $a$ by $i$ matches $i$ with some house $o'$. Since $i$ is not a dictator at $h'$, she can by Lemma \ref{lemma: force-lurked-then-dictator} only choose a house in $G(h)$ at $h$, and so $o'\in G(h)$.
By Lemma \ref{lemma: how-to-unravel}, following the choice of $a$ by $i$ at $h$, house $o_l$ is matched with agent $l$ rather than with agent $k$ --- a contradiction, so $R\in\mathcal{R}(h')$. 

Let $a=\mathbf{T}_t(R_t)(h')$ be agent $t$'s choice at $h'$.
Note that $a\ne\tilde{a}$ by definition of $h'$. Therefore, by Lemma \ref{lemma: one-undetermined}, agent $t$ becomes matched to some house $o$ at $(h',a)$.
By Lemma~\ref{lemma: how-to-unravel}, $\{o_1,\ldots, o_\lambda\}\subset M(\mathbf{T}(R))(t,1,\ldots,\lambda)$. Since $1,\ldots,\lambda$ are lurkers while $k$ is a not a lurker, we have that $k=t$.

The first statement is a direct consequence of the second one: if $t$ is a lurker at $h$, then no nonlurker $k$ at $h$ may be possibly matched with any house that is lurked at $h$, so all lurked houses, under any superhistory of $h$, will be matched with lurkers, so by Pareto optimality each lurker gets her lurked house, so all lurkers are dictators at $h$ --- a contradiction.

Finally, we show that $D_t(h')=O(h)$. Recall that the oldest lurker at $h$, agent 1, lurks house~$o_1$. Since $1$ is a lurker, there is a profile of preferences $R\in \mathcal{R}(h)$ with $M(\mathbf{T}({R}))(1)\neq o_1$. By Lemma~\ref{lemma: lurkers-hierarchy}, $o_1$ is not contained in $ O_{l}(h)$ for any lurker $l>1$. Since, by the first statement, $t$ is the only nonlurker with $o_1\in O_t(h)$,  $A(h')$ must contain an action $a$
with  $O_t(h,a)=\{o_1\}$, so $o_1\in D_t(h')$.

If $o_1\in D^<_t(h')$, then by the second statement and by Lemma~\ref{lemma: force-lurked-then-dictator}, $t$ was given the option to choose $o_1$ at a strict subhistory of $h'$ before $1$ became a lurker (since he was not a dictator when given this option). By the same lemma, $t$ became a dictator immediately when $1$ became a lurker, and so at that point $t$ was able to force all remaining houses in $O(h)$ that she could not force before. Since the next move of $t$ after that is a dictatorial one, it is at $h'$, and so $D_t(h')=O(h)$, as required. Assume henceforth, therefore, that $o_1\notin D^<_t(h')$, and assume for contradiction that there exists a house $o'\in O(h)\setminus D_t(h')$.

Since $1$ is the oldest lurker at $h$ and since $o'\ne o_1$, Lemma \ref{lemma: lurkers-hierarchy}
implies $o'\in D_1(h)$.
For every $R\in \mathcal{R}(h')$ with
$R_t:o',o_1$ (this preference indeed reaches $h'$ since $o'\notin D_t(h')$ and $o_1\in O_t(h')\setminus D^<_t(h')$) and $R_1:o_1,o'$, by strategyproofness $t$~is matched with $o_1$ (since $o'\notin D_t(h')$ and $t$ is a dictator at $h'$, however $o_1\in D_t(h')$) while $1$ is matched with $o'$ (since $o_1$ is matched to $t$, however $o'\in D_1(h)$)--- a contradiction to
Pareto optimality.
\end{proof}

\begin{lemma}\label{lemma: no-three}
Let $h$ be a history of $M$ s.t.\ two distinct players $i,j$ are active nonlurkers at $h$, and $k=P(h)\notin\{i,j\}$ has $D^<_k(h)=\emptyset$.
Then, either $i$ or $j$ is a dictator at $h$.
\end{lemma}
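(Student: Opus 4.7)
Assume for contradiction that neither $i$ nor $j$ is a dictator at $h$.

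Step 1 (setting up $k$'s options). Observe that in the matching setting, Lemma~\ref{lemma: one-undetermined}(1) together with the gradual-revelation property $|A(h')|\ge 2$ forces $Y^*_{h'}\ne\emptyset$ whenever $P(h')=k$. Hence $D^<_k(h)=\emptyset$ entails that $k$ has never moved before $h$, so $\mathcal{R}_k(h)=\mathcal{R}_k$. Applying Lemma~\ref{lemma: one-undetermined} at $h$ itself, fix $a^*\in A^*_h$ with $O_k(h,a^*)=\{o^*\}$ for some $o^*\in Y^*_h$.

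Step 2 (witnesses of nondictatorship). Since $i$ and $j$ are nonlurker nondictators, Lemma~\ref{lemma: can-force-d}(2) gives $o_i\in O_i(h)\setminus D_i(h)$ and $o_j\in O_j(h)\setminus D_j(h)$, while Lemma~\ref{lemma: force-lurked-then-dictator} places both $D_i(h)$ and $D_j(h)$ inside $G(h)$. By Lemma~\ref{lemma: one-undetermined}(2) there exist preferences $R_i\in\mathcal{R}_i(h)$ ranking $o_i$ at the top of $O_i(h)$ and $R_j\in\mathcal{R}_j(h)$ ranking $o_j$ at the top of $O_j(h)$.

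Step 3 (construction of contradictory profiles). I would compare two profiles in $\mathcal{R}(h)$ that agree on $R_i,R_j$ and on all other agents and differ only in $R_k$: in the first, $R_k$ ranks $o^*$ first, so $k$ truthfully plays $a^*$ at $h$ and is matched with $o^*$; in the second, $R_k$ is chosen to rank some other house in $O_k(h)$ first (exploiting $\mathcal{R}_k(h)=\mathcal{R}_k$), so that $k$ does not play $a^*$ and $o^*$ remains in the pool. Choosing the other agents' preferences so that the first profile leaves $i$ matched with a strictly $R_i$-inferior house while the second profile (by Pareto optimality together with the choice $R_i:o_i$) does match $i$ with $o_i$, I obtain two profiles along which $i$'s truthtelling outcome differs. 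Tracing back to the last move $h''\subsetneq h$ at which $P(h'')=i$, OSP of the truthtelling action $\tilde{a}$ at $h''$ demands that the worst outcome from $\tilde{a}$ be weakly $R_i$-preferred to $\max_{R_i}Y^*_{h''}$. But $Y^*_{h''}\subseteq D_i(h)$ does not contain $o_i$, so $\max_{R_i}Y^*_{h''}$ is strictly $R_i$-inferior to $o_i$; meanwhile, the first profile delivers to $i$ an outcome strictly $R_i$-worse than $\max_{R_i}Y^*_{h''}$, violating OSP at $h''$.

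The chief obstacle is the case analysis underlying Step~3: the construction of the two ``companion'' profiles varies with whether $o^*$ coincides with $o_i$ or $o_j$, whether $o^*$ lies in $O_i(h)$ or $O_j(h)$, and---when $h$ has lurkers---which of $i,j,k$ plays the role of the terminator from Lemma~\ref{lemma: force-lurked-terminator}. In each configuration, invoking Pareto optimality to pin down $i$'s outcome requires a careful tailoring of $R_j$ and of the preferences of the remaining unmatched agents, and one must verify that both comparison profiles actually lie in $\mathcal{R}(h)$. The template, however, is the same in every case: exhibit two profiles differing only in $R_k$ such that $k$'s switch between its dictatorial action and $\tilde{a}$ forces $i$ (or symmetrically $j$) below the OSP threshold at her last move.
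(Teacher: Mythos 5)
Your Steps 1--2 match the paper's setup: $D^<_k(h)=\emptyset$ indeed forces $k$ to be a first-time mover, and the witnesses $o_i\in O_i(h)\setminus D_i(h)$, $o_j\in O_j(h)\setminus D_j(h)$ with $D_i(h),D_j(h)\subseteq G(h)$ are exactly where the paper starts. The problem is Step 3, and it is a problem of kind, not of detail. You propose to exhibit a profile in which $i$'s truthtelling outcome is strictly $R_i$-worse than $\max_{R_i}Y^*_{h''}$ for her last move $h''$. No such profile can exist: since $\max_{R_i}Y^*_{h''}\in D_i(h)$, ordinary strategyproofness (already implied by OSP) guarantees that $i$'s truthtelling outcome is weakly $R_i$-preferred to every house in $D_i(h)$ at every profile reaching $h$. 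So the object your template asks you to construct is precluded by the very hypotheses you are working under; the contradiction has to live somewhere else. In the paper it lives in one of two places: either the guarantees that strategyproofness and Pareto optimality give simultaneously to $i$ and $j$ (each must receive something weakly preferred to her $D$-set) pin down the allocation of $o\in Y^*_h$ to $i$ or $j$, contradicting the assumption that $k$ can force $o$ at $h$; or one shows that $i$ (or $j$) can in fact force every house in $G(h)$ and is therefore a dictator by Lemma~\ref{lemma: force-g-then-dictator}. Neither of these is a statement about $i$ falling below her OSP threshold at $h''$.

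The second gap is that the case analysis is not, as you suggest, a uniform tailoring of one template over a few symmetric configurations. The paper needs six structurally distinct cases keyed to how $D_i(h)$, $D_j(h)$, $Y^*_h$, and $G(h)$ intersect, and two of them use machinery entirely absent from your plan. When $D_i(h)\cup D_j(h)=G(h)$, or when the option sets are nested ($D_i(h)\supseteq D_j(h)\cup Y^*_h$), the conclusion is that one of $i,j$ is a dictator, reached in the latter case by a long argument that follows the path forward from $h$, rules out any agent forcing a house outside $D_i$, and tracks the possible emergence of lurkers for the target house. When $D_i(h)\setminus D_j(h)$ and $D_j(h)\setminus D_i(h)$ are both nonempty and $o\in Y^*_h$ lies outside both, the paper prunes the preference domain down to a three-agent, three-house subgame and analyzes who moves after the passing action $\tilde{a}$ --- an argument that cannot be phrased as ``two profiles differing only in $R_k$.'' Finally, the case $o\notin G(h)$ needs Lemma~\ref{lemma: force-lurked-then-force-g} to reduce to the earlier cases. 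As written, your plan would stall at the first configuration in which the intended bad profile for $i$ simply fails to exist.
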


\begin{proof}[Proof of Lemma~\ref{lemma: no-three}]
Assume that all agents at $h$ are either active or have not yet moved at $h$. (If not, then
remove from $M$ all players that have moved before $h$ but are no longer active at $h$, along with their
houses, keeping all else equal, to obtain a new incentive compatible gradual revelation mechanism for the smaller set of agents and houses.)

Assume for contradiction that neither $i$ nor $j$ is a dictator at $h$.
Therefore, by Lemmas~\ref{lemma: can-force-d}(2), \ref{lemma: force-g-then-dictator} and \ref{lemma: force-lurked-then-dictator}, $D_i(h) \subsetneq G(h) \subseteq O_i(h)$ and $D_j(h)\subsetneq G(h) \subseteq O_j(h)$.
Furthermore, by Lemma~\ref{lemma: force-lurked-terminator} at most one of the two agents can be matched with any lurked houses, so we have either $G(h)=O_i(h)$ have or $G(h)=O_j(h)$ (or both).
By Lemma~\ref{lemma: one-undetermined}, and since the choice of $k$ at $h$ is nontrivial, there exists $o \in Y^*_h=D_k(h)$. We reason by cases.

\medskip

Case 1: $D_i(h) \cup D_j(h) = G(h)$. Assume w.l.o.g.\ that $O_i(h)=G(h)$. We will obtain a contradiction by showing that $j$ can force any $o_1 \in G(h)\setminus D_j(h)$ at
$h$ (and therefore, by Lemma~\ref{lemma: force-g-then-dictator}, is a dictator).
Let $R_j:o_1$. Since $o_1\in G(h)\setminus D_j(h) \subseteq O_j(h)\setminus D_j(h)$, we have by Lemma~\ref{lemma: one-undetermined} that $R_j\in\mathcal{R}_j(h)$.
Fix $R_{-j}\in\mathcal{R}_{-j}(h)$; it is enough to show that $M(\mathbf{T}(R))(j)=o_1$.

Since $o_1 \in G(h)\setminus D_j(h)$, we have that $o_1 \in D_i(h)$. By Lemma~\ref{lemma: one-undetermined}, $R_i$ ranks $o_2$ highest
among $O_i(h)=G(h)$ for some $o_2\in G(h)\setminus D_i(h)$. Therefore, $o_2 \in D_j(h)$. Let $R'_i: o_2, o_1$ and $R'_j:o_1, o_2$. Note that since $o_2\in O_i(h)\setminus D_i(h)$, and since $o_1\in O_j(h)\setminus D_j(h)$, we have that $R'_i\in \mathcal{R}_i(h)$ and $R'_j\in \mathcal{R}_j(h)$.
By strategyproofness, $i$ and $j$ must respectively $R'_i$- and $R'_j$-prefer $M(\mathbf{T}(R'_{\{i,j\}},R_{-\{i,j\}}))(i)$ and $M(\mathbf{T}(R'_{\{i,j\}},R_{-\{i,j\}}))(j)$ to $o_1$ and $o_2$. By Pareto optimality, therefore $M(\mathbf{T}(R'_{\{i,j\}},R_{-\{i,j\}}))(i)=o_2$ and $M(\mathbf{T}(R'_{\{i,j\}},R_{-\{i,j\}}))(j)=o_1$.
By strategyproofness, $M(\mathbf{T}(R'_j,R_{-j}))(i)=o_2$ as well. Since $j$ must $R'_j$-prefer $M(\mathbf{T}(R'_j,R_{-j}))(j)$ to $o_2$, we have that 
$M(\mathbf{T}(R'_j,R_{-j}))(j)=o_1$. By strategyproofness, therefore $M(\mathbf{T}(R))(j)=o_1$, as required. So, $j$ can force $o_1$ at $h$ --- a contradiction.

\medskip

Assume henceforth, therefore, that $D_i(h) \cup D_j(h) \subsetneq G(h)$.
By Lemma~\ref{lemma: not-in-o-lurked}, we have that $O_k(h) \supseteq G(h)$.
Assume for now that $o\in G(h)$.

\medskip

Case 2: There exist $o_1 \in D_i(h) \setminus D_j(h)$, $o_2 \in D_j(h) \setminus D_i(h)$, and $o\in Y^*_h\subseteq G(h)\setminus (D_i(h) \cup
D_j(h))$. Since $M$ is obviously strategyproof, by the pruning argument of \citet{Li2015}, it is obviously strategyproof on any restricted domain.
We restrict the domain of preferences as follows: Each agent $i,j,k$ ranks $\{o_1,o_2,o\}$ above all other houses. The three agents $i,j,k$ have a common (known) ranking of all other houses. Each other agent
$\ell$ has a fixed preference $R_\ell\in\mathcal{R}_\ell(h)$ that ranks $o_1,o_2,o$ at the bottom (note that  each $\mathcal{R}_\ell(h)$ contains such a preference since no such agent $\ell$ lurks  $o_1$, $o_2$, or $o$).

Obtain a new mechanism by pruning all choices that are never taken. Since the  preferences of all agents other than $i$, $j$, and $k$ are known in the restricted domain, none of these agents has any choice to make. After condensing all nodes that proved no choice, we obtain a new (not necessarily gradual revelation) obviously strategyproof and Pareto optimal mechanism $\hat{M}$ in which only $i$, $j$, and $k$ move.  The history $\hat{h}$ in the new tree that corresponds to $h$ in the original tree is such that $D_i(\hat{h})=\{o_1\}$, $D_j(\hat{h})=\{o_2\}$, and  $D_k(\hat{h})=\{o\}$, and $O_i(\hat{h})=O_j(\hat{h})=O_k(\hat{h})=\{o_1,o_2,o\}$.

Fix $R\in \mathcal{R}(\hat{h})$ with $R_i: o_2, o_1$,  $R_j: o_1, o_2$. Since $D_i(\hat{h})=\{o_1\}$, $D_j(\hat{h})=\{o_2\}$, and since $\hat{M}$ is strategyproof and Pareto optimal, $\hat{M}(\mathbf{T}(R))(i)=o_1$ and $\hat{M}(\mathbf{T}(R))(j)=o_2$. This implies in particular that $k$ cannot force $o_1$ at $\hat{h}$. Fix $R'\in\mathcal{R}(\hat{h})$ with $R'_i: o, o_2, o_1$ and $R'_j: o, o_2, o_1 $ and $R'_k: o_1$. We have $M(\mathbf{T}(R'))(k)=o_1$ since $\hat{M}$ is Pareto optimal. Since $k$ cannot force $o_1\in O_k(\hat{h})$, but may prefer $o_1$ most, we have that $k$ is not a dictator at $\hat{h}$ and has by
Lemma~\ref{lemma: one-undetermined} a unique  action $\tilde{a}\in A(\hat{h})$ that does not determine his match (i.e., the action $\tilde{a}\in A(h)$ was not removed during the pruning process).

Assume that $i=P(\hat{h},\tilde{a})$ (the analysis for the cases $P(\hat{h},\tilde{a}) \in \{j,k\}$ is analogous). By Lemma \ref{lemma: one-undetermined}, $\mathcal{R}_i(\hat{h},\tilde{a})=\{R_i\mid \max_{R_i}\{o_1,o_2,o\}\neq o_1\}$. Again by Lemma \ref{lemma: one-undetermined}, since $D^<_i(\hat{h},\tilde{a})=D_i(\hat{h},\tilde{a})=o_1$ we have that agent $i$ must have an action $a$ at $(\hat{h},\tilde{a})$ that forces some $o'\in\{o_2, o\}$. Let $R_i\in\mathcal{R}_i(\hat{h},\tilde{a},a)$. For $R_j: o, o_2, o_1\in\mathcal{R}_j(\hat{h},\tilde{a},a)$ and $R_k: o_2, o, o_1\in\mathcal{R}_k(\hat{h},\tilde{a},a)$,
we nonetheless have by strategyproofness that $M(\mathbf{T}(R))(\{k,j\})=\{o_2,o\}$ , so $i$ playing $a$ does not force $o'$ --- a contradiction.

\medskip

Case 3: There exists $o^*\in D_i(h)\cap D_j(h)$, and $o \in G(h)\setminus (D_i(h)\cup D_j(h))$. Consider a profile $R$ with $R_i: o, o^*$
and $R_j: o, o^*$. We note that any such profile reaches $h$. By strategyproofness, $M(\mathbf{T}(R))(\{i,j\})=\{o,o^*\}$, so $k$ cannot force $o$ at $h$
 --- a contradiction.

\medskip

Assume henceforth, therefore, that $o \in D_i(h) \cup D_j(h)$.

\medskip

Case 4: $o \in D_i(h) \cap D_j(h)$. Recall that there exists $o' \in G(h) \setminus (D_i(h) \cup D_j(h))$. Consider a profile $R$
with $R_i:o',o$ and $R_j:o',o$. We note that any such profile reaches $h$. By
strategyproofness, $M(\mathbf{T}(R))(\{i,j\})=\{o',o\}$, so $k$ cannot force $o$ at $h$ --- a contradiction.

\medskip

Case 5: $o \in D_i(h) \setminus D_j(h)$ and there exists $o_2 \in D_j(h) \setminus D_i(h)$. (The case $o \in D_j(h) \setminus D_i(h)$ and there
exists $o_1 \in D_i(h) \setminus D_j(h)$ is analyzed analogously.)

Consider a profile $R$ with $R_i:o_2, o$  and $R_j: o, o_2$.  We note any such profile reaches $h$.
 By strategyproofness, $M(\mathbf{T}(R))(\{i,j\})=\{o_2,o\}$, so $k$ cannot force $o$ at $h$ --- a contradiction.

\medskip

Case 6: $o \in D_i(h) \setminus D_j(h)$ and, $D_i(h)\supseteq D_j(h) \cup Y^*_h$, and there exists $o_2 \in D_j(h) \setminus Y^*_h$. (The case
$o
\in D_j(h) \setminus D_i(h)$, $D_j(h) \supseteq D_i(h)\cup Y^*_h$ and there exists $o_1 \in D_i(h) \setminus Y^*_h$ is analyzed analogously.) We
will show that $i$ can force any $o' \in G(h) \setminus D_i(h)$ via some behavior from $h$, therefore, by Lemma~\ref{lemma: force-g-then-dictator}, obtaining a contradiction to the
assumption
that $i$ is not a dictator at $h$.
Assume therefore, for contradiction, that there exists a behavior profile $R_{-i}\in\mathcal{R}_{-i}(h)$ s.t.\ for no $R_i\in \mathcal{R}_i(h)$ does
$i$ get $o'$. Against this $R_{-i}$, let $i$ play the following behavior: always pass (i.e.,~$\tilde{a}$) unless $i$ can force $o'$ via an action $a$ with $Y(a)=\{o'\}$, in which case play $a$; if $i$ becomes a dictator without the option to force $o'$, then force an arbitrary house. By gradual revelation, this behavior corresponds to some preference $R_i$. By assumption, at some history along $\Path(\mathbf{T}(R))$, it would no longer hold that $o' \in O_i$.
Let $h'$ be the maximal history along this path with $o'\in O_i(h')$.

We first claim that at no history $h\subseteq h''\subseteq h'$ is it possible that any agent other than $i$ has a possible action that forces a house $o''\notin D_i(h'')$.
Indeed, otherwise let $h''$ be such a minimal history and w.l.o.g.\ $P(h'')\ne k$. By Lemma~\ref{lemma: force-lurked-then-force-g},
we can assume w.l.o.g.\ that $o''\in G(h'')$. We obtain a contradiction as in Case 5: since the behavior so far of each of $i$ and $k$ is consistent with the preference that ranks
$o''$ first (since $o'' \in G(h'')\setminus D_i(h'')$ and by minimality of $h''$ also $o'' \in G(h'')\setminus D_k(h'')$ since $D_k(h'')\subseteq D_i(h'')$) and ranks $o \in D_i(h)\cap D_k(h) \subseteq D_i(h'') \cap D_k(h'')$ second, then by strategyproofness they will together get $o''$ and $o$ if these really are their preferences, and so $j$ cannot force $o''$ at $h''$.

Furthermore, at no history $h\subseteq h''\subseteq h'$ does any agent actively force (according to $\mathbf{T}(R)$) a house $o'' \in D_i(h'')$, as otherwise by strategyproofness $i$ would get $o'$ since his preference up until $h''$ is consistent with the preference $R'_i:o',o''$.

Combining the claims from the two preceding paragraphs, we get that every action played by any agent in any history $h\subseteq h''\subseteq h'$ is pass (i.e., $\tilde{a}$).
In particular, $(h',\tilde{a})\in\Path(\mathbf{T}(R))$.
Since by definition of $h'$, we have $o'\notin O_i(h',\tilde{a})$, we have that $i\ne t$, for $t$ as defined in Lemma~\ref{lemma: force-lurked-terminator} w.r.t.\ the history $h'$.

We now claim that no agent, other than possibly $i$, lurks $o'$ at $h'$. Assume that some agent $l\ne i$ lurks $o'$ at $h'$.
Since up until $h'$, no house is offered to any agent before it is offered to $i$, and since $i$ is not a dictator at $h'$ (since $o'$ is possible for $i$ at $h'$ but $i$ cannot force it), we have that $i$ is also a lurker for $o'$ at $h'$ (since as $i\ne t$, by Lemma~\ref{lemma: force-lurked-terminator} if $i$ is a nonlurker at $h'$ then $O_i(h')=G(h')\subseteq O_l(h')$) --- a contradiction to the distinctness of lurked houses.

Recall that $o'\in O_i(h')$ but $o'\notin O_i(h',\tilde{a})$.
Since we have shown that no agent can force $o'$ at $h'$ and no agent other than possibly $i$ lurks $o'$ at $h'$, we have that $o'\in O(h',\tilde{a})$. By Lemma~\ref{lemma: not-in-o-lurked} and since $o'\notin O_i(h',\tilde{a})\cup D_i(h',\tilde{a})$, we have that $o'\notin G(h',\tilde{a})$, and so some agent $l\ne i$ is a lurker for $o'$ at $(h',\tilde{a})$. Once again, this means that $i$ is already (since $i\ne P(h')$) a lurker for $o'$ at $h'$ --- a contradiction. (By Lemma~\ref{lemma: force-lurked-terminator}, it is possible that the agent $t\ne l$ gets $o'$, so we have a contradiction to strategyproofness since $i$ and $l$ are then promised their respective ``second best'' options, which may coincide when $t$ gets $o'$.)

\medskip

Finally, we consider the case in which $o\notin G(h)$. Recall that $D^<_k(h)=\emptyset$. By definition, there is a lurker $\ell\ne k$ for $o$ at
$h$, and so by Lemma~\ref{lemma: force-lurked-then-force-g}, $k$ can force any house $o_1\in G(h)=O_i(h)$ from $h$, and so by the analysis of
either Case 4 or Case 5 we are done.
\end{proof}

Lemma ~\ref{lemma: no-three} shows that at most two active agents at any $h$ are lurkers. The next Lemma describes the relationship between the sets of houses that these two agents can be endowed with at any given history $h$.

\begin{lemma}\label{lemma: other-lurker-cant-force-from-terminator}
Let $h$ be a history of $M$ with lurked houses  and let $t$ be as in Lemma~\ref{lemma: force-lurked-terminator}. If $k=P(h)\ne t$ is a nonlurker
and $t$ is not a dictator at $h$, then $Y^*_h \cap D_t(h)=\emptyset$.
\end{lemma}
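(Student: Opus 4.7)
Suppose for contradiction that $o \in Y^*_h \cap D_t(h)$. The plan is to exhibit a preference $R_t$ for which $t$ always receives a house $o^*$ that, by hypothesis, $t$ cannot force at $h$; this will give the required contradiction. By Lemma~\ref{lemma: force-lurked-terminator}, $t$ is a nonlurker with $O_t(h) = O(h)$, so $t$ is active at $h$. Since $t$ is not a dictator at $h$, the contrapositive of Lemma~\ref{lemma: force-g-then-dictator} supplies $o^* \in G(h) \setminus D_t(h)$ that $t$ cannot force at $h$. Since $o \in D_t(h)$, pick $h^* \subsetneq h$ with $P(h^*) = t$ and $o \in Y^*_{h^*}$.

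Let $R_t$ rank $o^*$ first and $o$ second (with the rest arbitrary). For every strict $t$-subhistory $h^{**}$ of $h$ we have $O_t(h^{**}) \supseteq O_t(h) \ni o^*$, and $o^* \notin Y^*_{h^{**}}$ because $o^* \notin D_t(h)$. Hence $\max_{R_t} O_t(h^{**}) = o^* \notin Y^*_{h^{**}}$, so by Lemma~\ref{lemma: one-undetermined} the truthtelling action at $h^{**}$ is $\tilde{a}$, which agrees with the path to $h$ (since $t$ remains unmatched at $h$). Thus $R_t \in \mathcal{R}_t(h)$. Applying obvious incentive compatibility at $h^*$, comparing the truthtelling action $\tilde{a}$ against the deviation of forcing $o$, the worst outcome over $R_{-t}$ reaching $(h^*, \tilde{a})$ must be $R_t$-weakly preferred to $o$; hence for every $R_{-t} \in \mathcal{R}_{-t}(h)$,
\[
M(\mathbf{T}(R_t, R_{-t}))(t) \in \{o^*, o\}.
\]

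The remainder of the plan is to upgrade this to $M(\mathbf{T}(R_t, R_{-t}))(t) = o^*$ for every $R_{-t} \in \mathcal{R}_{-t}(h)$, which says that $R_t$ lets $t$ force $o^*$ at $h$, contradicting the choice of $o^*$. Begin with the case where $R_{-t} = (R^{\circ}_k, R_{-kt})$ with $R^{\circ}_k \in \mathcal{R}_k(h,a)$, where $a \in A^*_h$ is the action with $O_k(h,a) = \{o\}$; such $R^{\circ}_k$ exists by the gradual-revelation property applied to $o \in Y^*_h$. Then $k$ forces $o$ at $h$, so $k \to o$, and by the display above $t \to o^*$.

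For arbitrary $R_k \in \mathcal{R}_k(h)$, compare the profiles $(R_t, R_k, R_{-kt})$ and $(R_t, R^{\circ}_k, R_{-kt})$. Strategyproofness for $k$ gives $M(\mathbf{T}(R_t, R_k, R_{-kt}))(k)\, R_k\, o$, so $k$ receives a house $R_k$-weakly above $o$; in particular, if this outcome equals $o$ then two agents would receive $o$ (since the display forces $t \to o$ or $o^*$), so $k \to y$ with $y \ne o$. Combining this with Lemma~\ref{lemma: how-to-unravel}, which describes the fate of $k$ and the lurkers when $k$'s action at $h$ forces some house or is $\tilde{a}$, and with repeated OSP applications at the subsequent histories reached by $(h, R_k(h))$, one tracks the location of $o$ and $o^*$ in the continuation and concludes $t \to o^*$. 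The main obstacle is precisely this last step: ruling out $t \to o$ for general $R_k$. Handling it will require a careful case analysis on whether $R_k$'s truthtelling at $h$ is $\tilde{a}$, a forcing action for some $o' \in G(h) \setminus \{o\}$, or a forcing action for a lurked house, using Pareto optimality together with the OSP bound on $t$ to propagate the constraint all the way down the continuation.
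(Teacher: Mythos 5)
Your proposal sets up the right kind of contradiction (exhibiting preferences under which the house in $Y^*_h \cap D_t(h)$ is always consumed by someone other than $k$), but it stalls exactly at the decisive step, and you say so yourself: you never rule out $t$ ending up with $o$ rather than $o^*$. This is not a routine bookkeeping gap. The endpoint you are aiming for --- that $R_t:o^*,o$ lets $t$ \emph{force} $o^*$ at $h$ --- appears unreachable: $o^*\in G(h)$ is a house that other agents (for instance $k$ itself, or the second active nonlurker) may legitimately rank first and receive, in which case $t$, being guaranteed only something weakly $R_t$-preferred to $o$, ends up with $o$. Nothing in your setup prevents this, so the ``upgrade'' from $M(\mathbf{T}(R_t,R_{-t}))(t)\in\{o^*,o\}$ to $M(\mathbf{T}(R_t,R_{-t}))(t)=o^*$ cannot go through by tracking continuations; the obstruction is in the choice of $o^*$, not in the case analysis.

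The paper's proof avoids this by choosing the ``first-ranked'' house to be one that is automatically contested between exactly two agents: the house $o_1$ lurked by the oldest lurker, agent $1$. Writing $o'$ for the house in $D_t(h)$, one checks that $o_1\ne o'$ (since $t$ not a dictator forces $o_1\notin D_t(h)$ by Lemma~\ref{lemma: force-lurked-then-dictator}), that $o'\in D_1(h)$ (by Lemma~\ref{lemma: lurkers-hierarchy}, agent $1$'s endowment is everything but $o_1$), and that $o_1\in O_t(h)$ (Lemmas~\ref{lemma: force-lurked-terminator} and~\ref{lemma: can-force-d}). Taking $R_1:o_1,o'$ and $R_t:o_1,o'$ --- both consistent with the respective histories --- strategyproofness guarantees each of $1$ and $t$ a house weakly preferred to $o'$, i.e., a house in $\{o_1,o'\}$; since they cannot both receive $o_1$, the pair $\{1,t\}$ always absorbs $o'$, so $k$ cannot force $o'$ and $o'\notin Y^*_h$. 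The essential idea you are missing is to give the \emph{same} top-two list to two agents, each of whom is already entitled to the second-listed house, so that the second house is consumed no matter how the tie at the top resolves; using the lurked house $o_1$ as the common top choice is what makes both preference profiles admissible. I recommend reworking your argument along these lines rather than trying to complete the case analysis you outline.
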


\begin{proof}[Proof of Lemma~\ref{lemma: other-lurker-cant-force-from-terminator}]
Let $o'\in D_t(h)$. We will show that $o'\notin Y^*_h$.
Recall that the oldest lurker at $h$, agent 1, lurks house~$o_1$. 
Since $t$ is not a dictator at $h$, we have by Lemma~\ref{lemma: force-lurked-then-dictator} that $o_1\notin D_t(h)$, and so
$o_1\ne o'$. Since $1$ is the oldest lurker at $h$, we have that
$o' \in O(h)\setminus \{o_1\}=D_1(h)$. By Lemmas~\ref{lemma: force-lurked-terminator} and~\ref{lemma: can-force-d}(2), $o_1\in O_t(h)$. Let $R_1:
o_1, o'$ and $R_t: o_1, o'$. Since
$o'\in D_t(h)\cap D_1(h)$ while $o_1\in (O_1(h)\setminus D_1(h))\cap(O_t(h)\setminus D_t(h))$, we have by strategyproofness that when playing
$R_1$
and $R_t$ (which reach $h$) from $h$, $1$ and $t$ get $o_1$ and $o'$, and so $k$ cannot force $o'$ at $h$ and so $o' \notin
Y^*_h$, as
required.
\end{proof}

The characterization is a direct consequence of the above lemmas: since if an agent $i$ is a dictator at some history $h$, then we can have him
``divulge'' his entire preferences at that point, and thus he becomes inactive, and therefore there always exists a mechanism implementing the
same social choice function where no ``already-active'' dictators are still active while others make nondictatorial moves (or become active).

As long as two active agents $i,j$ are nonlurkers, then no one else can join the game and $i$ and $j$ slowly accumulate houses into $D_i$ and
$D_j$, respectively. If one of them, say $i$, becomes a lurker, then her lurked house becomes impossible for all agents except one, which we
denote by $t$ (for terminator), and from that point no agent can accumulate houses that are in $D_t$. Since $i$ became a lurker, another player
$k$ can join the game, and if one of $k$ and $j$ becomes a lurker, then another can join the game, etc. At some point, the terminator $t$ will be
a dictator, and after her choice, all lurkers by strategyproofness will become matched via a dictatorship from the ``oldest'' lurker $i$ (who has
the largest list of possible houses) to the ``youngest'' lurker (who has the smallest list of possible houses), and no lurkers remain. If no one
forced a house from the option list $D_{\ell}$ of the nonlurker that did not claim a house, then he continues (along with $D_{\ell}$) to the next
round.

\subsection{Matching with Outside Options}\label{homeless-bosses}

When agents may prefer being unmatched over being matched to certain houses (and possibly also more agents exist than houses), the OSP-implementable and Pareto optimal social choice functions are similar in spirit, but considerably more detailed to describe, due to the following phenomenon: Recall that at any point when an agent becomes a lurker, the sets of possible houses $O_i$ of some other agents $i$ are reduced. In this case, such agents $i$ may be asked by the mechanism whether they prefer being unmatched over being matched with any house in (the reduced) $O_i$ (note that for this reason, the set $O_i$ has to be tracked and updated for all agents and not only for active agent $i\in T$). If such an agent indeed prefers being unmatched, then she can divulge her full preferences, and based on these preferences, the mechanism may now reduce $O_i$ for some other agent, etc.\ (thus, in a sense, dynamically choosing the ``terminator'' agent $t$ defined in the proof above), or even award some lurkers their lurked houses. The analysis for this case is similar, yet more cluttered, and we omit it.

\end{document}